\keywords{nominal sets, nominal structural operational semantics, process algebra, nominal
  transition systems, scope opening, rule formats}
\newcommand{\ie}{i.e.}
\newcommand{\etc}{etc.}
\newcommand{\etal}{\emph{et al.}}
\newcommand{\Atom}{\mathbb{A}}
\newcommand{\One}{\textup{\bf 1}}
\newcommand{\ASet}{A}
\newcommand{\asort}{\alpha}
\newcommand{\Var}{\mathcal{V}}
\newcommand{\Sta}{S}
\newcommand{\Res}{R}
\newcommand{\Proc}{\mathsf{pr}}
\newcommand{\Act}{\mathsf{ac}}
\newcommand{\Chan}{\mathsf{ch}}
\newcommand{\nTerm}[1]{\mathbb{N}(#1)} 
\newcommand{\rTerm}[1]{\mathbb{T}(#1,\Var)} 
\newcommand{\gTerm}[1]{\mathbb{T}(#1)} 
\newcommand{\Sort}{\mathsf{S}}
\newcommand{\lowrel}[1]{\mathrel{\raisebox{-1mm}{$\buildrel#1\over\longrightarrow$}}}
\newcommand{\rel}[1]{\buildrel#1\over\longrightarrow}
\newcommand{\nullPA}{\mathit{null}}
\newcommand{\tauPA}[1][]{
	\ifthenelse{\equal{#1}{}}{{\mathit{tau}}}{{\mathit{tau}(#1)}}} 
\newcommand{\inPA}[2][]{
	\ifthenelse{\equal{#1}{}}{{\mathit{in}}}{{\mathit{in}(#1,#2)}}} 
\newcommand{\outPA}[3][]{
	\ifthenelse{\equal{#1}{}}{{\mathit{out}}}{{\mathit{out}(#1,#2,#3)}}} 
\newcommand{\newPA}[1][]{
	\ifthenelse{\equal{#1}{}}{{\mathit{new}}}{{\mathit{new}(#1)}}} 
\newcommand{\parPA}[2][]{
	\ifthenelse{\equal{#1}{}}{{\mathit{par}}}{{\mathit{par}(#1,#2)}}} 
\newcommand{\sumPA}[2][]{
	\ifthenelse{\equal{#1}{}}{{\mathit{sum}}}{{\mathit{sum}(#1,#2)}}} 
\newcommand{\repPA}[1][]{
	\ifthenelse{\equal{#1}{}}{{\mathit{rep}}}{{\mathit{rep}(#1)}}} 
\newcommand{\tauAA}{{\mathit{tauA}}}
\newcommand{\inAA}[2][]{
	\ifthenelse{\equal{#1}{}}{{\mathit{inA}}}{{\mathit{inA}(#1,#2)}}} 
\newcommand{\binAA}[2][]{
	\ifthenelse{\equal{#1}{}}{{\mathit{binA}}}{{\mathit{binA}(#1,#2)}}} 
\newcommand{\outAA}[2][]{
	\ifthenelse{\equal{#1}{}}{{\mathit{outA}}}{{\mathit{outA}(#1,#2)}}} 
\newcommand{\boutAA}[2][]{
	\ifthenelse{\equal{#1}{}}{{\mathit{boutA}}}{{\mathit{boutA}(#1,#2)}}} 
\newcommand{\resAF}[2]{{(#1,#2)}}
\newcommand{\resA}[3]{{[#1](#2,#3)}} 
\newcommand{\tr}[2]{(#1\,#2)}
\newcommand{\peract}[2]{#1\cdot #2}
\newcommand{\susp}[2]{#1\boldsymbol{\{}\!\!\!\boldsymbol{\{}#2\boldsymbol{\}}\!\!\!\boldsymbol{\}}}
\newcommand{\rep}[2]{#1/#2}
\newcommand{\ren}[2]{#1\{#2\}}
\newcommand{\inj}{\mathrm{inj}}
\newcommand{\fs}{\mathrm{fs}}
\newcommand{\dom}{\mathrm{dom}}
\newcommand{\R}{\mathcal{R}}
\newcommand{\Ru}{\textsc{Ru}}
\newcommand{\NT}[1]{\mathit{NT}[\![#1]\!]}
\newcommand{\fra}{\!\mathrel{\not\!{\not\hspace{-1pt}\approx}}}
\newcommand{\pset}{\mathcal{P}}
\newcommand{\bn}{\mathrm{bn}}
\newcommand{\supp}{\mathrm{supp}}
\newcommand{\perm}[1]{\mathrm{Perm}\;#1}
\newcommand{\permso}[1]{\mathrm{Perm}_s\,#1}
\newcommand{\nf}[1]{\langle#1\rangle \mathit{nf}} 
\newcommand{\SigmaNTS}{\Sigma_{\textup{NTS}}}
\newcommand{\SigmaNTSAbs}{\Sigma_{\textup{NTS}}^{[\Chan]}}
\newcommand{\Tr}{\mathcal{T}}
\newcommand{\TrAbs}{{\mathcal{T}^{[\Chan]}}}
\newcommand{\Trans}[1][]{
  \ifthenelse{\equal{#1}{}}{\mathfrak{T}}{\mathfrak{T}(#1)}} 
\newcommand{\TransAbs}[1][]{
  \ifthenelse{\equal{#1}{}}{\mathfrak{T}^{[\Chan]}}{\mathfrak{T}^{[\Chan]}(#1)}} 
\newcommand{\trel}{\rel{}}
\newcommand{\trelAbs}{\rel{}_{[\Chan]}}
\newcommand{\RE}{\mathcal{R}_{\textsc{E}}}
\newcommand{\RL}{\mathcal{R}_{\textsc{L}}}
\newcommand{\REAbs}{\mathcal{R}_{\textsc{E}}^{[\Chan]}}
\newcommand{\RLAbs}{\mathcal{R}_{\textsc{L}}^{[\Chan]}}
\newcommand{\TrE}{\mathcal{T}_{\textsc{E}}}
\newcommand{\TrEAbs}{{\mathcal{T}_{\textsc{E}}^{[\Chan]}}}
\newcommand{\TrL}{\mathcal{T}_{\textsc{L}}}
\newcommand{\TrLAbs}{{\mathcal{T}_{\textsc{L}}}^{[\Chan]}}
\newcommand{\bnE}{\bn_{\textsc{E}}}
\newcommand{\bnL}{\bn_{\textsc{L}}}
\newcommand{\curry}{\mathrm{curry}}
\newcommand{\app}{\mathrm{app}}
\newcommand{\fa}{\mathrm{fa}}
\newcommand{\myby}[1]{\textup{\quad\textit{by #1}}}
\newcommand{\blockqed}{\hfill$\blacksquare$}
\def\eg{\emph{e.g.}}
\begin{document}

\title[Rule Formats for Nominal Process Calculi]{Rule Formats for Nominal Process Calculi\rsuper*}
\titlecomment{\lsuper{*}Extended version of a paper with the same title presented at CONCUR 2017.}

\author[L.~Aceto]{Luca Aceto\rsuper{{a,b}}}
\author[I.~Fábregas]{Ignacio Fábregas\rsuper{c}}
\author[A.~García-Pérez]{Álvaro García-Pérez\rsuper{c}}
\author[A.~Ingólfsdóttir]{\texorpdfstring{\\}{}Anna Ingólfsdóttir\rsuper{b}}
\author[Y.~Ortega-Mallén]{Yolanda Ortega-Mallén\rsuper{d}\texorpdfstring{\vspace{-7mm}}{}}

\address{\lsuper{a}Gran Sasso Science Institute, L'Aquila, Italy}
\email{luca.aceto@gssi.it}
\address{\lsuper{b}ICE-TCS, School of Computer Science, Reykjavik University, Iceland}
\email{\{luca,annai\}@ru.is}
\address{\lsuper{c}IMDEA Software Institute, Madrid, Spain}
\email{\{ignacio.fabregas,alvaro.garcia.perez\}@imdea.org}
\address{\lsuper{d}Departamento de Sistemas Informáticos y Computación, Universidad Complutense de Madrid, Spain}
\email{yolanda@ucm.es}

\thanks{Research partially supported by the project {Nominal SOS} (nr.~141558-051) of the 
  Icelandic Research Fund, the project 001-ABEL-CM-2013 within the {NILS} Science and 
  Sustainability Programme, the Spanish Projects TRACES (TIN2015-67522-C3-3-R) and 
  Bosco (PGC2018-102210-B-I00), and by Comunidad de Madrid as part of the program 
  S2018/TCS-4339 (BLOQUES-CM) co-funded by EIE Funds of the European Union, and the projects 
  {RACCOON} (H2020-EU~714729) and MATHADOR (COGS 724.464) of the European Research 
  Council, and the Spanish addition to MATHADOR (TIN2016-81699-ERC)} 





\begin{abstract}
  \noindent The nominal transition systems (NTSs) of Parrow et al.\ describe the
  operational semantics of nominal process calculi. We study NTSs in terms of the nominal
  residual transition systems (NRTSs) that we introduce. We provide rule formats for the
  specifications of NRTSs that ensure that the associated NRTS is an NTS and apply them to
  the operational specifications of the early and late pi-calculus. We also explore
  alternative specifications of the NTSs in which we allow residuals of abstraction sort,
  and introduce translations between the systems with and without residuals of abstraction
  sort. Our study stems from the Nominal SOS of Cimini et al.\ and from earlier works in
  nominal sets and nominal logic by Gabbay, Pitts and their collaborators.
\end{abstract}

\maketitle

\section{Introduction}%
\label{sec:intro}

The goal of this paper is to develop the foundations of a framework for studying the
meta-theory of structural operational semantics (SOS)~\cite{Plo04} for process calculi
with names and name-binding operations, such as the
$\pi$-calculi~\cite{MPW92,SW01}. To this end, we build on the large body of work on
rule formats for SOS, as surveyed in~\cite{AFV01,MRG07}, and on the nominal techniques of
Gabbay, Pitts and their co-workers~\cite{UPG04,CP07,GM09,Pit13}.

Rule formats provide syntactic templates guaranteeing that the models of the calculi,
whose semantics they specify, enjoy some desirable properties. A first design decision
that has to be taken in developing a theory of rule formats for a class of languages is
therefore the choice of the semantic objects specified by the rules. The target semantic
model we adopt in our study is that of \emph{nominal transition systems} (NTSs), which
have been introduced by Parrow \etal\ in~\cite{PBEGW15,PWBE17} as a uniform model to
describe the operational semantics of a variety of calculi with names and name-binding
operations.  Based on this choice, a basic sanity criterion for a collection of rules
describing the operational semantics of a nominal calculus is that they specify an NTS,
and we present a rule format guaranteeing this property (Thm.~\ref{the:alpha-conversion}).

As a first stepping stone in our study, we introduce \emph{nominal residual transition
  systems} (NRTSs), and study NTSs in terms of NRTSs (Section~\ref{sec:preliminaries}).
More specifically, the only requirement of an NRTS is that its transition relation is
equivariant, which means that it treats names uniformly. This is a desirable property of
models of nominal calculi, such as NTSs. Moreover, NTSs are NRTSs that, in addition to
having an equivariant transition relation, satisfy a property Parrow \etal\ call
\emph{alpha-conversion of residuals} (see Def.~\ref{def:nts} for the details). The latter
property formalises a key aspect of calculi in which names can be scoped to represent
local resources.  To wit, one crucial feature of the $\pi$-calculus is scope
opening~\cite{MPW92}. Consider a transition $p\lowrel{\overline{a}(\nu b)}p'$ in which a
process $p$ exports a private/local channel name $b$ along channel $a$. Since the name $b$
is local, it `can be subject to alpha-conversion'~\cite{PBEGW15} and the transitions
$p\lowrel{\overline{a}(\nu c)}\ren{p}{\rep{c}{b}}$ should also be present for each `fresh
name' $c$.

In contrast to related work~\cite{CMRG12,FG07}, our approach uses \emph{nominal terms}~\cite{Pit13} to connect the specification system with the semantic model. This has the
advantage of capturing the requirement that transitions be `up to alpha-equivalence'
(typical in nominal calculi) without instrumenting alpha-conversion explicitly in the
specification system.

We specify an NRTS by means of a nominal residual transition system specification (NRTSS),
which describes the syntax of a nominal calculus in terms of a nominal signature
(Section~\ref{sec:terms}) and its semantics by means of a set of inference rules
(Section~\ref{sec-specification-nrts}). We develop the basic theory of the NRTS/NRTSS
framework, building on the nominal algebraic datatypes of Pitts~\cite{Pit13} and the
nominal rewriting framework of Fernández and Gabbay~\cite{FG07}. Based on this framework,
we provide rule formats~\cite{AFV01,MRG07} for NRTSSs (Section~\ref{sec-rule-format-nrts})
that ensure that the induced transition relation is equivariant
(Thm.~\ref{thm:rule-format-equivariance}) and enjoys alpha-conversion of residuals
(Thm.~\ref{the:alpha-conversion}), and is therefore an NTS\@. Section~\ref{sec:example-nts}
presents an example of application of these rule formats to the setting of the
$\pi$-calculus. Section~\ref{sec:nts-abstraction-sort} explores alternative specifications
of the NTSs in which we allow a residual to be an atom abstraction (hereafter referred to
as \emph{residual with abstraction sort}). We introduce translations between the systems
with and without residuals of abstraction sort (Defs.~\ref{def:nts-to-nrts} and~\ref{def:nrts-to-nts}). We develop a rule format that guarantees that these translations
are the inverse of each other (Thms.~\ref{thm-composition-identity} and~\ref{thm-composition-identity-abstraction}). Section~\ref{sec:application-BA-format}
presents an example of application of this rule format to the early $\pi$-calculus and to
a slightly modified version of the late $\pi$-calculus. We also show that both the
specification with and without residuals of abstraction sort induce the same model of
computation. Finally, Section~\ref{sec:conclusions} discusses related work, as well as
avenues for future work, and concludes.

The appendix accompanying the paper collects some proofs that are omitted in the main
text.

This paper is an extended version of a paper with the same title presented at CONCUR 2017~\cite{AFGIO17}. The novel content in this extended version is summarised below:
\begin{itemize}
\item In Section~\ref{sec:preliminaries} we recall the notion of finite renamings, which
  play a prominent role throughout this paper since they replace the permutations in the
  moderated terms of the CONCUR 2017 paper.
\item In Section~\ref{sec:early-pi-calculus} we introduce an NRTSS that faithfully
  captures the original semantics of the early $\pi$-calculus~\cite{San96}. The NRTSS of
  the CONCUR 2017 paper induced a semantics that failed to capture some transitions in the
  original early $\pi$-calculus.
\item In Section~\ref{sec:late-pi-calculus} we introduce an NRTSS whose induced semantics
  differs minimally from the original semantics of the late $\pi$-calculus~\cite{San96}
  (see Remark~\ref{rem:prevent-capture-in} for further discussion). We also apply the rule
  format for alpha-conversion of residuals to this version of the late
  $\pi$-calculus. This section is entirely novel.
\item Section~\ref{sec:nts-abstraction-sort}, where we study alternative formulations of
  the NTSs in which we allow residuals of abstraction sorts, is entirely novel.
\item In Section~\ref{sec:application-BA-format} we apply the rule formats from
  Section~\ref{sec:nts-abstraction-sort} to the early $\pi$-calculus and to our version of
  the late $\pi$-calculus. This section is entirely novel too.
\item We have included the detailed proofs of all lemmas and theorems in the paper, some
  of which were missing in the conference version.
\end{itemize}


\section{Preliminaries}%
\label{sec:preliminaries}
This section collects some earlier foundational work by Gabbay and Pitts on nominal sets
and finitary renamings~\cite{GH08,GP02,Pit13,Pit16} on which our work builds, and recalls
the nominal transition systems of Parrow \etal~\cite{PBEGW15}.

\subsection*{Nominal Sets}
We assume a countably infinite set $\Atom$ of \emph{atoms} and consider $\perm{\Atom}$ as
the group of \emph{finite permutations of atoms} (hereafter \emph{permutations}) ranged
over by $\pi$, where we write $\iota$ for the \emph{identity}, $\circ$ for
\emph{composition} and $\pi^{-1}$ for the \emph{inverse} of permutation $\pi$. We are
particularly interested in \emph{transpositions} of two atoms: $\tr{a}{b}$ stands for the
permutation that swaps $a$ with $b$ and leaves all other atoms fixed. Every permutation
$\pi$ is equal to the composition of a finite number of transpositions, \ie\
$\pi=\tr{a_1}{b_1}\circ\ldots\circ\tr{a_n}{b_n}$ with $n\geq 0$.

An \emph{action} of the group $\perm{\Atom}$ on a set $S$ is a binary operation mapping
each $\pi\in\perm{\Atom}$ and $s\in S$ to an element $\peract{\pi}{s}\in S$, and
satisfying the identity law $\peract{\iota}{s}=s$ and the composition law
$\peract{(\pi_1\circ\pi_2)}{s}=\peract{\pi_1}{(\peract{\pi_2}{s})}$. A
\emph{$\perm{\Atom}$-set} is a set equipped with an action of $\perm{\Atom}$.

We say that a set of atoms $A$ \emph{supports} an object $s$ iff $\peract{\pi}{s}=s$ for
every permutation $\pi$ that leaves each element $a\in A$ invariant. In particular, we are
interested in sets all of whose elements have finite support (Def.~2.2 of~\cite{Pit13}).

\begin{defi}[Nominal sets] A \emph{nominal set} is a $\perm{\Atom}$-set all of whose
  elements are finitely supported.
\end{defi}

For each element $s$ of a nominal set, we write $\supp(s)$ for the least set that supports
$s$, called the \emph{support} of $s$. (Intuitively, the action of permutations on a set
$S$ determines that a finitely supported $s\in S$ only depends on atoms in $\supp(s)$, and
no others.)  The set $\Atom$ of atoms is a nominal set when $\peract{\pi}{a}=\pi\,a$ since
$\supp(a)=\{a\}$ for each atom $a\in\Atom$. The set $\perm{\Atom}$ of finite permutations
is also a nominal set where the permutation action on permutations is given by
conjugation, \ie\ $\peract{\pi}{\pi'}=\pi\circ\pi'\circ\pi^{-1}$, and the support of a
permutation $\pi$ is $\supp(\pi)=\{a\mid \pi a\not=a\}$.

Given two $\perm{\Atom}$-sets $S$ and $T$ and a function $f:S\to T$, the action of
permutation $\pi$ on function $f$ is given by conjugation, \ie\
$(\peract{\pi}{f})(s)=\peract{\pi}{f(\peract{\pi^{-1}}{s})}$ for each $s\in S$. We say
that a function $f:S\to T$ is \emph{equivariant} iff
$\peract{\pi}{f(s)}=f(\peract{\pi}{s})$ for every $\pi\in\perm{\Atom}$ and every $s\in S$.
The intuition is that an equivariant function $f$ is atom-blind, in that $f$ does not
treat any atom preferentially. It turns out that a function $f$ is equivariant iff
$\supp(f)=\emptyset$ (Rem.~2.13 of~\cite{Pit13}). The function $\supp$ is equivariant
(Prop.~2.11 of~\cite{Pit13}).

Let $S$ be a $\perm{\Atom}$-set, we write $S_\fs$ for the nominal set that contains the
elements in $S$ that are finitely supported. Let $S_1$ and $S_2$ be nominal sets. The
product $S_1\times S_2$ is a nominal set (Prop.~2.14 of~\cite{Pit13}). The permutation
action for products is given componentwise (Eq~(1.12) of~\cite{Pit13}).

Conjugation yields that, for every $\perm{\Atom}$-set $S$, the action of $\pi$ on $s\in S$
is equivariant. Indeed,
\begin{displaymath}
  \peract{\pi}{(\peract{\pi_1}{s})} = \peract{(\pi\circ\pi_1)}{s} =
  \peract{(\pi\circ\pi_1\circ\pi^{-1}\circ\pi)}{s} =
  \peract{((\peract{\pi}{\pi_1})\circ\pi)}{s} =
  \peract{(\peract{\pi}{\pi_1})}{(\peract{\pi}{s})}.
\end{displaymath}

\noindent
It is also straightforward to show that composition of permutations is equivariant. In
fact,
\begin{displaymath}
  \peract{\pi}{(\pi_1\circ\pi_2)} = \pi\circ(\pi_1\circ\pi_2)\circ\pi^{-1} =
  (\pi\circ\pi_1\circ\pi^{-1})\circ(\pi\circ\pi_2\circ\pi^{-1}) =
  (\peract{\pi}{\pi_1})\circ(\peract{\pi}{\pi_2}).
\end{displaymath}

An element $s_1\in S_1$ \emph{is fresh in}
$s_2\in S_2$, written $s_1\#s_2$, iff $\supp(s_1)\cap\supp(s_2)=\emptyset$. The freshness
relation is equivariant (Eq.~(3.2) of~\cite{Pit13}).

We consider \emph{atom abstractions} (Sec.~4 of~\cite{Pit13}), which represent
alpha-equivalence classes of elements.

\begin{defi}[Atom abstraction]%
  \label{def:atom-abstraction}
  Given a nominal set $S$, the \emph{atom abstraction} of atom $a$ in element $s\in S$,
  written $\langle a\rangle s$, is the $\perm{\Atom}$-set
  $\langle a\rangle s=\{(b,\peract{\tr{b}{a}}{s})\mid b=a\lor b\# s\}$, whose permutation
  action is
  $ \peract{\pi}{\langle a\rangle s}=\{
  (\peract{\pi}{b},\peract{\pi}{(\peract{\tr{b}{a}}{s})})\mid
  \peract{\pi}{b}=\peract{\pi}{a}\lor \peract{\pi}{b}\# \peract{\pi}{s}\}$.

  We write $[\Atom]S$ for the set of \emph{atom abstractions} in elements of $S$, which is
  a nominal set (Def.~4.4 of~\cite{Pit13}), since
  $\supp(\langle a\rangle s)=\supp(s)\setminus\{a\}$ for each atom $a$ and element
  $s\in S$.
\end{defi}

\begin{rem}%
  \label{rem:equality-bodies}
  Notice that, by Lemma~4.3 in~\cite{Pit11}, $s=s'$ whenever
  $\langle a\rangle s = \langle a\rangle s'$.\blockqed
\end{rem}

Nominal sets are the objects of a category $\mathbf{Nom}$ whose morphisms are the
equivariant functions. The category $\mathbf{Nom}$ is closed under finite products and
both finite and infinite coproducts.\footnote{In $\mathbf{Nom}$, coproducts correspond to
  disjoint unions.} We write $s=\inj_i s'$ with $i\in I$ and $s'\in S_i$ for an element
$s$ in a coproduct $\sum_{i\in I}(S_i)$. (For a finite coproduct ${S_1 + \cdots + S_n}$ we
let $I=\{1,\ldots,n\}$.)  For other set-theoretical operations (\ie\ infinite products,
functions, partial functions, power sets) the following caveat applies. The category of
nominal sets is closed under the variant of each operation that restricts any universal
quantification that is involved in the operation to quantify only over finitely supported
elements (see Sections~2.2 to 2.5 of~\cite{Pit13}).

The \emph{nominal function set} between nominal sets $S$ and $T$ (Definition~2.18 of~\cite{Pit13}) is the nominal set ${(T^S)}_\fs$ of finitely supported functions from $S$ to
$T$---be they equivariant or not; recall that an equivariant function has empty
support. (We may write $S\to_\fs T$ in lieu of ${(T^S)}_\fs$.) The application and currying
functions can be respectively restricted to equivariant functions
${\app : (X\to_\fs Y) \times X \to Y}$ and
$\curry : (Z \times X \to_\fs Y) \to Z \to (X \to_\fs Y)$ such that the nominal function
set coincides with the \emph{exponential object} in $\mathbf{Nom}$, \ie\ there is a
bijection between hom-sets $\mathbf{Nom}(Z\times X,Y)$ and $\mathbf{Nom}(Z,X\to_\fs Y)$
given by sending $f \in \mathbf{Nom}(Z\times X,Y)$ to
$\curry(f)\in\mathbf{Nom}(Z,X\to_\fs Y)$.  (Section~2.4 in~\cite{Pit13} spells out all the
details on this isomorphism.)

Finally, the category $\mathbf{Nom}$ is Cartesian closed (Theorem~2.19 in~\cite{Pit13}),
\ie, $\mathbf{Nom}$ admits all the finite products (including the empty product $\One$
which is the terminal object) and all the exponentials.

\subsection*{Renamings}

We consider the \emph{finitely supported renamings} (hereafter \emph{renamings}) ranged
over by $\rho$, which are finitely supported functions $\rho:\Atom\to_\fs \Atom$, that is,
functions that act like the identity on all but finitely many atoms. We write $\iota$ for
the \emph{identity function} and `$;$' for \emph{diagrammatical composition}, that is, 
$f;g$ denotes the function $g\circ f$. We are particularly interested in 
\emph{replacements} of an atom by another: $\rep{b}{a}$ stands for the replacement that
substitutes $a$ with $b$ and leaves all other atoms fixed. Every renaming $\rho$ is equal
to the composition of a finite number of replacements~\cite{GH08}, \ie\
$\rho=\rep{b_1}{a_1};\ldots;\rep{b_n}{a_n}$ with $n\geq 0$. Notice that
$\Atom\to_\fs \Atom$ with `$;$' as composition operator and $\iota$ as identity element is 
a monoid~\cite{GH08}.

An \emph{action} of the monoid $\Atom\to_\fs \Atom$ on a set $S$ is a binary operation
mapping each $\rho\in\Atom\to_\fs \Atom$ and $s\in S$ to an element $\ren{s}{\rho}\in S$,
and satisfying the identity law $\ren{s}{\iota}=s$ and the composition law
$\ren{(\ren{s}{\rho_1})}{\rho_2}=\ren{s}{\rho_1;\rho_2}$. We will provide an action of
renaming for the raw terms to be defined in Section~\ref{sec:terms}. An action of renaming
could be defined for every object in $\mathbf{Nom}$, which ultimately gives rise to the
category $\mathbf{Ren}$ of renamings as described in~\cite{GH08}, which is a
generalisation of $\mathbf{Nom}$. We are interested in interpreting our terms as the
nominal algebraic datatypes of~\cite{Pit13}, which live in $\mathbf{Nom}$, and therefore
we refrain ourselves from interpreting our terms in the category $\mathbf{Ren}$, and we
treat renamings as the exponential objects $\Atom\to_\fs \Atom$ in the former category.

Notice that every permutation is an instance of a renaming. For every permutation $\pi$,
we may write $\ren{s}{\pi} = \peract{\pi}{s}$ for the action of renaming $\pi$ on $s$, and
for every renaming $\rho$, we may write $\pi;\rho$ for the diagrammatical composition of
$\rho$ after $\pi$. As we have mentioned above, the renamings are the exponential object
$\Atom\to_\fs \Atom$ in the category $\mathbf{Nom}$, and therefore they are equipped with
a permutation action given by $\peract{\pi}{\rho} = \pi^{-1};\rho;\pi$. As for any other
element of an object in $\textbf{Nom}$, the support of a renaming $\rho$ is the least set
$A$ such that $\peract{\pi}{\rho}=\rho$ for every permutation $\pi$ that leaves each
element of $A$ invariant.

\begin{exa}
  Consider the replacement $\rep{b}{a}$. Its support is $\supp(\rep{b}{a})=\{a,b\}$, as we
  show next. Let $\pi$ be a permutation such that $\peract{\pi}{a}=a$ and
  $\peract{\pi}{b}=b$. We show that $\pi^{-1};\rep{b}{a};\pi = \rep{b}{a}$. For atom $a$,
  \begin{displaymath}
    \ren{a}{\pi^{-1};\rep{b}{a};\pi} = \ren{(\peract{\pi^{-1}}{a})}{\rep{b}{a};\pi}
    = \ren{a}{\rep{b}{a};\pi} = \ren{b}{\pi} = \peract{\pi}{b} = b
    = \ren{a}{\rep{b}{a}}.
  \end{displaymath}
  For any other atom $c\#a$,
  \begin{displaymath}
    \ren{c}{\pi^{-1};\rep{b}{a};\pi} = \ren{(\peract{\pi^{-1}}{c})}{\rep{b}{a};\pi}
    = \ren{(\peract{\pi^{-1}}{c})}{\pi} = \peract{\pi}{(\peract{\pi^{-1}}{c})} = c
    = \ren{c}{\rep{b}{a}},
  \end{displaymath}
  since $\peract{\pi^{-1}}{c}\not= a$ by the assumptions on $\pi$. Therefore $\{a,b\}$
  supports $\rep{b}{a}$, while it is not hard to see that no subset of $\{a,b\}$ does so.
  \blockqed
\end{exa}

Since every renaming $\rho$ is finitary, its support can be defined alternatively as in
the proposition below.
\begin{prop}\label{prop:support-rho}
  Let $\rho$ be a renaming. The support $\supp(\rho)=\{a,(\rho\ a)\mid\rho\ a\not=a\}$.
\end{prop}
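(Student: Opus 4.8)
The plan is to prove the two inclusions $\supp(\rho)\subseteq A_\rho$ and $A_\rho\subseteq\supp(\rho)$ separately, where I abbreviate $A_\rho=\{a,\rho\,a\mid \rho\,a\neq a\}$ for the candidate support. I recall that $\supp(\rho)$ is by definition the least set of atoms supporting $\rho$, and that the permutation action on the renaming $\rho$ (viewed as an exponential object $\Atom\to_\fs\Atom$) unfolds pointwise to $(\peract{\pi}{\rho})(x)=\peract{\pi}{\rho(\peract{\pi^{-1}}{x})}$, in agreement with $\peract{\pi}{\rho}=\pi^{-1};\rho;\pi$. Consequently $\peract{\pi}{\rho}=\rho$ holds \emph{exactly} when $\rho$ commutes with $\pi$, \ie\ when $\rho(\peract{\pi}{x})=\peract{\pi}{\rho(x)}$ for every atom $x$. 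Since $\rho$ is finitary, $A_\rho$ is finite, which will let me choose fresh witnesses below.

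For $\supp(\rho)\subseteq A_\rho$ I would show that $A_\rho$ supports $\rho$. Take any permutation $\pi$ fixing each atom of $A_\rho$; by the remark above it suffices to verify $\rho(\peract{\pi}{x})=\peract{\pi}{\rho(x)}$ for all $x$. If $\rho\,x\neq x$, then both $x$ and $\rho\,x$ lie in $A_\rho$, so $\pi$ fixes both and each side equals $\rho\,x$. If $\rho\,x=x$, I first argue $\peract{\pi}{x}$ also satisfies $\rho(\peract{\pi}{x})=\peract{\pi}{x}$: were $\rho(\peract{\pi}{x})\neq\peract{\pi}{x}$, then $\peract{\pi}{x}\in A_\rho$, so $\pi$ would fix it, forcing $\peract{\pi}{x}=x$ by injectivity and hence $\rho\,x\neq x$, a contradiction. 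Thus $\rho(\peract{\pi}{x})=\peract{\pi}{x}=\peract{\pi}{\rho(x)}$. In both cases $\rho$ commutes with $\pi$, so $\peract{\pi}{\rho}=\rho$; minimality of the support then yields $\supp(\rho)\subseteq A_\rho$.

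For the reverse inclusion I would argue by contradiction that each $a\in A_\rho$ lies in $\supp(\rho)$. Suppose $a\in A_\rho$ but $a\notin\supp(\rho)$, and pick a fresh atom $c$ outside the finite set $\supp(\rho)\cup A_\rho\cup\{a\}$. Since $a,c\notin\supp(\rho)$, the transposition $\tr{a}{c}$ fixes $\supp(\rho)$ pointwise, so $\peract{\tr{a}{c}}{\rho}=\rho$, and I evaluate this equality at a carefully chosen atom. If $\rho\,a\neq a$ (the domain case), then $(\peract{\tr{a}{c}}{\rho})(c)=\tr{a}{c}(\rho(\tr{a}{c}(c)))=\tr{a}{c}(\rho\,a)=\rho\,a$, using that $\rho\,a\in A_\rho$ is fixed by $\tr{a}{c}$; but $\rho(c)=c$ because $c\notin A_\rho$, and $\rho\,a\neq c$, contradicting $\peract{\tr{a}{c}}{\rho}=\rho$. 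Otherwise $a$ enters $A_\rho$ as an image, say $a=\rho\,a'$ with $\rho\,a'\neq a'$ (so $a'\neq a$ and $a'\in A_\rho$); then $(\peract{\tr{a}{c}}{\rho})(a')=\tr{a}{c}(\rho(\tr{a}{c}(a')))=\tr{a}{c}(\rho\,a')=\tr{a}{c}(a)=c$, whereas $\rho\,a'=a\neq c$, again contradicting $\peract{\tr{a}{c}}{\rho}=\rho$. Hence $a\in\supp(\rho)$, giving $A_\rho\subseteq\supp(\rho)$ and, combined with the first part, the claimed equality.

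The commutation check in the first inclusion is essentially routine; the crux is the minimality argument. The delicate points there are choosing the fresh witness $c$ outside \emph{all} the relevant finite sets and, above all, selecting the correct atom at which to test the invariance $\peract{\tr{a}{c}}{\rho}=\rho$: for an image atom one must evaluate at its preimage $a'$ rather than at $a$ itself, so that the two sides of the invariance provably disagree. Treating these image atoms of $\{\rho\,a\mid\rho\,a\neq a\}$, and not merely the domain atoms $\{a\mid\rho\,a\neq a\}$, is therefore the main obstacle.
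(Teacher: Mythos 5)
Your proof is correct and follows essentially the same strategy as the paper's appendix proof: first show that $A_\rho=\{a,\rho\,a\mid\rho\,a\neq a\}$ supports $\rho$, then use a transposition with a fresh atom to show that no element of $A_\rho$ can be omitted. Your handling of the minimality direction is in fact more careful than the paper's: the paper evaluates the invariance $\peract{\pi}{\rho}=\rho$ at the missing atom $a$ itself and concludes with ``$\rho\,a\neq a$ since $a\in A$'', which fails when $a$ is merely an image atom (\eg\ $a=b$ for $\rho=\rep{b}{a}$, where $\rho\,b=b$), whereas your device of evaluating at the preimage $a'$ covers exactly this case and closes that gap.
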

The proof of Proposition~\ref{prop:support-rho} is in Appendix~\ref{ap:preliminaries}.

\subsection*{Nominal Transition Systems}
Nominal transition systems adopt the state/residual presentation for transitions of~\cite{BP09}, where a residual is a pair consisting of an action and a state. In~\cite{PBEGW15}, Parrow \etal\ develop modal logics à la Hennessy-Milner for process
nominal calculi. Here we are mainly interested in the transition relation and we adapt
Definition~1 in~\cite{PBEGW15} by removing the predicates. We write $\pset_\omega(\Atom)$
for the \emph{finite power set} of $\Atom$.

\begin{defi}[Nominal transition system]%
  \label{def:nts}
  A \emph{nominal transition system} (NTS) is a quadruple $(S,\mathit{Act},\bn,\rel{})$
  where $S$ and $\mathit{Act}$ are nominal sets of \emph{states} and \emph{actions}
  respectively, $\bn:\mathit{Act}\to\pset_\omega(\Atom)$ is an equivariant function that
  delivers the \emph{binding names} in an action, and
  ${\rel{}}\subseteq S\times(\mathit{Act}\times S)$ is an equivariant binary transition
  relation from states to \emph{residuals} (we let $\mathit{Act}\times S$ be the set of
  residuals). The function $\bn$ is such that $\bn(\ell)\subseteq\supp(\ell)$ for each
  $\ell\in \mathit{Act}$. We often write $p\rel{}(\ell,p')$ in lieu of
  $(p,(\ell,p'))\in{\rel{}}$.

  Finally, the transition relation $\rel{}$ must satisfy \emph{alpha-conversion of
    residuals}, that is, if ${a\in \bn(\ell)}$, $b\#(\ell,p')$ and $p\rel{}(\ell,p')$ then
  also $p\rel{}(\peract{\tr{a}{b}}{\ell},\peract{\tr{a}{b}}{p'})$, or equivalently
  $p\rel{}\peract{\tr{a}{b}}{(\ell,p')}$.
\end{defi}

We will consider an NTS (without its associated binding-names function $\bn$) as a
particular case of a nominal residual transition system, which we introduce next.
\begin{defi}[Nominal residual transition system]%
  \label{def:nrts}
  A \emph{nominal residual transition system} (NRTS) is a triple $(\Sta,\Res,\rel{})$
  where $\Sta$ and $\Res$ are nominal sets, and where ${\rel{}}\subseteq S\times R$ is an
  equivariant binary transition relation. We say $\Sta$ is the set of \emph{states} and
  $\Res$ is the set of \emph{residuals}.
\end{defi}
The connection between NTSs and NRTSs will be studied in more detail in
Section~\ref{sec-rule-format-nrts}.

\section{Nominal Terms}%
\label{sec:terms}

This section is devoted to the notion of nominal terms, which are syntactic objects that
make use of the atom abstractions of Definition~\ref{def:atom-abstraction} and represent
terms up to alpha-equivalence. As a first step, we introduce raw terms, devoid of any
notion of alpha-equivalence. Our raw terms resemble those from the literature, mainly from~\cite{UPG04,FG07,CP07,Pit13}, but with some important differences. In particular, our
terms include both variables (\ie\ unknowns) and moderated terms (\ie\ explicit renamings
over raw terms), and we consider atom and abstraction sorts. (The raw terms of~\cite{Pit13} do not include moderated terms, and the ones in~\cite{UPG04,FG07} only
consider moderated variables where the delayed renaming is a permutation. In~\cite{CP07}
the authors consider neither atom nor abstraction sorts.) We also adopt the classic
presentation of free algebras and term algebras in~\cite{GTWW77,BS00} in a different way
from that in~\cite{CP07,Pit13}. The raw terms correspond to the standard notion of free
algebra over a signature generated by a set of variables. We then adapt the
$\Sigma$-structures of~\cite{CP07} to our sorting schema. Finally, the nominal terms are
the interpretations of the ground terms in the initial $\Sigma$-structure; we show that
they coincide with the nominal algebraic terms of~\cite{Pit13}.

\begin{defi}[Nominal signature and nominal sort]
  A \emph{nominal signature} (or simply a \emph{signature}) $\Sigma$ is a triple
  $(\Delta,\ASet,F)$ where $\Delta=\{\delta_1,\ldots,\delta_n\}$ is a finite set of
  \emph{base sorts}, $\ASet$ is a countable set of \emph{atom sorts}, and $F$ is a finite
  set of \emph{function symbols}. The \emph{nominal sorts} over $\Delta$ and $\ASet$ are
  given by the grammar
  \begin{displaymath}
    \sigma~::=~\delta\mid \asort \mid [\asort]\sigma\mid \sigma_1 \times \cdots \times
    \sigma_k,
  \end{displaymath}
  with $k\geq 0$, $\delta\in\Delta$ and $\asort\in\ASet$. The sort $[\asort]\sigma$ is the
  \emph{abstraction sort}. Symbol $\times$ denotes the \emph{product sort}, which is
  associative; $\sigma_1\times\cdots\times\sigma_k$ stands for the sort of the empty
  product when $k=0$, which we may write as $\One$. We write $\Sort$ for the set of
  nominal sorts. We arrange the function symbols in $F$ based on the sort of the data
  (base sort) that they produce. We write $f_{ij}\in F$ with $1\leq i\leq n$ and
  $1\leq j \leq m_i$ such that $f_{ij}$ has arity $\sigma_{ij}\to\delta_i$, where
  $\delta_i$ is a base sort.
\end{defi}

The theory of nominal sets extends to the case of many-sorted atoms (see Sec.~4.7 in~\cite{Pit13}). We assume that $\Atom$ contains a countably infinite collection of atoms
$a_\asort$, $b_\asort$, $c_\asort$,~\ldots\ for each atom sort $\asort$ such that the sets
of atoms $\Atom_\asort$ of each sort are mutually disjoint. We write
$\permso{\Atom}=\{\pi\in\perm{\Atom}\mid \forall \asort\in\ASet.\, \forall
a\in\Atom_\asort.\;\pi\,a\in\Atom_\asort\}$
for the subgroup of finite permutations that respect the sorting. The sorted nominal sets
are the $\permso{\Atom}$-sets whose elements are finitely supported. We also consider
renamings that respect the sorting, which we write
${(\Atom\to_\fs \Atom)}_s=\{\rho\in \Atom\to_\fs \Atom \mid \forall \alpha\in A.~\forall
a\in \Atom_\alpha.~ \rho\ a \in \Atom_\alpha\}$.
(Notice that every permutation in $\permso{\Atom}$ is a renaming that respects the
sorting.) In the sequel we may drop the $s$ subscript in $\permso{\Atom}$ and in
${(\Atom\to_\fs\Atom)}_s$, and omit the `sorted' epithet from `sorted nominal sets'.

We let $\Var$ be a set that contains a countably infinite collection of \emph{variable
  names} (variables for short) $x_\sigma$, $y_\sigma$, $z_\sigma$, \ldots\ for each sort
$\sigma$, such that the sets of variables $\Var_\sigma$ of each sort are mutually
disjoint. We also assume that $\Var$ is disjoint from $\Atom$.

\begin{defi}[Raw terms]%
  \label{def:raw-terms}
  Let $\Sigma=(\Delta,A,F)$ be a signature. The set of \emph{raw terms over signature
    $\Sigma$ and set of variables $\Var$} (\emph{raw terms} for short) is given by the
  grammar
  \begin{displaymath}
    \begin{array}{rcl}
      t_\sigma ::=&
                    x_\sigma\mid
                    a_\asort\mid
                    {(\susp{t_\sigma}{\rho})}_\sigma\mid
                    {([a_\asort]t_\sigma)}_{[\asort]\sigma}\mid
                    {(t_{\sigma_1},\ldots,t_{\sigma_k})}_{\sigma_1\times\cdots\times\sigma_k} \mid
                    {(f_{ij}(t_{\sigma_{ij}}))}_{\delta_i},
    \end{array}
  \end{displaymath}
  where term $x_\sigma$ is a \emph{variable} of sort $\sigma$, term $a_\alpha$ is an
  \emph{atom} of sort $\alpha$, term ${(\susp{t_\sigma}{\rho})}_\sigma$ is a \emph{moderated
    term} (\ie\ the explicit, or delayed, renaming $\rho$ over term $t_\sigma$), term
  ${([a_\asort]t_\sigma)}_{[\asort]\sigma}$ is the \emph{abstraction of atom $a_\asort$ in
    term $t_\sigma$}, term
  ${(t_{\sigma_1},\ldots,t_{\sigma_k})}_{\sigma_1\times\cdots\times\sigma_k}$ is the
  \emph{product of terms} $t_{\sigma_1}$, \ldots, $t_{\sigma_k}$, and term
  ${(f_{ij}(t_{\sigma_{ij}}))}_{\delta_i}$ is the \emph{datum of base sort $\delta_i$
    constructed from term $t_{\sigma_{ij}}$ and function symbol
    $f_{ij} : \sigma_{ij}\to\delta_i$}.  When they are clear from the context or
  immaterial, we leave the arities and sorts implicit and write $x$, $a$,
  $\susp{t}{\rho}$, $[a]t$, $(t_1,\ldots,t_k)$, $f(t)$, \etc
\end{defi}
Given a raw term $t$, the \emph{size of $t$} is the number of nodes of $t$'s abstract
syntax tree.

The raw terms are the inhabitants of the carrier of the free algebra over the set of
variables $\Var$ and over the $\Sort$-sorted conventional signature that consists of the
function symbols in $F$, together with a constant symbol for each atom $a_\alpha$, a unary
symbol that produces moderated terms for each renaming $\rho$ and each sort $\sigma$, a
unary symbol that produces abstractions for each atom $a_\alpha$ and sort $\sigma$, and a
$k$-ary symbol that produces a product of sort $\sigma_1\times\cdots\times\sigma_k$ for
each sequence of sorts $\sigma_1$, \ldots, $\sigma_k$. (See~\cite{GTWW77} for a classic
presentation of term algebras, initial algebra semantics and free algebras.)

We write $\rTerm{\Sigma}_\sigma$ for the set of raw terms of sort $\sigma$. A raw term $t$
is \emph{ground} iff no variables occur in $t$. We write $\gTerm{\Sigma}_\sigma$ for the
set of ground terms of sort $\sigma$. The sets of raw terms (resp.\ ground terms) of each
sort are mutually disjoint as terms carry sort information. Therefore we sometimes
identify the family ${(\rTerm{\Sigma}_\sigma)}_{\sigma\in\Sort}$ of $\Sort$-indexed raw
terms and the family ${(\gTerm{\Sigma}_\sigma)}_{\sigma\in\Sort}$ of $\Sort$-indexed ground
terms with their respective ranges $\bigcup_{\sigma\in\Sort}\rTerm{\Sigma}_\sigma$ and
$\bigcup_{\sigma\in\Sort}\gTerm{\Sigma}_\sigma$, which we abbreviate as $\rTerm{\Sigma}$
and $\gTerm{\Sigma}$ respectively.

The set $\rTerm{\Sigma}$ of raw terms is a nominal set, with the $\perm{\Atom}$-action and
the support of a raw term given by:
\begin{displaymath}
\begin{array}{cc}\hspace{-.3cm}
\begin{array}[t]{rcl}
\peract{\pi}{x}&=&x\\
\peract{\pi}{a}&=&\pi\,a\\
\peract{\pi}{(\susp{t}{\rho})}&=&\susp{(\peract{\pi}{t})}{\peract{\pi}{\rho}}\\
\peract{\pi}{[a]t}&=&[\pi\,a](\peract{\pi}{t})\\
\peract{\pi}{(t_1,\ldots,t_k)}&=&
(\peract{\pi}{t_1},\ldots,\peract{\pi}{t_k})\\
\peract{\pi}{(f(t))}&=&
f(\peract{\pi}{t}),\\[8pt]
\end{array}&
\begin{array}[t]{rcl}
\supp(x)&=&\emptyset\\
\supp(a)&=&\{a\}\\
\supp(\susp{t}{\rho})&=&\supp(t)\cup\supp(\rho)\\
\supp([a](t))&=&\{a\}\cup\supp(t)\\
\supp((t_1,\ldots,t_k))&=&\supp(t_1)\cup\ldots\cup\supp(t_k)\\
\supp(f(t))&=&\supp(t).
\end{array}
\end{array}
\end{displaymath}

It is straightforward to check that the permutation action for raw terms is
sort-preserving (remember that permutations are also sort-preserving). The set
$\gTerm{\Sigma}$ of ground terms is also a nominal set since it is closed with respect to
the $\perm{\Atom}$-action given above.

Below on the left we introduce the action of renaming for a raw term $t$, which replaces
each occurrence of a free atom $a$ in $t$ by $\ren{a}{\rho}$. On the right we present the
function $\fa:\rTerm{\Sigma}\to\pset_\omega(\Atom)$\label{fun:fa}, which delivers the set
of free atoms in a raw term:
\begin{displaymath}
  \begin{array}{cc}
    \begin{array}[t]{rcl}
      \ren{x}{\rho}&=&x\\
      \ren{a}{\rho}&=&\rho\,a\\
      \ren{(\susp{t}{\rho_1})}{\rho_2}&=&\susp{t}{\rho_1;\rho_2}\\
      \ren{([a]t)}{\rho}&=&[\rho\,a](\ren{t}{\rho})\\
      \ren{(t_1,\ldots,t_k)}{\rho}&=&(\ren{t_1}{\rho},\ldots,\ren{t_k}{\rho})\\
      \ren{(f(t))}{\rho}&=&f(\ren{t}{\rho}),\\[8pt]
    \end{array}&
    \begin{array}[t]{rcl}
      \fa(x)&=&\emptyset\\
      \fa(a)&=&\{a\}\\
      \fa(\susp{t}{\rho})&=&\fa(\ren{t}{\rho})\\
      \fa([a]t)&=&\fa(t)\setminus \{a\}\\
      \fa(t_1,\ldots,t_k)&=&\fa(t_1)\cup \ldots \cup \fa(t_k)\\
      \fa(f(t))&=&\fa(t).
    \end{array}
  \end{array}
\end{displaymath}
Notice that the set of free atoms in a raw term differs from the support of the term. For
instance, $\fa([a](a,b))=\{b\}$, but $\supp([a](a,b))=\{a,b\}$.

\begin{rem}\label{rem:ren-size}
  Let $t$ be a raw term and $\rho$ a renaming. Then the size of $\ren{t}{\rho}$ equals the
  size of $t$, which can be checked in a straightforward way by the definition
  above.\blockqed
\end{rem}

Observe that the action of renaming is equivariant.
\begin{lem}\label{lem:renaming-equivariant}
  Let $t$ be a term, $\rho$ be a renaming and $\pi$ be a permutation. Then,
  $\peract{\pi}{(\ren{t}{\rho})} = \ren{(\peract{\pi}{t})}{\peract{\pi}{\rho}}$.
\end{lem}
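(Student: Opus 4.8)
The plan is to argue by structural induction on the raw term $t$, following the grammar of Definition~\ref{def:raw-terms}. Before starting the induction I would isolate two auxiliary equivariance facts about renamings, viewed as elements of the exponential object $\Atom\to_\fs\Atom$, both immediate from the conjugation action $\peract{\pi}{\rho}=\pi^{-1};\rho;\pi$. The first is that applying a renaming to an atom is equivariant, i.e.\ $\peract{\pi}{(\rho\,a)}=(\peract{\pi}{\rho})(\pi\,a)$: unfolding $\peract{\pi}{\rho}=\pi^{-1};\rho;\pi$ and using the composition law for the renaming action at the atom $\pi\,a$, the leading $\pi^{-1}$ cancels the $\pi$ in the argument $\pi\,a$, then $\rho$ acts, then the trailing $\pi$ restores it, yielding $\pi(\rho\,a)=\peract{\pi}{(\rho\,a)}$. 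The second is that diagrammatic composition of renamings is equivariant, i.e.\ $\peract{\pi}{(\rho_1;\rho_2)}=(\peract{\pi}{\rho_1});(\peract{\pi}{\rho_2})$; this is the renaming analogue of the computation displayed earlier in the excerpt for composition of permutations, and uses only associativity of `$;$' and the cancellation $\pi;\pi^{-1}=\iota$.

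With these in hand the induction is largely routine. In the variable case both sides reduce to $x$. The atom case $t=a$ is precisely the first auxiliary fact, since $\peract{\pi}{(\ren{a}{\rho})}=\peract{\pi}{(\rho\,a)}=(\peract{\pi}{\rho})(\pi\,a)=\ren{(\peract{\pi}{a})}{\peract{\pi}{\rho}}$. For a moderated term $t=\susp{s}{\rho_1}$ the renaming action does not recurse into $s$ but merely records $\susp{s}{\rho_1;\rho}$, so the step reduces to the second auxiliary fact applied to $\rho_1;\rho$, together with the definition of the permutation action on moderated terms; notably the induction hypothesis on $s$ is not needed here. For an abstraction $t=[a]s$ I would combine the induction hypothesis on $s$ with the first auxiliary fact to reconcile the bound atom produced on each side: the left side carries the binder $[\pi(\rho\,a)]$ while the right side carries $[(\peract{\pi}{\rho})(\pi\,a)]$, and these agree exactly by the atom identity. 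The product and function-symbol cases follow by applying the induction hypothesis to each immediate subterm, since both actions are defined congruently.

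The only delicate points, and hence where I would concentrate the care, are the atom and moderated-term cases, because they are the places where the conjugation definition of the permutation action on renamings meets the diagrammatic composition `$;$'. In particular one must keep the order of composition straight---$\ren{(\ren{s}{\rho_1})}{\rho_2}=\ren{s}{\rho_1;\rho_2}$ with $f;g=g\circ f$---so that the cancellation $\pi;\pi^{-1}=\iota$ is applied on the correct side; everything else is a mechanical traversal of the term. No freshness or support reasoning is required, which is exactly why the abstraction case goes through without any side condition on the bound atom.
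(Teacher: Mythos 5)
Your proof is correct and follows essentially the same route as the paper's (Appendix~\ref{ap:terms}): an induction over $t$ whose atom and moderated-term cases reduce to the conjugation identities $\peract{\pi}{(\rho\,a)}=(\peract{\pi}{\rho})(\pi\,a)$ and $\peract{\pi}{(\rho_1;\rho_2)}=(\peract{\pi}{\rho_1});(\peract{\pi}{\rho_2})$, which the paper inlines rather than isolating as auxiliary facts. Your observations that the moderated-term case needs no induction hypothesis and that the abstraction case needs no freshness reasoning both match the paper's argument exactly.
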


As expected, the free atoms of a raw term are contained in its support.
\begin{lem}\label{lem:fa-subset-supp}
  Let $t$ be a raw term. Then $\fa(t)\subseteq\supp(t)$.
\end{lem}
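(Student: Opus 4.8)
The plan is to prove $\fa(t)\subseteq\supp(t)$ by structural induction on the raw term $t$, using the inductive definitions of $\fa$ and $\supp$ given side-by-side in the excerpt. The base cases and most inductive cases are immediate by comparing the two definitions; the only case requiring genuine work is the moderated term $\susp{t}{\rho}$, where $\fa$ is defined via the action of renaming rather than directly on $t$.

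First I would dispatch the easy cases. For a variable $x$, both $\fa(x)$ and $\supp(x)$ are $\emptyset$, so the inclusion is trivial. For an atom $a$, we have $\fa(a)=\{a\}=\supp(a)$. For the product $(t_1,\ldots,t_k)$ and for $f(t)$, the defining clauses for $\fa$ and $\supp$ are identical in shape (componentwise union, respectively pass-through), so the inclusion follows immediately from the induction hypotheses applied to the subterms. For the abstraction $[a]t$, we have $\fa([a]t)=\fa(t)\setminus\{a\}$ and $\supp([a]t)=\{a\}\cup\supp(t)$; by the induction hypothesis $\fa(t)\subseteq\supp(t)$, whence $\fa(t)\setminus\{a\}\subseteq\supp(t)\subseteq\{a\}\cup\supp(t)$, as required.

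The main obstacle is the moderated term $\susp{t}{\rho}$, because here $\fa(\susp{t}{\rho})=\fa(\ren{t}{\rho})$ is defined in terms of $\ren{t}{\rho}$, not $t$ itself, while $\supp(\susp{t}{\rho})=\supp(t)\cup\supp(\rho)$. Since $\ren{t}{\rho}$ is a raw term of the same size as $t$ (Remark~\ref{rem:ren-size}) but may be structurally larger as a subterm of the original, a plain structural induction does not directly hand me the hypothesis $\fa(\ren{t}{\rho})\subseteq\supp(\ren{t}{\rho})$. I therefore intend to strengthen the induction to be on the \emph{size} of raw terms (justified by Remark~\ref{rem:ren-size}, which guarantees $\ren{t}{\rho}$ has the same size as $t$), so that the induction hypothesis applies to $\ren{t}{\rho}$. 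This reduces the goal to showing $\supp(\ren{t}{\rho})\subseteq\supp(t)\cup\supp(\rho)$. Expanding $\fa(\susp{t}{\rho})=\fa(\ren{t}{\rho})$, by the size-induction hypothesis $\fa(\ren{t}{\rho})\subseteq\supp(\ren{t}{\rho})$, so it suffices to establish the auxiliary fact that the support of a renamed term is contained in the support of the term together with the support of the renaming.

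To close this, I would prove the auxiliary inclusion $\supp(\ren{t}{\rho})\subseteq\supp(t)\cup\supp(\rho)$ as a separate lemma, itself by structural induction on $t$ using the definition of the action of renaming and of $\supp$; the key step is the atom case, where $\ren{a}{\rho}=\rho\,a$ and $\supp(\rho\,a)=\{\rho\,a\}\subseteq\supp(\rho)$ whenever $\rho\,a\neq a$ (and otherwise $\rho\,a=a$ so the singleton is $\{a\}\subseteq\supp(t)$), using Proposition~\ref{prop:support-rho} to relate $\rho\,a$ to $\supp(\rho)$. The remaining cases propagate by the induction hypotheses. With this auxiliary inclusion in hand, the moderated-term case of the main size-induction follows, completing the argument. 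The overall plan thus hinges on (i) switching from structural to size-based induction, legitimised by Remark~\ref{rem:ren-size}, and (ii) the auxiliary support-of-renaming inclusion, which is the technical heart of the proof.
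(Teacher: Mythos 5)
Your proposal is correct and follows essentially the same route as the paper: induction on the size of $t$ (so that the hypothesis applies to $\ren{t}{\rho}$, which has the same size as $t$ by Remark~\ref{rem:ren-size}), with the moderated-term case reduced to the auxiliary inclusion $\supp(\ren{t}{\rho})\subseteq\supp(t)\cup\supp(\rho)$. The only divergence is cosmetic: the paper derives that inclusion by chaining Lemma~11.1 of~\cite{GH08} with Lemma~\ref{lem:support-moderated-term}, whereas you prove it by a direct structural induction --- which works, though note that its moderated-subterm case rests not on the induction hypothesis but on the (easy) fact $\supp(\rho_1;\rho)\subseteq\supp(\rho_1)\cup\supp(\rho)$.
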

The proof of Lemmas~\ref{lem:renaming-equivariant} and~\ref{lem:fa-subset-supp} are in
Appendix~\ref{ap:terms}.

\begin{exa}[$\pi$-calculus]%
  \label{ex:pi-calculus}
  Consider a signature $\Sigma$ for the $\pi$-calculus~\cite{SW01,CMRG12} given by a
  single atom sort $\Chan$ of channel names, and base sorts $\Proc$ and $\Act$ for
  processes and actions respectively. The function symbols (adapted from~\cite{SW01}) are
  the following:
  \begin{displaymath}
    \begin{array}{lll}
      F=\{
      \begin{array}[t]{l}
	\nullPA:\One\to \Proc,\\
	\tauPA[]:\Proc\to \Proc,\\
	\inPA[]{}:(\Chan\times [\Chan]\Proc)\to \Proc,\\
	\outPA{}{}{}:(\Chan\times \Chan\times \Proc)\to
	\Proc,
      \end{array}
      \begin{array}[t]{l}
	\parPA[]{}:(\Proc\times \Proc)\to \Proc,\\
	\sumPA[]{}:(\Proc\times \Proc)\to \Proc,\\
	\repPA[]:\Proc\to \Proc,\\
	\newPA[]:[\Chan]\Proc\to \Proc,
      \end{array}
      \begin{array}[t]{l}
	\tauAA:\One\to\Act,\\
	\inAA[]{}:(\Chan\times \Chan)\to\Act,\\
	\outAA[]{}:(\Chan\times \Chan)\to\Act,\\
	\boutAA[]{}:(\Chan\times \Chan)\to\Act\quad\}.
      \end{array}
    \end{array}
  \end{displaymath}

  Recalling terminology from~\cite{SW01,CMRG12}, $\nullPA$ stands for inaction,
  $\tauPA[p]$ for the internal action after which process $p$ follows, $\inPA[a]{[b]p}$
  for the input at channel $a$ where the input name is bound to $b$ in the process $p$
  that follows, $\outPA[a]{b}{p}$ for the output of name $b$ through channel $a$ after
  which process $p$ follows, $\parPA[p]{q}$ for parallel composition, $\sumPA[p]{q}$ for
  nondeterministic choice, $\repPA[p]$ for parallel replication, and $\newPA[{[a]p}]$ for
  the restriction of channel $a$ in process $p$ ($a$ is private in $p$). Actions and
  processes belong to different sorts. We use $\tauAA$, $\outAA[a]{b}$, $\inAA[a]{b}$ and
  $\boutAA[a]{b}$ respectively for the internal action, the output action, the input
  action and the bound output action.

  The set of terms of the $\pi$-calculus corresponds to the subset of ground terms over
  $\Sigma$ of sort $\Proc$ and $\Act$ in which no moderated (sub-)terms occur. For 
  instance, the process $(\nu b)(\overline{a}b.0)$ corresponds to the ground term
  $\newPA[{[b](\outPA[a]{b}{\nullPA})}]$, whose support is $\{a,b\}$. Both free and bound
  channel names (such as the $a$ and $b$ respectively in the example process) are
  represented by atoms. The set of ground terms also contains generalised processes and
  actions with moderated (sub-)terms $\susp{p}{\rho}$, which stand for a delayed renaming 
  $\rho$ that ought to be applied to a term $p$, \eg\
  $\newPA[\susp{([b](\outPA[a]{b}{\nullPA}))}{\rho}]$.\blockqed
\end{exa}

Raw terms allow variables to occur in the place of any ground subterm. The variables
represent \emph{unknowns}, and should be mistaken with neither free nor bound channel
names. For instance, the raw term $\newPA[{[b](\outPA[a]{b}{x})}]$ represents a
$\pi$-calculus process $(\nu b)(\overline{a}b.P)$ where the $x$ is akin to the
meta-variable $P$, which stands for some unknown process. The process
$(\nu b)(\overline{a}b.P)$ unifies with $(\nu b)(\overline{a}b.0)$ by replacing $P$ with
$0$. In the nominal setting, the raw term $\newPA[{[b](\outPA[a]{b}{x})}]$ unifies with
ground term $\newPA[{[b](\outPA[a]{b}{\nullPA})}]$, by means of a \emph{substitution}
$\varphi$ such that $\varphi(x)=\nullPA$. Formally, substitutions are defined below.

\begin{defi}[Substitution]%
  \label{def:substitution}
  A \emph{substitution} $\varphi:\Var\to_\fs\rTerm{\Sigma}$ is a sort-preserving, finitely
  supported function from variables to raw terms. The \emph{domain $\dom(\varphi)$ of a
    substitution $\varphi$} is the set ${\{x\mid \varphi(x)\not=x\}}$. A substitution
  $\varphi$ is \emph{ground} iff $\varphi(x)\in\gTerm{\Sigma}$ for every variable
  $x\in\dom(\varphi)$.
\end{defi}

The set of substitutions is a nominal set. The \emph{extension to raw terms
  $\overline{\varphi}$ of substitution $\varphi$} is the unique homomorphism induced by
$\varphi$ from the free algebra $\rTerm{\Sigma}$ to itself, which coincides with the
function given by:
\begin{displaymath}
\begin{array}[t]{rcl}
\overline{\varphi}(x)&=&\varphi(x)\\
\overline{\varphi}(a)&=&a\\
\overline{\varphi}(\susp{t}{\rho})&=&\susp{\overline{\varphi}(t)}{\rho}\\
\overline{\varphi}([a]t)&=&[a](\overline{\varphi}(t))\\
\overline{\varphi}(t_1,\ldots,t_k)&=&(\overline{\varphi}(t_1),\ldots,
\overline{\varphi}(t_k))\\
\overline{\varphi}(f(t))&=&f(\overline{\varphi}(t)).
\end{array}
\end{displaymath}
Given substitutions $\varphi$ and $\gamma$ we write $\varphi\circ\gamma$ for their
composition, which is defined as follows: For every variable $x$,
$(\varphi\circ\gamma)(x)=\overline{\varphi}(t)$ where $\gamma(x)=t$. It is straightforward
to check that
$(\overline{\varphi\circ\gamma})(t) = \overline{\varphi}(\overline{\gamma}(t))$. We note
that our definition of substitution is different from those in both~\cite{UPG04,CP07},
where the authors consider delayed permutations instead of delayed renamings, and where
their substitution function performs the delayed permutations of the moderated terms
\emph{on-the-fly}.

\begin{lem}[Extension to raw terms is equivariant]%
  \label{lem:extension-equivariant}
  Let $\varphi$ be a substitution and $\pi$ a permutation. Then,
  $\peract{\pi}{\overline{\varphi}}= \overline{\peract{\pi}{\varphi}}$.
\end{lem}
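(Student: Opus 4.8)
The plan is to prove the equality of the two functions $\peract{\pi}{\overline{\varphi}}$ and $\overline{\peract{\pi}{\varphi}}$ from $\rTerm{\Sigma}$ to $\rTerm{\Sigma}$ pointwise, after unfolding the permutation action on functions. Recall that this action is given by conjugation, so that $(\peract{\pi}{\overline{\varphi}})(t)=\peract{\pi}{\overline{\varphi}(\peract{\pi^{-1}}{t})}$ for every raw term $t$. Writing $s=\peract{\pi^{-1}}{t}$, the desired identity $(\peract{\pi}{\overline{\varphi}})(t)=\overline{\peract{\pi}{\varphi}}(t)$ is therefore equivalent to the \emph{equivariance} statement
\begin{displaymath}
  \peract{\pi}{(\overline{\varphi}(s))} = \overline{(\peract{\pi}{\varphi})}(\peract{\pi}{s}),
\end{displaymath}
which I would prove for every raw term $s$. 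Before starting, I would record the auxiliary fact that $(\peract{\pi}{\varphi})(x)=\peract{\pi}{\varphi(x)}$ for every variable $x$: this follows from the definition of the permutation action on functions together with the trivial action on variables $\peract{\pi^{-1}}{x}=x$.

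The proof then proceeds by structural induction on $s$. In the base cases I would treat variables and atoms separately. For a variable $x$ the left-hand side is $\peract{\pi}{\varphi(x)}$, while the right-hand side, using $\peract{\pi}{x}=x$ and the auxiliary fact, also reduces to $\peract{\pi}{\varphi(x)}$. For an atom $a$ both sides equal $\pi\,a$, since the extension of any substitution fixes atoms. In each inductive case I would push the permutation and the (extended) substitution through the top-level constructor using, respectively, the clauses defining the $\perm{\Atom}$-action on raw terms and the defining clauses of $\overline{\cdot}$, and then close the case with the induction hypothesis. For instance, for a moderated term $\susp{t}{\rho}$ both sides rewrite to $\susp{(\peract{\pi}{\overline{\varphi}(t)})}{\peract{\pi}{\rho}}$ after applying the induction hypothesis to $t$; the abstraction, product and function-symbol cases are entirely analogous, the abstraction case additionally using that the action and the extension both carry the binder $[a]$ to $[\pi\,a]$.

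I do not expect a genuine obstacle here: the argument is a routine structural induction. The only points that require a little care are the two bookkeeping steps at the outset, namely reducing the stated equality of functions to the pointwise equivariance identity by correctly unfolding the conjugation action, and computing the substitution $\peract{\pi}{\varphi}$ on variables. Everything downstream is a mechanical verification that each term constructor commutes with the permutation action and with the extension of a substitution, combined with the induction hypothesis.
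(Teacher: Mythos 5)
Your proof is correct and follows essentially the same route as the paper's: a structural induction on the term, with the two base cases (variables and atoms) treated explicitly via the conjugation action and the remaining constructor cases closed mechanically by the induction hypothesis. The only difference is cosmetic — you unfold the conjugation once at the outset by reparametrising $s=\peract{\pi^{-1}}{t}$ and prove the equivalent equivariance identity, whereas the paper unfolds the conjugation inside each base case.
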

\begin{proof}
  We prove $(\peract{\pi}{\overline{\varphi}})(t)= \overline{\peract{\pi}{\varphi}}(t)$ by
  induction on the structure of raw term $t$. By conjugation,
  \begin{displaymath}
    \begin{array}{l}
      (\peract{\pi}{\overline{\varphi}})(x) =
      \peract{\pi}{\overline{\varphi}(\peract{\pi^{-1}}{x})} =
      \peract{\pi}{\overline{\varphi}(x)} =
      \peract{\pi}{\varphi(x)}\\
      = \peract{\pi}{\varphi(\peract{\pi^{-1}}{x})} =
      (\peract{\pi}{\varphi})(x) = \overline{\peract{\pi}{\varphi}}(x)
    \end{array}
  \end{displaymath}
  and the lemma holds for the base case $t=x$. Similarly,
  \begin{displaymath}
    \begin{array}{l}
      (\peract{\pi}{\overline{\varphi}})(a) =
      \peract{\pi}{\overline{\varphi}(\peract{\pi^{-1}}{a})} =
      \peract{\pi}{(\peract{\pi^{-1}}{a})} = a =
      \overline{\peract{\pi}{\varphi}}(a)
    \end{array}
  \end{displaymath}
  and the lemma holds for the base case $t=a$. The rest of the cases are straightforward
  by induction.
\end{proof}

It is easy to check that the support of $\overline{\varphi}$ coincides with the support of
$\varphi$. By the above lemma, the set of extended substitutions is also a nominal set,
since it is closed with respect to the $\perm{\Atom}$-action. Hereafter we sometimes write
$\varphi(t)$, where $t$ is a raw term, instead of $\overline{\varphi}(t)$. We may also
write $\varphi^\pi$ instead of $\peract{\pi}{\overline{\varphi}}$ or
$\overline{\peract{\pi}{\varphi}}$ for short\label{pg:varphi-pi}.

The following result highlights the relation between substitution and the permutation
action.

\begin{lem}[Substitution and permutation action]%
  \label{lem:subst-perm}
  Let $\varphi$ be a substitution, $\pi$ a permutation and $t$ a raw term.  Then,
  $\peract{\pi}{\varphi(t)} = \varphi^\pi(\peract{\pi}{t})$.
\end{lem}
\begin{proof}
  By definition of $\varphi^\pi$, we have that
  $\varphi^\pi(\peract{\pi}{t}) =
  \peract{\pi}{\varphi(\peract{\pi^{-1}}{(\peract{\pi}{t})})} = \peract{\pi}{\varphi(t)}$
  and we are done.
\end{proof}

Our goal is to give meaning to ground terms in nominal sets. To this end, we need a
suitable class of algebraic structures that can be used to give an \emph{interpretation}
of those ground terms.

\begin{defi}[$\Sigma$-structure]%
  \label{def:sigma-structure}
  Let $\Sigma=(\Delta,A,F)$ be a signature. A \emph{$\Sigma$-structure $M$} consists of a
  nominal set $M[\![\sigma]\!]$ for each sort $\sigma$ defined as follows
  \begin{displaymath}
    \begin{array}{rcl}
      M[\![\alpha]\!]&=&\Atom_\alpha\\
      M[\![[\alpha]\sigma]\!]&=&[\Atom_\alpha](M[\![\sigma]\!])\\
      M[\![\sigma_1\times\cdots\times\sigma_k]\!]&=&
         M[\![\sigma_1]\!]\times\cdots\times M[\![\sigma_k]\!],
    \end{array}
  \end{displaymath}
  where the $M[\![\delta_i]\!]$ with $\delta_i\in\Delta$ are given, as well as an
  equivariant function $M[\![f_{ij}]\!] : M[\![\sigma_{ij}]\!]\to M[\![\delta_i]\!]$ for
  each symbol ${(f_{ij})}_{\sigma_{ij}\to\delta_i}\in F$.
\end{defi}

The notion of $\Sigma$-structure adapts that of $\Sigma$-structure in~\cite{CP07} to our
sorting convention with atom and abstraction sorts. The $\Sigma$-structures characterise a
range of interpretations of ground terms into elements of nominal sets, such that any sort
$\sigma$ gives rise to the expected nominal set, \ie\ atom sorts give rise to sets of
atoms, abstraction sorts give rise to sets of atom abstractions, and product sorts give
rise to finite products of nominal sets.

Next we define the \emph{interpretation of a ground term in a $\Sigma$-structure}, which
resembles the \emph{value of a term} in~\cite{CP07}.

\begin{defi}[Interpretation of ground terms in a $\Sigma$-structure]%
  \label{def:interpretation}
  Let $\Sigma$ be a signature and $M$ be a $\Sigma$-structure. The \emph{interpretation
    $M[\![p]\!]$ of a ground term $p$ in $M$} is given by:
  \begin{displaymath}
    \begin{array}[t]{rcl}
      M[\![a]\!]&=&a\\
      M[\![\susp{p}{\rho}]\!]&=&M[\![\ren{p}{\rho}]\!]\\
      M[\![[a]p]\!]&=& \langle a\rangle(M[\![p]\!])\\
      M[\![(p_1,\ldots,p_k)]\!]&=&
	(M[\![p_1]\!],\ldots,M[\![p_k]\!])\\
      M[\![f(p)]\!]&=&M[\![f]\!](M[\![p]\!]).
    \end{array}
  \end{displaymath}
\end{defi}

Notice that the moderated ground term $\susp{p}{\iota}$ is syntactically different from
the ground term $p$, although both terms have the same interpretation in any
$\Sigma$-structure since $\ren{p}{\iota}=p$.

The next lemma states that interpretation in a $\Sigma$-structure is equivariant and
highlights the relation between interpretation and moderated terms.

\begin{lem}[Interpretation and moderated terms]%
  \label{lem:interp-mod}
  Let $M$ be a $\Sigma$-structure. Interpretation in $M$ is equivariant, that is,
  $\peract{\pi}{M[\![p]\!]} = M[\![\peract{\pi}{p}]\!]$ for every ground term $p$ and
  permutation $\pi$.
\end{lem}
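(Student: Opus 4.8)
The plan is to prove the statement $\peract{\pi}{M\llbracket p\rrbracket} = M\llbracket \peract{\pi}{p}\rrbracket$ by structural induction on the ground term $p$, following exactly the clause-by-clause definition of the interpretation (Definition~\ref{def:interpretation}) and the permutation action on raw terms. Since $p$ is a ground term, there is no variable case to consider, so the base case is the atom $p = a$, and the inductive cases are moderated terms, atom abstractions, products, and function applications.

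First I would dispatch the base case $p = a$: here $M\llbracket a\rrbracket = a$ and $\peract{\pi}{a} = \pi\,a$, so $\peract{\pi}{M\llbracket a\rrbracket} = \pi\,a = M\llbracket \pi\,a\rrbracket = M\llbracket \peract{\pi}{a}\rrbracket$, using that $M\llbracket \cdot\rrbracket$ is the identity on atoms. For products and function applications the argument is routine: the permutation action distributes componentwise (for products) and commutes with the function symbol (for $f$), while the interpretation does the same, so the result follows directly from the induction hypotheses together with the fact that each $M\llbracket f_{ij}\rrbracket$ is an equivariant function (Definition~\ref{def:sigma-structure}). For the abstraction case $p = [a]t$, I would compute $\peract{\pi}{M\llbracket [a]t\rrbracket} = \peract{\pi}{\langle a\rangle (M\llbracket t\rrbracket)}$ and compare it with $M\llbracket \peract{\pi}{[a]t}\rrbracket = M\llbracket [\pi\,a](\peract{\pi}{t})\rrbracket = \langle \pi\,a\rangle (M\llbracket \peract{\pi}{t}\rrbracket)$; the match then reduces, via the induction hypothesis $M\llbracket \peract{\pi}{t}\rrbracket = \peract{\pi}{M\llbracket t\rrbracket}$, to the identity $\peract{\pi}{\langle a\rangle s} = \langle \pi\,a\rangle(\peract{\pi}{s})$, which is exactly the permutation action on atom abstractions spelled out in Definition~\ref{def:atom-abstraction}.

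The one case that is genuinely different from a mechanical distribution is the moderated term $p = \susp{t}{\rho}$. Here the interpretation is defined indirectly as $M\llbracket \susp{t}{\rho}\rrbracket = M\llbracket \ren{t}{\rho}\rrbracket$, so the structural induction hypothesis does not apply to $t$ directly. My plan is to reduce this case to an already-established equivariance fact about renaming rather than to the induction hypothesis on a structural subterm. Specifically, I would write $\peract{\pi}{M\llbracket \susp{t}{\rho}\rrbracket} = \peract{\pi}{M\llbracket \ren{t}{\rho}\rrbracket}$ and $M\llbracket \peract{\pi}{\susp{t}{\rho}}\rrbracket = M\llbracket \susp{(\peract{\pi}{t})}{\peract{\pi}{\rho}}\rrbracket = M\llbracket \ren{(\peract{\pi}{t})}{\peract{\pi}{\rho}}\rrbracket$, and then invoke Lemma~\ref{lem:renaming-equivariant}, which gives $\ren{(\peract{\pi}{t})}{\peract{\pi}{\rho}} = \peract{\pi}{(\ren{t}{\rho})}$. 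The two sides thus agree provided $\peract{\pi}{M\llbracket \ren{t}{\rho}\rrbracket} = M\llbracket \peract{\pi}{(\ren{t}{\rho})}\rrbracket$.

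The main obstacle is exactly this last equation: it is the statement of the lemma applied to the term $\ren{t}{\rho}$ rather than to $t$, and $\ren{t}{\rho}$ is not a structural subterm of $\susp{t}{\rho}$, so a naive structural induction does not close the loop. The clean fix is to run the induction on the \emph{size} of $p$ rather than on structure, and to observe (Remark~\ref{rem:ren-size}) that the size of $\ren{t}{\rho}$ equals the size of $t$, which is strictly smaller than the size of $\susp{t}{\rho}$. Thus $\ren{t}{\rho}$ is a strictly smaller term than $\susp{t}{\rho}$, the induction hypothesis does apply to it, and the moderated case goes through. With the size-based induction in place, all other cases remain as described since each immediate subterm is strictly smaller, and the proof is complete.
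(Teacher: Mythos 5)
Your proposal is correct and follows essentially the same route as the paper: the paper also proves the lemma by induction on the size of the term (precisely so that the induction hypothesis applies to $\ren{t}{\rho}$ in the moderated-term case, via Remark~\ref{rem:ren-size}), and handles the abstraction case through the permutation action on atom abstractions. The only cosmetic difference is that where you invoke Lemma~\ref{lem:renaming-equivariant} to commute the permutation past the renaming, the paper unfolds the same equality inline as a chain of renaming compositions; the mathematical content is identical.
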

\begin{proof}
  Recall that the size of a ground term is the number of nodes of its abstract syntax
  tree. We proceed by induction on the size of $p$. The base case $p=a$ is trivial.

  If $p=\susp{q}{\rho}$, then
  $\peract{\pi}{M[\![\susp{q}{\rho}]\!]} = \peract{\pi}{M[\![\ren{q}{\rho}]\!]}$, which by
  the induction hypothesis is equal to
  \begin{displaymath}
    \begin{array}{l}
      M[\![\peract{\pi}{(\ren{q}{\rho})}]\!] =
      M[\![\ren{q}{\rho;\pi}]\!] =
      M[\![\ren{q}{\pi;\pi^{-1};\rho;\pi}]\!] =
      M[\![\ren{q}{\pi;\peract{\pi}{\rho}}]\!] =
      M[\![\ren{(\ren{q}{\pi})}{\peract{\pi}{\rho}}]\!]\\
      = M[\![\ren{(\peract{\pi}{q})}{\peract{\pi}{\rho}}]\!]
      = M[\![\susp{(\peract{\pi}{q})}{\peract{\pi}{\rho}}]\!]
      = M[\![\peract{\pi}{(\susp{q}{\rho})}]\!].
    \end{array}
  \end{displaymath}

  If $p=[a]q$, then
  $\peract{\pi}{M[\![[a]q]\!]} = \peract{\pi}{(\langle a\rangle(M[\![q]\!]))}  =
  \langle\peract{\pi}{a}\rangle(\peract{\pi}{(M[\![q]\!])})$,
  which, by the induction hypothesis, is equal to
  $ \langle\peract{\pi}{a}\rangle(M[\![\peract{\pi}{q}]\!]) =
  M[\![[\peract{\pi}{a}](\peract{\pi}{q})]\!] = M[\![\peract{\pi}{([a]q)}]\!]$.

  The remaining cases are straightforward by the induction hypothesis.
\end{proof}

Moreover, the interpretation $M[\![\susp{p}{\pi}]\!]$ of a suspension whose delayed
renaming is a permutation $\pi$, is equal to the permutation $\peract{\pi}{M[\![p]\!]}$.
Indeed, by Definition~\ref{def:interpretation} and Lemma~\ref{lem:interp-mod} we have
$M[\![\susp{p}{\pi}]\!] = M[\![\peract{\pi}{p}]\!] = \peract{\pi}{M[\![p]\!]}$

Finally, we introduce the $\Sigma$-structure $\mathit{NT}$, which formalises the set of
\emph{nominal terms}.

\begin{defi}[$\Sigma$-structure for nominal terms]%
  \label{def:NT}
  Let $\Sigma$ be a signature. The \emph{$\Sigma$-structure $\mathit{NT}$ for nominal
    terms} is given by the least tuple $(\NT{\delta_1},\ldots,\NT{\delta_n})$ satisfying
  \begin{displaymath}
    \NT{\delta_i} = \NT{\sigma_{i1}} + \cdots + \NT{\sigma_{im_i}}\quad
    \text{for each base sort $\delta_i\in\Delta$, and}
  \end{displaymath}
  $\NT{f_{ij}}=\inj_j:\NT{\sigma_{ij}}\to\NT{\delta_i}$, for each function symbol
  $f_{ij}\in F$.

  In the conditions above, the `less than or equal to' relation for tuples is pointwise
  set inclusion. The $\NT{f_{ij}}$ is the $j$th injection of the $i$th component in
  $(\NT{\delta_1},\ldots,\NT{\delta_n})$.
\end{defi}

Nominal terms represent alpha-equivalence classes of raw terms by using the atom
abstractions of Definition~\ref{def:atom-abstraction}.
\begin{defi}[Nominal terms]
  Let $\Sigma$ be a signature. The set $\nTerm{\Sigma}_\sigma$ of \emph{nominal terms over
    $\Sigma$ of sort $\sigma$} is the domain of interpretation of the ground terms of sort
  $\sigma$ in the $\Sigma$-structure $\mathit{NT}$, that is,
  $\nTerm{\Sigma}_\sigma=\NT{\sigma}$.
\end{defi}

We sometimes write $p$, $\ell$ instead of $\NT{p}$, $\NT{\ell}$ when it is clear from the
context that we are referring to the interpretation into nominal terms of ground terms $p$
and $\ell$.

\subsection*{Nominal Terms and Nominal Algebraic Datatypes}
We check that the nominal sets $\nTerm{\Sigma}_\sigma$ coincide (up to isomorphism) with
the nominal algebraic datatypes of Definition~8.9 in~\cite{Pit13}. We first illustrate the
nominal terms by means of the signature $\Sigma$ for the $\pi$-calculus in
Example~\ref{ex:pi-calculus}.
\begin{exa}
  The $\Sigma$-structure $\mathit{NT}$ is given by the least pair $(\NT{\Proc},\NT{\Act})$
  of nominal sets satisfying the following set equations
  \begin{displaymath}
    \begin{array}{rcl}
      \NT{\Proc} &=& \NT{\One} +  \NT{\Proc}
                     + \NT{\Chan\times[\Chan]\Proc}
                     + \NT{\Chan\times\Chan\times\Proc}\\
                 & & {} + \NT{\Proc\times\Proc} + \NT{\Proc\times\Proc} +  \NT{\Proc}
                     + \NT{[\Chan]\Proc}\\
                 &=& \{()\} + \NT{\Proc}
                     + (\Atom_{\Chan}\times[\Atom_{\Chan}](\NT{\Proc}))
                     + (\Atom_{\Chan}\times\Atom_{\Chan}\times \NT{\Proc})\\
                 & & {} + (\NT{\Proc}\times\NT{\Proc}) + (\NT{\Proc}\times\NT{\Proc})
                     +  \NT{\Proc} + [\Atom_{\Chan}](\NT{\Proc}),\\[8pt]
      \NT{\Act} &=& \NT{\One} + \NT{\Chan\times\Chan} + \NT{\Chan\times\Chan}
                    + \NT{\Chan\times\Chan}\\
                 &=& \{()\} + (\Atom_{\Chan}\times\Atom_{\Chan})
                     + (\Atom_{\Chan}\times\Atom_{\Chan})
                     + (\Atom_{\Chan}\times\Atom_{\Chan}),
    \end{array}
  \end{displaymath}
  together with an equivariant function for each function symbol in $F$ (we only show a
  few)
  \begin{displaymath}
    \begin{array}{rcl}
      \NT{\nullPA} = \inj_{1} &:& \{()\} \to \NT{\Proc}\\
      \NT{\tauPA[]} = \inj_2 &:& \NT{\Proc} \to \NT{\Proc}\\
      \NT{\outPA[]{}{}} = \inj_{4} &:& \Atom_{\Chan} \times [\Atom_{\Chan}](\NT{\Proc})
                                       \to \NT{\Proc}\\
      \NT{\newPA[]}  = \inj_8 &:& [\Atom_\Chan](\NT{\Proc}) \to \NT{\Proc}\\
      \NT{\tauAA} = \inj_{1} &:& \{()\} \to \NT{\Act}\\
      \NT{\boutAA[]{}} =\inj_{4} &:& \Atom_{\Chan} \times \Atom_{\Chan} \to \NT{\Act}.
    \end{array}
  \end{displaymath}
  For example, the process $(\nu b)(\overline{a}b.0)$ is encoded as the ground term
  $\newPA[{[b](\outPA[a]{b}{\nullPA})}]$, whose interpretation in $\mathit{NT}$ is
  $\inj_8(\langle b\rangle(\inj_4(a,b,\inj_1())))$.\blockqed
\end{exa}

\begin{rem}\label{rem:nominal-not-disjoint}
  Recall that the constructor $\inj_j$ for disjoint union has the polymorphic type
  \begin{displaymath}
    \inj_j : \forall (S_1 + \cdots + S_m).S_j \to S_1 + \cdots + S_m,
    \qquad\textup{where}\ j\leq m.
  \end{displaymath}
  Therefore, a nominal term may have `polymorphic sort' and the sets of nominal terms of
  each sort may not be mutually disjoint. For instance, both ground terms $\nullPA$ and
  $\tauAA$ have the same interpretation $\inj_1()$ in $\mathit{NT}$. However, each of the
  $\NT{\nullPA}$ and $\NT{\tauAA}$ live in different components of the carrier
  $(\NT{\Proc},\NT{\Act})$ of the $T$-algebra induced by the $\Sigma$-structure
  $\mathit{NT}$ and, by all means, the sort information is never lost. Here we are not
  concerned with this technical subtlety and, at any rate, we can always determine the
  `monomorphic sort' of a given nominal term by using implicit type parameters (within
  curly braces) that fix the set $S_1 + \cdots + S_n$ over which each constructor $\inj_j$
  is universally quantified, \ie\
  $\NT{\nullPA}=\inj_1\{\NT{\Proc}\}()$.\blockqed
\end{rem}

The nominal term with implicit type parameters that corresponds to process
$(\nu b)(\overline{a}b.0)$ is
$\inj_8\{\NT{\Proc}\}(\langle
b\rangle(\inj_4\{\NT{\Proc}\}(a,b,\inj_1\{\NT{\Proc}\}())))$.

The remainder of this section shows that the nominal terms are connected to the elements
of the nominal algebraic data types of Definition~8.9 in~\cite{Pit13}. We follow closely
the exposition on initial algebraic semantics for nominal algebraic data types in~\cite{Pit13}. The reader is advised to read Sections~8.3 and 8.4 of~\cite{Pit13}
alongside.

Let $\mathbf{Nom}^n=\mathbf{Nom}\times\ldots_{\text{$n$ times}}\ldots\times\mathbf{Nom}$
be the $n$-product category and let $T:\mathbf{Nom}^n\to\mathbf{Nom}^n$ be the nominal
algebraic functor induced by a signature $\Sigma$ (see Section~8.3 of~\cite{Pit13}), which
we describe next. Given an $n$-tuple $S = (S_1,\ldots,S_n)$ of nominal sets, each sort
$\sigma$ gives rise to a nominal set $[\![\sigma]\!]S$ defined by:
\begin{displaymath}
\begin{array}{rcl}
\lbrack\!\lbrack\delta_i\rbrack\!\rbrack S &=& S_i\\
\lbrack\!\lbrack\alpha\rbrack\!\rbrack S &=& \Atom_\alpha\\
\lbrack\!\lbrack[\alpha]\sigma\rbrack\!\rbrack S &=&
[\Atom_\alpha](\lbrack\!\lbrack\sigma\rbrack\!\rbrack S)\\
\lbrack\!\lbrack\sigma_1\times\cdots\times\sigma_k\rbrack\!\rbrack S &=&
\lbrack\!\lbrack\sigma_1\rbrack\!\rbrack S\times\cdots\times
\lbrack\!\lbrack\sigma_k\rbrack\!\rbrack S.
\end{array}
\end{displaymath}
Let the sorts $\sigma_{ij}$ be such that $f_{ij}:\sigma_{ij}\to\delta_i$ are the function
symbols of signature $\Sigma$. The nominal algebraic functor $T$ has components
$T_i:\mathbf{Nom}^n\to \mathbf{Nom}$ mapping each $S=(S_1,\ldots,S_m)\in \mathbf{Nom}^n$
to $T_i S=[\![\sigma_{i1}]\!]S + \cdots + [\![\sigma_{im_i}]\!]S$, and similarly for
$n$-tuples of equivariant functions.

A $\Sigma$-structure $M$ gives rise to a $T$-algebra whose carrier is the $n$-tuple of
nominal sets $S=(M[\![\delta_1]\!],\ldots,M[\![\delta_n]\!])$ and whose morphism is the
$n$-tuple of equivariant functions $F=(F_1,\ldots,F_n)$ where
$F_i(\inj_j s)=M[\![f_{ij}]\!](s)$ for each $s\in S_i$.

\begin{thm}%
  \label{thm:nonminal-terms-noiminal-algebraic-datatypes}
  The nominal sets $\nTerm{\Sigma}_\sigma$ coincide (up to isomorphism) with the nominal
  algebraic datatypes of Definition~8.9 in~\cite{Pit13}.
\end{thm}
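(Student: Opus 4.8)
The plan is to exhibit an explicit isomorphism between the $\Sigma$-structure $\mathit{NT}$ of Definition~\ref{def:NT} and the initial $T$-algebra for the nominal algebraic functor $T$ induced by $\Sigma$, and then invoke the characterisation of the nominal algebraic datatypes of~\cite{Pit13} as the carrier of that initial algebra. Concretely, the nominal algebraic datatypes of Definition~8.9 in~\cite{Pit13} are (up to isomorphism) the carrier of the initial $T$-algebra, so by initiality it suffices to show that the $T$-algebra arising from the $\Sigma$-structure $\mathit{NT}$ is itself initial in the category of $T$-algebras. I would begin by making the translation between the two vocabularies fully precise: given the signature $\Sigma=(\Delta,A,F)$, the clauses defining $[\![\sigma]\!]S$ just above the statement match exactly the clauses defining $M[\![\sigma]\!]$ in Definition~\ref{def:sigma-structure} once one substitutes $M[\![\delta_i]\!]=S_i$, so every $\Sigma$-structure $M$ yields a $T$-algebra $(S,F)$ with $S=(M[\![\delta_1]\!],\ldots,M[\![\delta_n]\!])$ and $F_i(\inj_j s)=M[\![f_{ij}]\!](s)$, and conversely. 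This correspondence is a bijection between $\Sigma$-structures and $T$-algebras, so the two notions of morphism coincide as well.

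Next I would show that $\mathit{NT}$, as a $T$-algebra, is initial. There are two halves. For existence, given any $T$-algebra $(S,F)$ I must produce a $T$-algebra homomorphism $h:\mathit{NT}\to S$. The natural candidate is defined by structural recursion following the interpretation clauses of Definition~\ref{def:interpretation}: $h$ sends $\NT{a}$ to the corresponding atom, $\NT{\langle a\rangle p}$ to $\langle a\rangle h(p)$, products to products, and $\NT{\inj_j s}=\NT{f_{ij}}(s)$ to $F_i(\cdots)$, i.e.\ it is the map induced on interpretations of ground terms. Because $\mathit{NT}$ is defined in Definition~\ref{def:NT} as the \emph{least} tuple satisfying the fixed-point equations $\NT{\delta_i}=\NT{\sigma_{i1}}+\cdots+\NT{\sigma_{im_i}}$, every element of $\NT{\delta_i}$ arises as $\inj_j$ of something built from smaller data, which both guarantees that this recursion is well-founded and forces $h$ to be an equivariant $T$-algebra homomorphism (equivariance follows from Lemma~\ref{lem:interp-mod}). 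For uniqueness, any homomorphism must commute with each $\NT{f_{ij}}=\inj_j$ and with the abstraction and product structure, so a straightforward induction along the least-fixed-point structure of $\mathit{NT}$ shows it must agree with $h$ on every element.

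The main obstacle, and the step deserving the most care, is matching the least-fixed-point presentation of $\mathit{NT}$ in Definition~\ref{def:NT} with the way~\cite{Pit13} constructs the initial algebra: I need that the least tuple of nominal sets closed under the clauses is genuinely the initial $T$-algebra, which amounts to showing that the fixed point is reached and that no junk and no confusion occur (that the abstraction-sort components are literal atom abstractions $[\Atom_\alpha]$ rather than some quotient, and that the $\inj_j$ are jointly exhaustive and disjoint as coproduct injections). Here I would lean on the fact that $\mathbf{Nom}$ is closed under the relevant finite products, coproducts, and atom-abstraction constructions, and that $T$ is a finitary functor on a suitably cocomplete category, so its initial algebra is computed by the colimit of the $\omega$-chain $0\to T0\to T^20\to\cdots$; the least-tuple description of $\mathit{NT}$ is exactly the union of this chain. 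Identifying Pitts' nominal algebraic datatype with this same colimit then closes the argument, and the remaining verification—that the evident bijection respects the $\perm{\Atom}$-action, using equivariance of the structure maps established earlier—is routine and can be relegated to checking it preserves and reflects the algebra operations sortwise.
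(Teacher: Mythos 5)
Your proposal is correct and follows essentially the same route as the paper: identify the least-tuple presentation of $\mathit{NT}$ with the least fixed point of the nominal algebraic functor $T$, conclude that it carries the initial $T$-algebra, and then invoke Pitts' characterisation (Theorem~8.15 of~\cite{Pit13}) of the nominal algebraic datatypes as that initial algebra. The only difference is that you verify initiality directly (existence and uniqueness of the mediating homomorphism, via the colimit of the initial $\omega$-chain), whereas the paper simply observes that the structure map coincides with the identity and appeals to Lambek's result to conclude that the least fixed point is the initial $T$-algebra.
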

\begin{proof}
  Let $D=(\NT{\delta_1},\ldots,\NT{\delta_n})$ together with morphism
  $I = (I_1,\ldots,I_n)$ such that $I_i(\inj_j s)=\NT{f_{ij}}(s)$ be the $T$-algebra
  induced by $\Sigma$-structure $\mathit{NT}$. It is routine to check that $I$ maps $T(D)$
  to $D$, where $T$ is the nominal algebraic functor induced by signature $\Sigma$, and
  that the morphism $I$ coincides with the identity. Since $D$ is the least tuple
  satisfying this condition, the tuple coincides with the least fixed point of functor
  $T$. By a well known result by Lambek~\cite{Lam68}, $(D,I)$ constitutes the initial
  $T$-algebra. The theorem follows by Theorem~8.15 in~\cite{Pit13}.
\end{proof}

\section{Specifications of NRTSs}%
\label{sec-specification-nrts}
The NRTSs of Definition~\ref{def:nrts} are meant to be a model of computation for
calculi with name-binding operators and state/residual presentation. In this
section we present syntactic specifications for NRTSs. We start by defining nominal
residual signatures.

\begin{defi}[Nominal residual signature]%
  \label{def:nominal-signature}
  A \emph{nominal residual signature} (a residual signature for short) is a quintuple
  $\Sigma=(\Delta,\ASet,\sigma,\varrho,F)$ such that $(\Delta,\ASet,F)$ is a nominal
  signature and $\sigma$ and $\varrho$ are distinguished nominal sorts over $\Delta$ and
  $\ASet$, which we call \emph{state sort} and \emph{residual sort} respectively. We say
  that $\nTerm{\Sigma}_\sigma$ is the set of \emph{states} and $\nTerm{\Sigma}_\varrho$ is
  the set of \emph{residuals}.

  Let $\mathcal{T}=(S,R,\rel{})$ be an NRTS and $\Sigma=(\Delta,A,\sigma,\varrho,F)$ be a
  residual signature. We say that $\mathcal{T}$ is an NRTS \emph{over signature} $\Sigma$
  iff the sets of states $S$ and residuals $R$ coincide with the sets of nominal terms of
  state sort $\nTerm{\Sigma}_\sigma$ and residual sort $\nTerm{\Sigma}_\varrho$ respectively.
\end{defi}

Our next goal is to introduce syntactic specifications of NRTSs, which we call nominal
residual transition system specifications adapting a terminology introduced by Groote and
Vaandrager~\cite{GV92}. To this end, we will make use of residual formulas and freshness
assertions over raw terms, which are defined below.

\begin{defi}[Residual formula and freshness assertion]
  A \emph{residual formula} (a \emph{formula} for short) over a residual signature
  $\Sigma$ is a pair $(s,r)$, where $s\in\rTerm{\Sigma}_\sigma$ and
  $r\in\rTerm{\Sigma}_\varrho$. We use the more suggestive $s\rel{}r$ in lieu of $(s,r)$. A
  formula $s\rel{}r$ is \emph{ground} iff $s$ and $r$ are ground terms.

  A \emph{freshness assertion} (an \emph{assertion} for short) over a signature $\Sigma$
  is a pair $(a,t)$ where $a\in\Atom$ and $t\in\rTerm{\Sigma}$. We will write $a\fra t$ in
  lieu of $(a,t)$.  An assertion is \emph{ground} iff $t$ is a ground term.
\end{defi}
\begin{rem}
  Formulas and assertions are raw syntactic objects, similar to raw terms, which will
  occur in the rules of the nominal residual transition system specifications to be
  defined, and whose purpose is to represent respectively transitions and freshness
  relations involving nominal terms. A formula $s\rel{}r$ (resp.\ an assertion $a\fra t$)
  unifies with a ground formula $\varphi(s)\rel{}\varphi(r)$ (resp.\ a ground assertion
  $a\fra \varphi(t)$), which in turn represents a transition
  $\NT{\varphi(s)}\rel{}\NT{\varphi(r)}$ (resp.\ a freshness relation
  $a\# \NT{\varphi(t)}$).  For the assertions, notice how the symbols $\fra$, $\#$ and
  $\NT{\ }$ interact. The ground assertion $a\fra [a]a$ represents the freshness relation
  $a\#\NT{[a]a}$, which is true. On the other hand, the freshness relation $a\#[a]a$ is
  false because $a\in\supp([a]a)$.\blockqed
\end{rem}

Permutation action, substitution and the function $\fa$ extend to residual formulas and
freshness assertions in the expected way, \ie\
\begin{displaymath}
  \begin{array}{rcl}
    \peract{\pi}{(s\rel{}r)}&=&\peract{\pi}{s}\rel{}\peract{\pi}{r}\\
    \peract{\pi}{(a\fra t)}&=&\peract{\pi}{a}\fra \peract{\pi}{t}\\[8pt]
    \varphi(s\rel{}r)&=&\varphi(s)\rel{}\varphi(r)\\
    \varphi(a\fra t)&=&a\fra\varphi(t)\\[8pt]
    \fa(r\rel{} s)&=&\fa(r) \cup \fa(s)\\
    \fa(a\fra t)&=&\{a\}\cup \fa(t).
  \end{array}
\end{displaymath}
Residual formulas and freshness assertions are elements of nominal sets. The support of a
residual formula (respectively a freshness assertion) is the union of the supports of the
raw terms in it. We write $\supp(t\rel{}t')$ and $\supp(a\fra t)$ for the supports of
residual formula $t\rel{}t'$ and of freshness assertion $a\fra t$ respectively. We write
$b\#(t\rel{} t')$ and $b\#(a\fra t)$ for the freshness relations that involve atom $b$ and
residual formula $t\rel{}t'$ and freshness assertion $a\fra t$ respectively.

\begin{defi}[Nominal residual transition system specification]%
  \label{def:NRTSS}
  Let $\Sigma$ be a residual signature $(\Delta,\ASet,\sigma,\varrho,F)$. A
  \emph{transition rule} \emph{over} $\Sigma$ (a \emph{rule}, for short) is of the form
  \begin{mathpar}
    \inferrule*[right={,}]
    {\{u_i\rel{}u'_i\mid i\in I\}\quad \{a_j\fra v_j\mid j\in J\}}
    {t\rel{}t'}
  \end{mathpar}
  abbreviated as $H,\nabla/t\rel{}t'$, where $H={\{u_i\rel{}u'_i\mid i\in I\}}$ is a
  finitely supported set of formulas over $\Sigma$ (we call $H$ the set of
  \emph{premisses}) and where $\nabla=\{a_j\fra v_j\mid j\in J\}$ is a finite set of
  assertions over $\Sigma$ (we call $\nabla$ the \emph{freshness environment}). We say
  formula $t\rel{}t'$ over $\Sigma$ is the \emph{conclusion}, where $t$ is the
  \emph{source} and $t'$ is the \emph{target}. A rule is an \emph{axiom} iff it has an
  empty set of premisses. Note that axioms might have a non-empty freshness environment.

  A \emph{nominal residual transition system specification over} $\Sigma$ (abbreviated to
  NRTSS) is a set of transition rules over $\Sigma$.
\end{defi}

Permutation action and substitution extend to rules in the expected way; they are applied
to each of the formulas and freshness assertions in the rule.

Notice that the rules of an NRTSS are elements of a nominal set. The support of a rule
$H,\nabla/t\rel{}t'$ is the union of the support of $H$, the support of $\nabla$ and the
support of $t\rel{}t'$. In the sequel we write $\supp(\textsc{Ru})$ for the support of
rule \textsc{Ru}, and $a\#\textsc{Ru}$ for a freshness relation involving atom $a$ and
rule \textsc{Ru}. Observe that the set $H$ of premisses of a rule may be infinite, but its
support must be finite. However, the freshness environment $\nabla$ must be finite in
order to make the simplification rules of Definition~\ref{def:simplification} to follow
terminating. These simplification rules will be used in Section~\ref{sec-rule-format-nrts}
to define the rule format in Definition~\ref{def:alpha-conv-format}.

Let $\R$ be an NRTSS\@. We say that the formula $s\rel{}r$ \emph{unifies} with rule \Ru\ in
$\R$ iff \Ru\ has conclusion $t\rel{}t'$ and $s\rel{}r$ is a substitution instance of
$t\rel{}t'$. If $s$ and $r$ are ground terms, we also say that transition
$\NT{s}\rel{}\NT{r}$ unifies with \Ru.

\begin{defi}
  Let $a\fra t$ be a freshness assertion and $\varphi$ a ground substitution. We say that
  $\varphi(a\fra t)$ \emph{holds} iff the freshness relation $a \# \NT{\varphi(t)}$
  holds.

  Let $\nabla = \{a_j\fra t_j \mid j\in J\}$ be a freshness environment. We say that
  $\varphi(\nabla)$ \emph{holds} iff the conjunction
  $ \bigwedge_{j\in J}(a_j\# \NT{\varphi(t_j)})$ holds.
\end{defi}

\begin{defi}[Proof tree]%
  \label{def:proof-tree}
  Let $\Sigma$ be a residual signature and $\R$ be an NRTSS over $\Sigma$. A \emph{proof
    tree} in $\R$ of a transition $\NT{s}\rel{}\NT{r}$ is an upwardly branching rooted
  tree without paths of infinite length whose nodes are labelled by transitions such that
  \begin{enumerate}[label={(\roman*)},ref={(\roman*)}]
  \item the root is labelled by $\NT{s}\rel{}\NT{r}$, and
  \item if $K=\{\NT{q_i}\rel{}\NT{q_i'}\mid i\in I\}$ is the set of labels of the nodes
    directly above a node with label ${\NT{p}\rel{}\NT{p'}}$, then there exist a rule
    \begin{mathpar}
      \inferrule*
      {\{u_i\rel{}u'_i\mid i\in I\} \qquad
        \{a_j\fra v_j\mid j\in J\}}
      {t\rel{}t'}
    \end{mathpar}
    in $\R$ and a ground substitution $\varphi$ such that $\varphi(t\rel{}t')=p\rel{}p'$,
    for each $i\in I$ $\varphi(u_i\rel{}u_i')=q_i\rel{}q_i'$, and
    $\varphi(\{a_j\fra t_j\mid j\in J\})$ holds.
  \end{enumerate}
  We say that $\NT{s}\rel{}\NT{r}$ is \emph{provable} in $\R$ iff it has a proof
  tree in $\R$. The transition relation specified by $\R$ consists of all the
  transitions that are provable in $\R$.
\end{defi}

The nodes of a proof tree are labelled by transitions, which contain nominal terms (\ie\
syntactic objects that use the atom abstractions of
Definition~\ref{def:atom-abstraction}). The use of nominal terms in a proof tree captures
the convention in typical nominal calculi of considering terms `up to alpha-equivalence'.

\begin{exa}
  Consider the residual signature with base sort $\mathsf{b}$, atom sort $\mathsf{a}$, two
  function symbols $f,g$ with arity $[\mathsf{a}]\mathsf{a}\to \mathsf{b}$ and state and
  residual sorts equal to $\mathsf{b}$. Let $\R$ be the NRTSS defined by the rules:
  \begin{mathpar}
    \inferrule*[right=Ax]
    { }{g(x) \rel{} g(x)}
    \and
    \inferrule*[right=Ru {\normalsize $,\quad\textup{where}\ a,b\in \Atom_\mathsf{a}$.}] 
    {g([a]a) \rel{} g([b]b)\quad a\fra b}
    {f([a]a) \rel{} f([b]b)}
  \end{mathpar}
  The nominal term $\NT{f([a]a)}$ is equal to $\NT{f([b]b)}$, and $\NT{g([a]a)}$ is equal
  to $\NT{g([b]b)}$, so the transition $\NT{f([a]a)}\rel{} \NT{f([a]a)}$ is provable with
  the following proof tree, where rule \textsc{Ax} is instantiated using a ground
  substitution $\varphi$ such that $\varphi(x)=[a]a$:\footnote{Extending the existing
    convention to our notion of proof tree, we depict proof trees as trees of inference
    rules where the conclusion and premisses in each rule are replaced by \emph{the
      transitions denoted by} their substitution instances, and where the freshness
    assertions in each rule are replaced by \emph{the freshness relations denoted by}
    their substitution instances.}
  \begin{mathpar}
    \inferrule*[right=Ru{\normalsize .}]
    {
      \inferrule*[right=Ax]
      { }
      {\NT{g([a]a)}\rel{}\NT{g([b]b)}}
      \quad a\#b
    }
    {\NT{f([a]a)}\rel{}\NT{f([b]b)}}
  \end{mathpar}
  Intuitively, the freshness assertion $a\fra b$ in rule \textsc{Ru} is superfluous
  because it references atoms $a$ and $b$, which do not occur free in the rule (\ie\
  $a,b\not\in\fa(f([a]a)\rel{}f([b]b))$ and
  $a,b\not\in\fa(g([a]a)\rel{}g([b]b))$).\blockqed
\end{exa}

The fact that the nodes of a proof tree are labelled by nominal terms is the main
difference between our approach and previous work in nominal structural operational
semantics~\cite{CMRG12,ACGIMR}, nominal rewriting~\cite{UPG04,FG07} and nominal algebra~\cite{GM09}. In all these works, the `up-to-alpha-equivalence' transitions are explicitly
instrumented within the model of computation by adding inference rules that perform
alpha-conversion of raw terms to the specification system.

\section{Rule Formats for NRTSSs}%
\label{sec-rule-format-nrts}
This section defines two rule formats for NRTSSs that ensure that:
\begin{enumerate}[label={(\roman*)},ref={(\roman*)}]
\item an NRTSS induces an equivariant transition relation, and thus an NRTS in the sense
  of Definition~\ref{def:nrts};
\item an NRTSS induces a transition relation which, together with an equivariant function
  $\bn$, corresponds to an NTS of Definition~\ref{def:nts}~\cite{PBEGW15}. For the latter,
  we need to ensure that the induced transition relation is equivariant and satisfies
  \emph{alpha-conversion of residuals} (recall, if $p\rel{}(\ell,p')$ is provable in $\R$
  and $a$ is in the set of binding names of $\ell$, then for every atom $b$ that is fresh
  in $(\ell,p')$ the transition $p\rel{}\peract{\tr{a}{b}}{(\ell,p')}$ is also provable).
\end{enumerate}

\noindent
As a first step, we introduce a rule format ensuring equivariance of the induced
transition relation.

\begin{defi}[Equivariant format]%
  \label{def:equivariant-format}
  Let $\mathcal{R}$ be an NRTSS\@. $\mathcal{R}$ is in \emph{equivariant format} iff the
  rule $\peract{\tr{a}{b}}{\textsc{Ru}}$ is in $\mathcal{R}$, for every rule $\textsc{Ru}$
  in $\R$ and for each $a,b\in\Atom$.
\end{defi}

\begin{lem}%
  \label{lem:equivariant-format}
  Let $\mathcal{R}$ be an NRTSS in equivariant format. For every rule $\textsc{Ru}$ in
  $\mathcal{R}$ and for every permutation $\pi$, the rule $\peract{\pi}{\textsc{Ru}}$ is
  in $\mathcal{R}$.
\end{lem}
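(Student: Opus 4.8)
The plan is to proceed by induction on the number of transpositions in a factorisation of $\pi$, exploiting the fact recalled in Section~\ref{sec:preliminaries} that every permutation is the composition of finitely many transpositions, together with the composition law of the permutation action. Since the rules of an NRTSS are elements of a nominal set (as noted just after Definition~\ref{def:NRTSS}), the permutation action on rules satisfies the identity law $\peract{\iota}{\textsc{Ru}}=\textsc{Ru}$ and the composition law $\peract{(\pi_1\circ\pi_2)}{\textsc{Ru}}=\peract{\pi_1}{(\peract{\pi_2}{\textsc{Ru}})}$, which are the two algebraic identities the argument will rely on.

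Write $\pi=\tr{a_1}{b_1}\circ\cdots\circ\tr{a_n}{b_n}$ with $n\geq 0$, and induct on $n$. For the base case $n=0$ we have $\pi=\iota$, so $\peract{\iota}{\textsc{Ru}}=\textsc{Ru}$ by the identity law, and $\textsc{Ru}\in\mathcal{R}$ by hypothesis. For the inductive step, set $\pi'=\tr{a_2}{b_2}\circ\cdots\circ\tr{a_n}{b_n}$, a composition of $n-1$ transpositions, so that $\pi=\tr{a_1}{b_1}\circ\pi'$. The induction hypothesis gives $\peract{\pi'}{\textsc{Ru}}\in\mathcal{R}$. Because $\peract{\pi'}{\textsc{Ru}}$ is itself a rule of $\mathcal{R}$ and $\mathcal{R}$ is in equivariant format (Definition~\ref{def:equivariant-format}), applying the transposition $\tr{a_1}{b_1}$ to it yields $\peract{\tr{a_1}{b_1}}{(\peract{\pi'}{\textsc{Ru}})}\in\mathcal{R}$. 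By the composition law this equals $\peract{(\tr{a_1}{b_1}\circ\pi')}{\textsc{Ru}}=\peract{\pi}{\textsc{Ru}}$, completing the step.

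There is no genuine obstacle here: the result is a routine consequence of closure under transpositions plus the group structure of the permutation action. The only point requiring any care is the observation that equivariant format closes $\mathcal{R}$ not merely under transpositions of the original rules, but under transpositions of \emph{all} rules already shown to lie in $\mathcal{R}$, which is precisely what licenses applying Definition~\ref{def:equivariant-format} to the intermediate rule $\peract{\pi'}{\textsc{Ru}}$ in the inductive step.
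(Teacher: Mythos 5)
Your proof is correct and follows essentially the same route as the paper, which simply observes that the claim follows from Definition~\ref{def:equivariant-format} because every permutation factors as a finite composition of transpositions. You have merely spelled out the induction on the number of transpositions that the paper leaves implicit.
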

\begin{proof}
  The claim follows straightforwardly by Definition~\ref{def:equivariant-format}, since
  each permutation $\pi$ can be expressed as a composition of transpositions
  $\tr{a_1}{b_1}\circ\ldots\circ\tr{a_n}{b_n}$ with $n\geq 0$.
\end{proof}

\begin{thm}[Rule format for NRTSs]%
  \label{thm:rule-format-equivariance}
  Let $\R$ be an NRTSS\@. If $\R$ is in equivariant format then $\R$ induces an NRTS\@.
\end{thm}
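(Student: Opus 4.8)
The plan is to show that the transition relation $\rel{}$ specified by $\R$, which is a subset of $\nTerm{\Sigma}_\sigma\times\nTerm{\Sigma}_\varrho$, is equivariant; this is exactly what Definition~\ref{def:nrts} demands once we recall that both $\nTerm{\Sigma}_\sigma$ and $\nTerm{\Sigma}_\varrho$ are nominal sets. Since equivariance of a relation means that $(\NT{s},\NT{r})\in{\rel{}}$ iff $(\peract{\pi}{\NT{s}},\peract{\pi}{\NT{r}})\in{\rel{}}$ for every permutation $\pi$, and the two implications are interchangeable by replacing $\pi$ with $\pi^{-1}$, it suffices to prove the single implication: if $\NT{s}\rel{}\NT{r}$ is provable in $\R$, then so is $\peract{\pi}{\NT{s}}\rel{}\peract{\pi}{\NT{r}}$ for every $\pi$. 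By Lemma~\ref{lem:interp-mod} this target transition is $\NT{\peract{\pi}{s}}\rel{}\NT{\peract{\pi}{r}}$.

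The key construction is to transform a proof tree directly, rather than argue by a fragile height induction (the premiss sets may be infinite, so the tree may be infinitely branching). Given a proof tree $P$ of $\NT{s}\rel{}\NT{r}$ in the sense of Definition~\ref{def:proof-tree}, I would define $\peract{\pi}{P}$ to be the tree with the same underlying shape as $P$ but with every node label $\NT{p}\rel{}\NT{p'}$ replaced by $\peract{\pi}{\NT{p}}\rel{}\peract{\pi}{\NT{p'}}$, which equals $\NT{\peract{\pi}{p}}\rel{}\NT{\peract{\pi}{p'}}$ by Lemma~\ref{lem:interp-mod}. Relabelling preserves the tree shape, so $\peract{\pi}{P}$ is again an upwardly branching rooted tree without paths of infinite length, and its root is labelled by $\NT{\peract{\pi}{s}}\rel{}\NT{\peract{\pi}{r}}$, as required. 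What remains is to check that the local validity condition of Definition~\ref{def:proof-tree} holds at each node of $\peract{\pi}{P}$.

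At a node of $P$ witnessed by a rule $\Ru = H,\nabla/t\rel{}t'$ in $\R$ and a ground substitution $\varphi$, I would use at the corresponding node of $\peract{\pi}{P}$ the rule $\peract{\pi}{\Ru}$, which lies in $\R$ by Lemma~\ref{lem:equivariant-format}, together with the substitution $\varphi^\pi$ (it is ground because $\varphi^\pi(x)=\peract{\pi}{\varphi(x)}$ and permutations map ground terms to ground terms). Conclusion and premiss matching follow from Lemma~\ref{lem:subst-perm}: since $\varphi^\pi(\peract{\pi}{t})=\peract{\pi}{\varphi(t)}$, the instance of the permuted conclusion is precisely the permutation of the original instance, and likewise for each premiss $u_i\rel{}u_i'$, so the children labels line up. For the freshness environment $\peract{\pi}{\nabla}=\{\peract{\pi}{a_j}\fra\peract{\pi}{v_j}\mid j\in J\}$, I would combine Lemmas~\ref{lem:subst-perm} and~\ref{lem:interp-mod} to obtain $\NT{\varphi^\pi(\peract{\pi}{v_j})}=\peract{\pi}{\NT{\varphi(v_j)}}$; since $\varphi(\nabla)$ holds we have $a_j\#\NT{\varphi(v_j)}$, and equivariance of the freshness relation yields $\peract{\pi}{a_j}\#\peract{\pi}{\NT{\varphi(v_j)}}$, so $\varphi^\pi(\peract{\pi}{\nabla})$ holds.

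I expect the main obstacle to be the careful bookkeeping around the freshness environment, where permutation action, substitution, and interpretation into nominal terms all interact simultaneously: getting $\NT{\varphi^\pi(\peract{\pi}{v_j})}=\peract{\pi}{\NT{\varphi(v_j)}}$ right and then invoking equivariance of $\#$ at the level of nominal terms is the delicate point. By contrast, preservation of the tree shape and the conclusion/premiss matching are routine once Lemmas~\ref{lem:subst-perm} and~\ref{lem:interp-mod} are in hand, and the reverse implication for equivariance is immediate.
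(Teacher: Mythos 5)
Your proposal is correct and matches the paper's proof in all essentials: both obtain the permuted rule $\peract{\pi}{\textsc{Ru}}$ from Lemma~\ref{lem:equivariant-format}, instantiate it with the substitution $\varphi^\pi$, and discharge the conclusion, premisses and freshness assertions via Lemmas~\ref{lem:subst-perm} and~\ref{lem:interp-mod} together with equivariance of $\#$. The only difference is organizational --- the paper argues by induction on the height of the proof tree where you relabel the whole tree and check local validity at each node --- and Remark~\ref{rem:transposed-proof-tree} in the paper confirms that both constructions yield the same permuted proof tree.
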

\begin{proof}
  We prove that the transition relation induced by $\R$ is equivariant. That is, if
  $\NT{p}\rel{}\NT{p'}$ then $\peract{\pi}{\NT{p}}\rel{}\peract{\pi}{\NT{p'}}$ for every
  permutation $\pi$. We proceed by induction on the height of the proof tree of
  $\NT{p}\rel{}\NT{p'}$. Assume that the last rule used in this proof is
  \begin{mathpar}
    \inferrule*[right=Ru]
    {\{u_i\rel{}u'_i\mid i\in I\}
      \qquad\{a_j\fra v_j\mid j\in J\}}
    {t\rel{}t'}
  \end{mathpar}
  and that, for some ground substitution $\varphi$,
  \begin{enumerate}[label={(\roman*)},ref={(\roman*)}]
  \item the premisses $\NT{\varphi(u_i)}\rel{}\NT{\varphi(u'_i)}$ with $i\in I$ are
    provable in $\R$,
  \item the freshness relations $a_j\#\NT{\varphi(v_j)}$ with $j\in J$ hold, and
  \item $\varphi(t)\rel{}\varphi(t')=p\rel{}p'$.
  \end{enumerate}
  Since $\R$ is in equivariant format, by Lemma~\ref{lem:equivariant-format} $\R$ contains
  the rule
  \begin{mathpar}
    \inferrule*[right=Ru$_\pi$.]
    {\{\pi\cdot u_i\rel{}\pi\cdot u'_i\mid i\in I\}\qquad
      \{\peract{\pi}{a_j}\fra\pi\cdot v_j\mid j\in J\}}
    {\pi\cdot t\rel{}\pi\cdot t'}
  \end{mathpar}
  Our goal now is to show that the transition
  $\peract{\pi}{\NT{p}}\rel{}\peract{\pi}{\NT{p'}}$ is provable using rule
  $\textsc{Ru}_\pi$ and substitution $\varphi^\pi$ defined on
  page~\pageref{pg:varphi-pi}. Let $j\in J$. By Lemma~\ref{lem:subst-perm} we know that
  $\peract{\pi}{\varphi(v_j)}={\varphi^\pi(\pi\cdot v_j)}$. Moreover, since $\#$ is
  equivariant, by Lemma~\ref{lem:interp-mod}, the freshness relation
  $\peract{\pi}{a_j}\#{\NT{\varphi^\pi(\pi\cdot v_j)}}$ holds. Assume now that $i\in I$.
  We know that the premiss
  $\peract{\pi}{\NT{\varphi(u_i)}}\rel{}\peract{\pi}{\NT{\varphi(u'_i)}}$ is provable in
  $\R$ by the induction hypothesis ($I=\emptyset$ corresponds to the base case, \ie\ a
  rule without premisses). By Lemmas~\ref{lem:subst-perm} and~\ref{lem:interp-mod}, this
  premiss is equal to
  ${\NT{\varphi^\pi(\pi\cdot u_i)}} \rel{} \NT{\varphi^\pi(\pi\cdot u'_i)}$.  Therefore,
  the transition ${\peract{\pi}{\NT{p}}\rel{}\peract{\pi}{\NT{p'}}}$ is provable using
  rule \Ru$_\pi$ and substitution $\varphi^\pi$ because it is equal to
  $\NT{\varphi^\pi(\pi\cdot t)}\rel{}\NT{\varphi^\pi(\pi\cdot t')}$ by
  Lemmas~\ref{lem:subst-perm} and~\ref{lem:interp-mod}.
\end{proof}

\begin{rem}\label{rem:transposed-proof-tree}
  It is straightforward to check that the proof tree of transition
  $\NT{\peract{\tr{a}{b}}{p}}\rel{}\NT{\peract{\tr{a}{b}}{p'}}$ obtained in the proof
  above coincides with the proof tree of
  $\peract{\tr{a}{b}}{(\NT{p})}\rel{}\peract{\tr{a}{b}}{(\NT{p'})}$, where atoms $a$ and
  $b$ have been transposed. Both proof trees have the same height.\blockqed
\end{rem}

Before introducing a rule format ensuring alpha-conversion of residuals, we adapt to our
freshness environments the simplification rules and the entailment relation of
Definition~10 and Lemma~15 in~\cite{FG07}.

\begin{defi}[Simplification of freshness environments]%
  \label{def:simplification}
  Consider a signature $\Sigma$. The following rules, where $\nabla$ is a freshness
  environment over $\Sigma$, define \emph{simplification of freshness environments}:
  \begin{displaymath}
    \begin{array}[t]{rcl}
      \{a\fra b\}\cup \nabla&\Longrightarrow& \nabla\quad\text{if $a\neq b$}\\
      \{a\fra \susp{b}{\rho}\}\cup \nabla&\Longrightarrow&
         \{a\fra \rho\ b\}\cup \nabla\\
      \{a\fra \susp{(\susp{t}{\rho_1})}{\rho}\}\cup \nabla&\Longrightarrow&
         \{a\fra \susp{t}{\rho_1;\rho}\}\cup \nabla\\
      \{a\fra \susp{([b]t)}{\rho}\}\cup \nabla&\Longrightarrow&
         \{a \fra [\rho\ b](\susp{t}{\rho})\} \cup \nabla\\
      \{a\fra \susp{(t_1,\ldots,t_k)}{\rho}\}\cup\nabla&\Longrightarrow&
         \{a\fra \susp{t_1}{\rho},\ldots,a\fra \susp{t_k}{\rho}\}\cup\nabla\\
      \{a\fra \susp{(f(t))}{\rho}\}\cup\nabla&\Longrightarrow&
         \{a\fra \susp{t}{\rho}\}\cup\nabla\\
      \{a\fra [b]t\}\cup\nabla&\Longrightarrow&
         \left\{
         \begin{array}{ll}
           \{a\fra t\}\cup\nabla&\quad\textup{if}\ a\neq b\\
           \nabla&\quad\textup{otherwise}
         \end{array}\right.\\
      \{a\fra (t_1,\ldots,t_k)\}\cup\nabla&\Longrightarrow&
         \{a\fra t_i,\ldots,a\fra t_k\}\cup\nabla\\
      \{a\fra f(t)\}\cup\nabla&\Longrightarrow&\{a\fra t\}\cup\nabla. 
    \end{array}
  \end{displaymath}

  The rules define a reduction relation on freshness environments. We write
  $\nabla\Longrightarrow\nabla'$ when $\nabla'$ is obtained from $\nabla$ by applying one
  simplification rule, and $\Longrightarrow^*$ for the reflexive and transitive closure of
  $\Longrightarrow$.
\end{defi}

\begin{lem}%
  \label{lem:unique-nf}
  The relation $\Longrightarrow$ is confluent and terminating.
\end{lem}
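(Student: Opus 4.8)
The plan is to prove the two properties separately and combine them via Newman's Lemma: a terminating relation is confluent if and only if it is locally confluent. So I would first exhibit a well-founded measure witnessing termination, and then check local confluence by a straightforward case analysis on overlapping redexes.

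For termination I would map each freshness environment to a natural number that strictly decreases at every step. The delicate rules are those that push a delayed renaming inward (the cases $\susp{([b]t)}{\rho}$, $\susp{(t_1,\ldots,t_k)}{\rho}$ and $\susp{(f(t))}{\rho}$), since these neither delete an assertion nor shrink the term in the ordinary sense, together with the two product rules, which replace one assertion by several. The trick is to charge a suspension \emph{double} the weight of its body. Concretely I would define a weight $\mu$ on raw terms by structural recursion with $\mu(x)=\mu(a)=1$, $\mu([a]t)=\mu(f(t))=1+\mu(t)$, $\mu((t_1,\ldots,t_k))=1+\sum_i\mu(t_i)$, and crucially $\mu(\susp{t}{\rho})=2\mu(t)$ (ignoring $\rho$ entirely), and then set $\Phi(\nabla)=\sum_{(a\fra t)\in\nabla}\mu(t)$. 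The key computation is that every simplification rule strictly decreases $\Phi$: deleting an assertion (e.g.\ $a\fra b$ with $a\neq b$, or the $a=b$ case of the abstraction rule) drops $\Phi$; the suspension-over-abstraction rule sends $\mu(\susp{([b]t)}{\rho})=2+2\mu(t)$ to $\mu([\rho\,b](\susp{t}{\rho}))=1+2\mu(t)$, a decrease of $1$; the suspension-over-product rule sends $2+2\sum_i\mu(t_i)$ to $\sum_i 2\mu(t_i)$, a decrease of $2$; the nested-suspension rule halves $4\mu(t)$ to $2\mu(t)$; and the remaining rules simply strip a node. The doubling is exactly what pays for moving a suspension one level down while a freshly renamed constructor is created. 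Since $\Phi$ lands in $\mathbb{N}$, strict decrease at every step gives termination (set-theoretic merging of syntactically identical assertions can only decrease $\Phi$ further, so the argument is unaffected by the set semantics of $\nabla$).

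For local confluence I would first observe that for a single assertion $a\fra t$, the outermost constructor of $t$ — and, when $t$ is a suspension, the outermost constructor of its body — uniquely determines which rule, if any, applies and what its right-hand side is; hence reduction restricted to one assertion is a deterministic partial function and there are no critical pairs \emph{within} a single assertion. Since every rule acts on one assertion of a disjoint union $\{a\fra t\}\cup\nabla$ and leaves the rest untouched, two redexes sitting in distinct assertions are independent, and contracting them in either order yields the same resulting set of assertions. The only subtlety I would spell out is the one caused by working with sets rather than multisets: if contracting one assertion happens to produce an assertion syntactically equal to another redex, a couple of extra $\Longrightarrow$-steps may be needed to rejoin the two reducts, but joinability via $\Longrightarrow^{*}$ still holds. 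Local confluence then follows, and together with termination, Newman's Lemma yields confluence. The main obstacle is the termination measure: ordinary term size is invariant under the suspension-pushing rules, so the essential insight is the double weighting of suspensions, which I would double-check most carefully against the $\susp{([b]t)}{\rho}$ rule, where neither the term's size nor its number of suspensions decreases.
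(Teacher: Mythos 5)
Your proof is correct, and while the confluence half coincides with the paper's (the paper likewise observes that the rules do not overlap, so there are no critical pairs, though it does not spell out the set-versus-multiset subtlety that you rightly flag), the termination half takes a genuinely different and arguably cleaner route. The paper measures assertions by ordinary term size, notes that every rule strictly decreases the size of the affected assertion (implicitly a multiset argument for the product rules) \emph{except} $\{a\fra \susp{([b]t)}{\rho}\}\cup\nabla\Longrightarrow\{a\fra[\rho\,b](\susp{t}{\rho})\}\cup\nabla$, which preserves size, and then handles that one case by showing it is necessarily followed by a second step that shrinks or deletes the assertion --- a two-step lookahead argument that leans on confluence and is somewhat delicate to make fully rigorous. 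Your weighting $\mu(\susp{t}{\rho})=2\mu(t)$ eliminates the exceptional case entirely: every single rule strictly decreases the global natural-number measure $\Phi$, including the abstraction case ($2+2\mu(t)\mapsto 1+2\mu(t)$) and the product cases (where plain size can actually grow in total), so no multiset ordering and no two-step analysis are needed. Your verification of each rule against $\Phi$ checks out, the reduced assertions $a\fra a$, $a\fra x$, $a\fra\susp{x}{\rho}$ are exactly the $\mu$-irreducible ones, and your handling of accidental merging of syntactically equal assertions under the set semantics is sound. In short: same confluence argument, a more uniform and self-contained termination argument; what the paper's version buys is brevity and a direct appeal to the analogous lemma of Fern\'andez and Gabbay, while yours buys a single well-founded measure that makes termination immediate.
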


A freshness assertion is \emph{reduced} iff it is of the form $a\fra a$, $a \fra x$ or
$a \fra \susp{x}{\rho}$. We say that $a\fra a$ is \emph{inconsistent} and $a \fra x$ and
$a \fra \susp{x}{\rho}$ are \emph{consistent}. (Notice that assertions $a\fra x$ and
$a \fra \susp{x}{\iota}$ are syntactically different, although both represent the same
freshness relation.) An environment $\nabla$ is \emph{reduced} iff it consists only of
reduced assertions. An environment containing a freshness assertion that is not reduced
can always be simplified using one of the rules in
Definition~\ref{def:simplification}. Therefore, by Lemma~\ref{lem:unique-nf}, an
environment $\nabla$ reduces by $\Longrightarrow^*$ to a unique reduced environment, which
we call the \emph{normal form} of $\nabla$, written $\nf{\nabla}$. An environment $\nabla$
is \emph{inconsistent} iff $\nf{\nabla}$ contains some inconsistent assertion.

We write $\nf{\widetilde{\nabla}}$ for the environment obtained by replacing every
assertion $a\fra x$ in $\nf{\nabla}$ by the assertion $a\fra \susp{x}{\iota}$. Both
$\nf{\nabla}$ and $\nf{\widetilde{\nabla}}$ denote the same set of freshness
relations. Adding the identity renaming $\iota$ to variables that are not moderated
simplifies the definition of the entailment relation below.

\begin{lem}%
  \label{lem:entails}
  Let $\nabla$ be an environment over $\Sigma$ and let $\varphi$ be a ground
  substitution. Then $\varphi(\nabla)$ holds iff $\varphi(\nf{\nabla})$ holds. Moreover,
  $\varphi(\nf{\nabla})$ holds iff $\varphi(\nf{\widetilde{\nabla}})$ holds.
\end{lem}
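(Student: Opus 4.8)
The plan is to prove the two biconditionals separately: the first by a soundness-of-simplification argument together with Lemma~\ref{lem:unique-nf}, and the second by a direct computation. The key step for the first claim is a one-step soundness sublemma: whenever $\nabla\Longrightarrow\nabla'$ by a single rule of Definition~\ref{def:simplification}, then $\varphi(\nabla)$ holds iff $\varphi(\nabla')$ holds, for every ground substitution $\varphi$. Granting this, since Lemma~\ref{lem:unique-nf} guarantees that $\nabla\Longrightarrow^*\nf{\nabla}$ in finitely many steps, a straightforward induction on the length of this reduction sequence yields that $\varphi(\nabla)$ holds iff $\varphi(\nf{\nabla})$ holds. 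Because $\varphi$-satisfaction of an environment is a conjunction over its assertions, and every simplification rule has the shape $\{\cdots\}\cup\nabla\Longrightarrow\{\cdots\}\cup\nabla$ with a common context $\nabla$, it suffices in each case to analyse the modified assertion(s) alone.

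I would establish the sublemma by case analysis on which of the nine rules is applied, reducing each case to a computation of supports of interpretations. Two of the rules \emph{delete} an assertion---namely $\{a\fra b\}\cup\nabla\Longrightarrow\nabla$ with $a\neq b$, and the $a=b$ branch of $\{a\fra[b]t\}\cup\nabla\Longrightarrow\nabla$---and these are handled by showing the deleted assertion is a tautology. Since $\varphi$ fixes atoms, $\NT{\varphi(b)}=b$ has support $\{b\}$, so $a\#\NT{\varphi(b)}$ holds precisely when $a\neq b$; and since $\supp(\langle a\rangle s)=\supp(s)\setminus\{a\}$ by Definition~\ref{def:atom-abstraction}, the relation $a\#\NT{\varphi([a]t)}=\langle a\rangle\NT{\varphi(t)}$ holds unconditionally. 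The purely structural rules for products, function symbols, and the $a\neq b$ branch of the abstraction rule use that $\supp$ of a product is the union of component supports, that $\NT{f(p)}=\inj_j(\NT{p})$ preserves support, and that $\supp(\langle b\rangle s)=\supp(s)\setminus\{b\}$ with $a\neq b$ leaves $a$'s freshness unchanged.

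The main obstacle, and the heart of the sublemma, is the family of rules that simplify a moderated (suspended) assertion $a\fra\susp{\cdots}{\rho}$. Here one must keep straight that the substitution $\varphi$ commutes with every term constructor but leaves the delayed renaming $\rho$ untouched, while the interpretation collapses a suspension through the renaming action, that is $\NT{\susp{p}{\rho}}=\NT{\ren{p}{\rho}}$ by Definition~\ref{def:interpretation}. Combining these with the defining equations of the action of renaming on raw terms shows, for instance in the abstraction case, that both $\NT{\varphi(\susp{([b]t)}{\rho})}$ and $\NT{\varphi([\rho\,b](\susp{t}{\rho}))}$ reduce to $\langle\rho\,b\rangle\NT{\susp{\varphi(t)}{\rho}}$, so the two assertions denote the same freshness relation; the composition-of-suspensions, product, and function-symbol suspension rules are analogous. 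These computations are routine but require care to keep the respective roles of $\varphi$, $\rho$, and the interpretation separate.

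For the second claim I argue directly. The environment $\nf{\widetilde{\nabla}}$ differs from $\nf{\nabla}$ only in that each assertion $a\fra x$ is replaced by $a\fra\susp{x}{\iota}$. Since the identity renaming acts trivially, $\ren{\varphi(x)}{\iota}=\varphi(x)$, Definition~\ref{def:interpretation} gives $\NT{\varphi(\susp{x}{\iota})}=\NT{\ren{\varphi(x)}{\iota}}=\NT{\varphi(x)}$, so $a\fra x$ and $a\fra\susp{x}{\iota}$ denote the very same freshness relation under any ground $\varphi$. Hence $\varphi(\nf{\nabla})$ holds iff $\varphi(\nf{\widetilde{\nabla}})$ holds, which completes the proof.
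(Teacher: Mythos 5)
Your proof is correct and follows essentially the same route as the paper's: both arguments reduce to checking, rule by rule, that each simplification step preserves the denoted freshness relations, with the moderated-term cases handled by exactly the computation you give (e.g.\ both sides of the abstraction rule interpreting to $\langle\rho\,b\rangle\NT{\susp{\varphi(t)}{\rho}}$), and the second claim by the same direct calculation with $\iota$. The only difference is bookkeeping: you factor out a one-step soundness sublemma and induct on the length of the reduction sequence to $\nf{\nabla}$ (which cleanly sidesteps the fact that the abstraction-under-suspension rule does not decrease term size), whereas the paper inducts on the size of the environment.
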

The proof of Lemma~\ref{lem:entails} is in Appendix~\ref{ap:rule_formats}.

Notice that if $\nabla$ is inconsistent, then for every ground substitution $\varphi$ none
of $\varphi(\nabla)$, $\varphi(\nf{\nabla})$ and $\varphi(\nf{\widetilde{\nabla}})$ holds.

Our notion of entailment $\nabla\vdash\nabla'$ to be defined below represents that the
freshness relations in $\varphi(\nabla)$ imply the freshness relations in
$\varphi(\nabla')$. In the presence of assertions of the shape $a\fra \susp{x}{\rho}$,
checking that one environment entails another requires some care. Take the entailment
$\{a\fra \susp{x}{\rep{a}{b}}\}\vdash\{b\fra\susp{x}{\rep{b}{a}}\}$. We have
\begin{displaymath}
  \begin{array}{l}
    \peract{\tr{a}{b}}{\NT{\varphi(\susp{x}{\rep{a}{b}})}}
    = \peract{\tr{a}{b}}{\NT{\susp{\varphi(x)}{\rep{a}{b}}}}
    = \peract{\tr{a}{b}}{\NT{\ren{\varphi(x)}{\rep{a}{b}}}}\\
    = \NT{\peract{\tr{a}{b}}{(\ren{\varphi(x)}{\rep{a}{b}})}}
    = \NT{\ren{(\peract{\tr{a}{b}}{\varphi(x)})}{\peract{\tr{a}{b}}{\rep{a}{b}}}}\\
    = \NT{\ren{(\peract{\tr{a}{b}}{\varphi(x)})}{\tr{a}{b};\rep{a}{b};\tr{a}{b}}}
    = \NT{\ren{\varphi(x)}{\tr{a}{b};\tr{a}{b};\rep{a}{b};\tr{a}{b}}}\\
    = \NT{\ren{\varphi(x)}{\rep{a}{b};\tr{a}{b}}}
    = \NT{\ren{\varphi(x)}{\rep{b}{a}}}
    = \NT{\susp{\varphi(x)}{\rep{b}{a}}}
    = \NT{\varphi(\susp{x}{\rep{b}{a}})},
  \end{array}
\end{displaymath}
for every ground substitution $\varphi$. By equivariance of $\#$,
$a\#\NT{\varphi(\susp{x}{\rep{a}{b}})}$ holds iff $b\#\NT{\varphi(\susp{x}{\rep{b}{a}})}$
holds. The permutation $\tr{a}{b}$ mediates between the atoms $a$ and $b$ and between the
renamings $\rep{a}{b}$ and $\rep{b}{a}$. Definition~\ref{def:entails} below considers such
a mediating permutation.

\begin{defi}\label{def:entails}
  We say $\nabla$ \emph{entails} $\nabla'$ (written $\nabla \vdash \nabla'$) iff either
  $\nabla$ is inconsistent, or otherwise for every assertion $a_1\fra \susp{x}{\rho_1}$ in
  $\nf{\widetilde{\nabla'}}$ there exists a permutation $\pi$ and a freshness assertion
  $a_2\fra \susp{x}{\rho_2}$ in $\nf{\widetilde{\nabla}}$ such that $\pi\ a_1 = a_2$ and
  $\rho_1;\pi = \rho_2$.
\end{defi}

\begin{lem}%
  \label{lem:entails2}
  Let $\nabla$ and $\nabla'$ be environments over $\Sigma$ such that
  $\nabla\vdash \nabla'$. Then, for every ground substitution $\varphi$, if
  $\varphi(\nabla)$ holds then $\varphi(\nabla')$ holds.
\end{lem}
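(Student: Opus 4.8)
The plan is to reduce both environments to their tilde-normal forms by Lemma~\ref{lem:entails} and then verify the required freshness relations one assertion at a time, transporting each along the mediating permutation provided by Definition~\ref{def:entails}. The inconsistent case is disposed of first: if $\nabla$ is inconsistent then, by the observation following Lemma~\ref{lem:entails}, $\varphi(\nabla)$ fails for every ground substitution $\varphi$, so the implication holds vacuously. I may therefore assume that $\nabla$ is consistent, so that $\nabla\vdash\nabla'$ is witnessed by the second clause of Definition~\ref{def:entails}.

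The crux is a single algebraic identity generalising the worked computation displayed just before Definition~\ref{def:entails}: for every ground substitution $\varphi$, variable $x$, renaming $\rho$ and permutation $\pi$,
\[
  \peract{\pi}{\NT{\varphi(\susp{x}{\rho})}} = \NT{\varphi(\susp{x}{\rho;\pi})}.
\]
To prove this I first rewrite $\varphi(\susp{x}{\rho})$ as $\susp{\varphi(x)}{\rho}$ and use Definition~\ref{def:interpretation} to pass to $\NT{\ren{\varphi(x)}{\rho}}$, where $\varphi(x)$ is ground. Since $\ren{\varphi(x)}{\rho}$ is a ground term and $\pi$ is a permutation, the remark after Lemma~\ref{lem:interp-mod} gives $\peract{\pi}{\NT{\ren{\varphi(x)}{\rho}}}=\NT{\susp{(\ren{\varphi(x)}{\rho})}{\pi}}=\NT{\ren{(\ren{\varphi(x)}{\rho})}{\pi}}$; the composition law for the renaming action then collapses $\ren{(\ren{\varphi(x)}{\rho})}{\pi}$ to $\ren{\varphi(x)}{\rho;\pi}$, which is $\NT{\varphi(\susp{x}{\rho;\pi})}$ after undoing the first two rewrites. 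I expect this identity to be the main obstacle, not because the calculation is long but because permutation action and renaming action do \emph{not} agree on raw terms carrying moderated subterms; the identity must therefore be established entirely at the level of interpretations, where the equality $\NT{\susp{p}{\pi}}=\peract{\pi}{\NT{p}}$ for ground $p$ rescues the argument.

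With the identity in hand the remainder is bookkeeping. Assuming $\varphi(\nabla)$ holds, Lemma~\ref{lem:entails} yields that $\varphi(\nf{\widetilde{\nabla}})$ holds, that is, $a\#\NT{\varphi(\susp{y}{\sigma})}$ for every assertion $a\fra\susp{y}{\sigma}$ in $\nf{\widetilde{\nabla}}$. To show $\varphi(\nabla')$ holds it suffices, again by Lemma~\ref{lem:entails}, to show that $\varphi(\nf{\widetilde{\nabla'}})$ holds. Each assertion of $\nf{\widetilde{\nabla'}}$ has the form $a_1\fra\susp{x}{\rho_1}$ (consistent reduced assertions have exactly this shape once $a\fra x$ is replaced by $a\fra\susp{x}{\iota}$), so fix such an assertion. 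Definition~\ref{def:entails} supplies a permutation $\pi$ and an assertion $a_2\fra\susp{x}{\rho_2}$ in $\nf{\widetilde{\nabla}}$ with $\pi\,a_1=a_2$ and $\rho_1;\pi=\rho_2$. From $\varphi(\nf{\widetilde{\nabla}})$ we obtain $a_2\#\NT{\varphi(\susp{x}{\rho_2})}$; by the identity above, $\NT{\varphi(\susp{x}{\rho_2})}=\NT{\varphi(\susp{x}{\rho_1;\pi})}=\peract{\pi}{\NT{\varphi(\susp{x}{\rho_1})}}$; and since the freshness relation is equivariant and $\pi\,a_1=a_2$, this is equivalent to $a_1\#\NT{\varphi(\susp{x}{\rho_1})}$. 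As the chosen assertion was arbitrary, $\varphi(\nf{\widetilde{\nabla'}})$ holds, whence $\varphi(\nabla')$ holds by Lemma~\ref{lem:entails}, completing the argument.
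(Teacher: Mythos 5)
Your proposal is correct and follows essentially the same route as the paper's proof: reduce both environments to $\nf{\widetilde{\nabla}}$ and $\nf{\widetilde{\nabla'}}$ via Lemma~\ref{lem:entails}, then transport each assertion $a_1\fra\susp{x}{\rho_1}$ along the mediating permutation supplied by Definition~\ref{def:entails} and conclude by equivariance of $\#$. The only (cosmetic) differences are that you apply $\pi$ where the paper applies $\pi^{-1}$, and you justify the key identity $\peract{\pi}{\NT{\varphi(\susp{x}{\rho})}}=\NT{\varphi(\susp{x}{\rho;\pi})}$ by passing through raw terms and the remark after Lemma~\ref{lem:interp-mod} rather than by computing directly with the renaming action on interpreted terms, which is a legitimate and arguably more carefully grounded version of the same calculation.
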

\begin{cor}
  In particular, if $\emptyset \vdash \nabla$ then $\varphi(\nabla)$ holds for every
  ground substitution $\varphi$.
\end{cor}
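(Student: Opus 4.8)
The plan is to obtain the corollary as a direct specialisation of Lemma~\ref{lem:entails2}, taking the empty environment in the role of $\nabla$ and the given $\nabla$ in the role of $\nabla'$. The only non-mechanical ingredient is the observation that the empty freshness environment is satisfied by every ground substitution. First I would unfold the definition of what it means for $\varphi(\nabla)$ to hold: for an environment $\{a_j\fra v_j\mid j\in J\}$ this is the conjunction $\bigwedge_{j\in J}\bigl(a_j\#\NT{\varphi(v_j)}\bigr)$. When the environment is $\emptyset$ we have $J=\emptyset$, so this is the empty conjunction, which is vacuously true. Hence $\varphi(\emptyset)$ holds for every ground substitution $\varphi$.

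With this in hand, I would invoke Lemma~\ref{lem:entails2}. By hypothesis $\emptyset\vdash\nabla$, so the lemma (instantiated with source environment $\emptyset$ and target environment $\nabla$) tells us that for every ground substitution $\varphi$, if $\varphi(\emptyset)$ holds then $\varphi(\nabla)$ holds. Since the antecedent $\varphi(\emptyset)$ holds unconditionally by the previous paragraph, the consequent $\varphi(\nabla)$ holds for every $\varphi$, which is exactly the statement of the corollary.

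There is essentially no obstacle here: the corollary is a degenerate case of Lemma~\ref{lem:entails2}, and all the real work (the analysis of mediating permutations in Definition~\ref{def:entails} and the soundness argument for entailment) has already been discharged in that lemma. The sole additional remark needed is that an empty freshness environment imposes no constraints and is therefore trivially satisfiable, so the implication in Lemma~\ref{lem:entails2} fires for free.
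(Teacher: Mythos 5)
Your proof is correct and matches the paper's intended argument: the corollary is stated as an immediate instance of Lemma~\ref{lem:entails2} with the empty environment as the entailing side, and your only added observation---that $\varphi(\emptyset)$ is the empty conjunction and hence holds vacuously---is exactly what makes the lemma's implication fire unconditionally.
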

The proof of Lemma~\ref{lem:entails2} is in Appendix~\ref{ap:rule_formats}.

We are interested in NTSs~\cite{PBEGW15}, which consider signatures with base sorts $\Act$
(for actions) and $\Proc$ (for processes), with a single atom sort $\Chan$ and with source
and residual sorts $\Proc$ and $\Act\times\Proc$ respectively. We let
$\SigmaNTS$\label{pag:sigma-nts} be any such signature parametric on a set $F$ of
function symbols that we keep implicit.\label{pag:sigma_nts} We let
$\bn:\nTerm{\Sigma}_\Act\to\pset_\omega(\Atom_\Chan)$ be the binding-names function of a
given NTS\@. From now on we restrict our attention to the NTS of~\cite{PBEGW15} (without
predicates), and the definitions and results to come apply to NRTS/NRTSS over a signature
$\SigmaNTS$. We require that the rules of an NRTSS only contain ground actions $\ell$ and
therefore function $\bn$ is always defined over $\NT{\ell}$. (Recall that we write
$\bn(\ell)$ instead of $\bn(\NT{\ell})$ since it is clear in this context that the $\ell$
stands for a nominal term.) The rule format that we introduce in
Definition~\ref{def:alpha-conv-format} relies on identifying the rules that give rise to
transitions with actions $\ell$ such that $\bn(\ell)$ is non-empty, which are the
transitions that meet the conditions of the property of alpha-conversion of residuals. To
this end, we adapt the notion of strict stratification from~\cite{FV03,AFGI17}.

\begin{defi}[Partial strict stratification]%
  \label{def:partial-strict stratification}
  Let $\mathcal{R}$ be an NRTSS over a signature $\SigmaNTS$ and $\bn$ be a binding-names
  function. Let $S$ be a partial map from pairs of ground processes and actions to ordinal
  numbers. $S$ is a \emph{partial strict stratification of $\mathcal{R}$ with respect to
    $\bn$} iff
  \begin{enumerate}[label={(\roman*)},ref={(\roman*)}]
  \item $S(\varphi(t), \ell) \not= \bot$, for every rule in $\R$ with conclusion
    $t \rel{} (\ell, t')$ such that $\bn(\ell)$ is non-empty and for every ground
    substitution $\varphi$, and
  \item $S(\varphi(u_i),\ell_i)<S(\varphi(t),\ell)$ and $S(\varphi(u_i),\ell_i)\not=\bot$,
    for every rule $\Ru$ in $\R$ with conclusion $t\rel{}(\ell,t')$ such that
    $S(\varphi(t),\ell)\not=\bot$, for every premiss $u_i\rel{}(\ell_i,u'_i)$ of $\Ru$ and
    for every ground substitution $\varphi$.
  \end{enumerate}
  We say a pair $(p,\ell)$ of ground process and action \emph{has order} $S(p,\ell)$.
\end{defi}
The choice of $S$ determines which rules will be considered by the rule format for NRTSSs
of Definition~\ref{def:alpha-conv-format} below, which guarantees that the induced
transition relation satisfies alpha-conversion of residuals and, therefore, the associated
transition relation together with function $\bn$ are indeed an NTS\@.
We will intend the map $S$ to be such that the only rules whose source and label of the
conclusion have defined order are those that may take part in proof trees of transitions
with some binding atom in the action.

\begin{defi}[Alpha-conversion-of-residuals format]%
  \label{def:alpha-conv-format}
  Let $\mathcal{R}$ be an NRTSS over a signature $\SigmaNTS$, $\bn$ be a binding-names
  function and $S$ be a partial strict stratification of $\mathcal{R}$ with respect to
  $\bn$. Assume that all the actions occurring in the rules of $\mathcal{R}$ are
  ground. Let
  \begin{mathpar}
    \inferrule*[right=Ru] {\{u_i\rel{}(\ell_i,u'_i)\mid i\in I\} \qquad\nabla}
    {t\rel{}(\ell,t')}
  \end{mathpar}
  be a rule in $\mathcal{R}$. Let $D$ be the set of variables that occur in the source $t$
  of \textsc{Ru} but do not occur in the premisses $u_i\rel{}(\ell_i,u'_i)$ with $i\in I$,
  the environment $\nabla$ or the target $t'$ of the rule. The rule \textsc{Ru} is in
  \emph{alpha-conversion-of-residuals format with respect to $S$} (\emph{ACR format with
    respect to $S$} for short) iff for each ground substitution $\varphi$ such that
  $S(\varphi(t),\ell)\neq\bot$, there exists a ground substitution $\gamma$ such that
  $\dom(\gamma)\subseteq D$, and for every atom $a$ in the set
  $\{c\mid \nf{\{c\fra t\}}\not=\emptyset\}$ and for every atom $b\in\bn(\ell)$, the
  following hold:
  \begin{enumerate}[label={(\roman*)},ref={(\roman*)}]
  \item\label{it:one} $\{a\fra t'\} \cup \nabla \vdash \{a\fra u'_i \mid i\in I\}$,
  \item\label{it:two}
    $\{a\fra t'\} \cup \nabla \cup \{a\fra u_i \mid i\in I\} \vdash \{a\fra \gamma(t)\}$,
    and
  \item\label{it:three}
    $\nabla\cup\{b\fra u_i\mid i\in I \land b\in \bn(\ell_i)\} \vdash \{b\fra\gamma(t)\}$.
  \end{enumerate}
  An NRTSS $\mathcal{R}$, together with a binding-names function $\bn$, is in \emph{ACR
    format with respect to a partial strict stratification $S$} iff $\mathcal{R}$ is in
  equivariant format and all the rules in $\mathcal{R}$ are in ACR format with respect to
  $S$.
\end{defi}

Given a transition $p\rel{}(\ell,q)$ that unifies with the conclusion of \Ru, the rule
format ensures that any atom $a$ that is fresh in $(\ell, q)$ is also fresh in $p$, and
also that the binding atom $b$ is fresh in $p$.\label{pag:fresh-in-source} We have
obtained the constraints of the rule format by considering the variable flow in each node
of a proof tree and the freshness relations that we want to ensure. Constraints~(i) and
(ii) cover the case for the freshness relation $a\#p$ and Constraint~(iii) covers the case
for the freshness relation $b\#p$. The purpose of substitution $\gamma$ is to ignore the
variables that occur in the source of a rule but are dropped everywhere else in the rule.
Constraints~(i) and (ii) are not required for atoms $a$ that for sure are fresh in $p$,
and this explains why the $a$ in the rule format ranges over
$\{c\mid \nf{\{c\fra t\}}\not=\emptyset\}$. For example, take the instance of rule
\textsc{Res} in Figure~\ref{fig:early-pi-NTS} from Section~\ref{sec:early-pi-calculus}
with $\ell=\boutAA[a]{b}$. Condition (i)
\begin{displaymath}
 \{c\fra (\boutAA[a]{b},\newPA[{[c]y}]),
 c\fra \boutAA[a]{b}\} \vdash\{c\fra (\boutAA[a]{b},y)\}
\end{displaymath}
does not hold because $c\fra [c]y$ does not entail that $c\fra y$. However, $c$ is fresh
in $\NT{\newPA[{[c]p}]}$ even if it is not fresh in $\NT{p}$.
\begin{thm}[Rule format for NTSs]%
  \label{the:alpha-conversion}
  Let $\mathcal{R}$ be an NRTSS over a signature $\SigmaNTS$, $\bn$ be a binding-names
  function and $S$ be a partial strict stratification of $\mathcal{R}$ with respect to
  $\bn$.  If $\mathcal{R}$ is in ACR format with respect to $S$ then the NRTS induced by
  $\mathcal{R}$ and $\bn$ constitute an NTS---that is, the transition relation induced by
  $\mathcal{R}$ is equivariant and satisfies alpha-conversion of residuals.
\end{thm}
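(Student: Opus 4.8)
The equivariance half of the statement is immediate: the \emph{ACR format} of Definition~\ref{def:alpha-conv-format} explicitly requires the \emph{equivariant format}, so Theorem~\ref{thm:rule-format-equivariance} already gives that the transition relation induced by $\mathcal{R}$ is equivariant, hence an NRTS. All the work therefore goes into \emph{alpha-conversion of residuals} (Definition~\ref{def:nts}). Fix a provable transition $p\rel{}(\ell,p')$, an atom $a\in\bn(\ell)$ and an atom $b$ with $b\#(\ell,p')$; the goal is to show that $p\rel{}\peract{\tr{a}{b}}{(\ell,p')}$ is provable. Since $a\in\bn(\ell)$, the set $\bn(\ell)$ is non-empty, so clause~\ref{it:one} of the partial strict stratification (Definition~\ref{def:partial-strict stratification}) gives $S(p,\ell)\neq\bot$; this supplies a well-founded measure, and clause~\ref{it:two} ensures that along any proof tree the order strictly decreases from each conclusion to each of its premisses. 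I would therefore argue by well-founded induction on $S(p,\ell)$.

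The plan is to build the required derivation by re-using the last rule of a proof tree of $p\rel{}(\ell,p')$ after transposing $a$ and $b$. Let that rule be \Ru, with conclusion $t\rel{}(\ell,t')$ (recall that all actions in the rules are ground), premisses $\{u_i\rel{}(\ell_i,u'_i)\mid i\in I\}$ and freshness environment $\nabla$, instantiated by a ground substitution $\varphi$ with $\varphi(t)=p$, $\varphi(t')=p'$, $\varphi(\nabla)$ holding and each $\varphi(u_i)\rel{}(\ell_i,\varphi(u'_i))$ provable. By the equivariant format and Lemma~\ref{lem:equivariant-format}, $\peract{\tr{a}{b}}{\Ru}$ is again a rule of $\mathcal{R}$, and I would apply it under the substitution $\chi$ that acts as $\varphi^{\tr{a}{b}}$ on every variable occurring in $t'$, in a premiss, or in $\nabla$, and as $\varphi$ on the variables in the set $D$ of Definition~\ref{def:alpha-conv-format} (those occurring only in the source). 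For the target, the premisses and the environment---none of which mention a variable of $D$---Lemma~\ref{lem:subst-perm} rewrites the effect of $\chi$ as the permutation $\peract{\tr{a}{b}}{(\cdot)}$: the target becomes $\peract{\tr{a}{b}}{p'}$, the environment $\peract{\tr{a}{b}}{\varphi(\nabla)}$ still holds by equivariance of $\#$, and each premiss becomes $\peract{\tr{a}{b}}{\left(\varphi(u_i)\rel{}(\ell_i,\varphi(u'_i))\right)}$, which is provable by the equivariance of Theorem~\ref{thm:rule-format-equivariance}. Everything then reduces to the single equation $\chi(\peract{\tr{a}{b}}{t})=p$, i.e.\ the source must be left untouched by the transposition.

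Proving that source equation is the inductive core, and it is convenient to strengthen the induction hypothesis so that, for every provable transition of strictly smaller order, every atom fresh in its residual is fresh in its source and every binding atom of its action is fresh in its source---both up to the dropped variables of the rule used to derive that premiss. Because $\chi$ keeps the variables of $D$ fixed, $\peract{\tr{a}{b}}{}$ can disturb the source only through a variable $x\notin D$ (where $\chi(x)=\varphi^{\tr{a}{b}}(x)$) or through an explicit atom of $t$; the equation thus follows once $a$ and $b$ are fresh in $\NT{\varphi(x)}$ for every such flowing $x$ (so that $\peract{\tr{a}{b}}{\NT{\varphi(x)}}=\NT{\varphi(x)}$) and $\peract{\tr{a}{b}}{}$ fixes the free atoms of $t$, the abstraction positions remaining alpha-equivalent. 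These freshness facts are exactly what Conditions~\ref{it:one}--\ref{it:three} of Definition~\ref{def:alpha-conv-format} deliver through Lemma~\ref{lem:entails2}: starting from $b\#\NT{\varphi(t')}$ and $\varphi(\nabla)$, Condition~\ref{it:one} yields $b$-freshness of the premiss targets $\NT{\varphi(u'_i)}$; the induction hypothesis, applied to the premisses, propagates $b$-freshness from premiss residuals to premiss sources; Condition~\ref{it:two} then yields $b$-freshness of the cleaned source $\gamma(t)$; and Condition~\ref{it:three}, fed by the induction hypothesis on the premisses whose action binds $a$, yields $a$-freshness of $\gamma(t)$. The substitution $\gamma$, with $\dom(\gamma)\subseteq D$, is precisely what lets these conditions speak only about the flowing part of the source while ignoring the dropped variables that $\chi$ carries through unchanged.

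The main obstacle is the honest bookkeeping of two mismatches. First, Conditions~\ref{it:one}--\ref{it:three} are phrased over $\gamma(t)$ rather than $t$, so the induction can control the source only \emph{up to} the dropped variables $D$; the construction must be set up so that exactly these variables are the ones $\chi$ leaves alone, and one must verify---this is the very definition of $D$---that dropping them affects neither the target, nor the premisses, nor $\nabla$. Second, to push $b$-freshness from a premiss residual to its source through the induction hypothesis one needs $b$ fresh in the premiss \emph{action} $\ell_i$, which is not granted by $b\#(\ell,p')$ alone. I expect to dispatch this with the standard nominal ``some/any fresh'' argument: prove the statement first for an atom chosen fresh for the whole (finitely supported) rule, so that it avoids every $\ell_i$ and the remaining premisses---those with the atom in $\bn(\ell_i)$---are handled by the binding-atom clause of the induction hypothesis; then transfer the conclusion to the given $b$ by composing transpositions and re-invoking equivariance, using that $\peract{\tr{a}{b}}{(\ell,p')}$ and the three-cycle image of $(\ell,p')$ agree on $\supp((\ell,p'))$ when the auxiliary atom is fresh. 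Reconciling these two points, together with the alpha-equivalence handling at abstraction positions, is where the real care lies; the rest is the routine substitution-and-equivariance algebra furnished by Lemmas~\ref{lem:subst-perm} and~\ref{lem:entails2}.
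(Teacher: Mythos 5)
Your proposal is correct and follows essentially the same route as the paper's proof: induction on the partial strict stratification, with Conditions~\ref{it:one}--\ref{it:three} of Definition~\ref{def:alpha-conv-format} pushed through Lemmas~\ref{lem:entails} and~\ref{lem:entails2} to establish that both the fresh atom and the binding atom are fresh in the $\gamma$-cleaned source, after which the equivariant format and the invariance of the source under the transposition finish the argument. Two remarks on where your bookkeeping diverges from the paper's. First, the ``obstacle'' you identify about needing the fresh atom to avoid the premiss actions $\ell_i$ does not arise in the paper's proof, because the statement proved by induction there is deliberately weaker than full ``fresh in the residual implies fresh in the source'': it is only $a\#\NT{\varphi(t')}\implies a\#\NT{\varphi(\gamma(t))}$, a claim about the \emph{process} components alone, and Conditions~\ref{it:one} and~\ref{it:two} mention only $t'$ and the $u'_i$, never the actions. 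With that formulation the induction goes through for the given atom directly and your some/any detour (which would also force you to repair the source after the extra transposition, since the fresh atom need not be fresh in $p$ itself, only in $\gamma(t)$) is unnecessary. Second, your mixed substitution $\chi$ and the per-variable freshness claim are a slightly mismatched granularity: the format delivers whole-term freshness $a,b\#\NT{\varphi(\gamma(t))}$, not freshness in each $\NT{\varphi(x)}$ separately, and because of abstraction positions in $t$ the two are not interchangeable. The paper instead passes from $t$ to $\gamma(t)$ and back via Lemma~\ref{lem:gamma}, applies equivariance to the whole transition with source $\NT{\varphi(\gamma(t))}$, and uses only the whole-term freshness fact; adopting that packaging would let you drop the per-variable analysis. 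You should also make explicit the easy case where the fresh atom lies outside $\{c\mid\nf{\{c\fra t\}}\neq\emptyset\}$, for which the format imposes no conditions but $\vdash\{a\fra t\}$ holds outright.
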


\begin{proof}[Sketch of the proof]
  Given a transition $\NT{\varphi(t)}\rel{}\NT{\varphi(\ell,t')}$, we first prove the
  freshness relations $a\#\NT{\varphi(\gamma(t))}$ and $b\#\NT{\varphi(\gamma(t))}$, for
  each $a \in \Atom\setminus \{c\in\supp(t)\mid {\nf{\{c\fra t\}}}=\emptyset\}$ and for
  every atom $b\in\bn(\ell)$. Both relations are proven by induction on
  $S(\varphi(\gamma(t)),\ell)$, and by analysing the variable flow in the rule unifying
  with $\varphi(t)\rel{}\varphi(\ell,t')$. For the first relation, we assume
  $a\#\NT{\varphi(t')}$, use Constraint~(i) to prove that $a\#\NT{\varphi(u'_i)}$ for each
  target $u'_i$ of a premiss, apply the induction hypothesis to obtain
  $a\#\NT{\varphi(\gamma(u_i))}$ for each source of a premiss $u_i$, and use
  Constraint~(ii) to conclude that $a\#\NT{\varphi(\gamma(t))}$. For the second relation,
  the induction hypothesis ensures that $b\#\NT{\varphi(\gamma(u_i))}$ for each source
  $u_i$ of a premiss having $b$ as a binding name, and we use Constraint~(iii) to conclude
  that $b\#\NT{\varphi(\gamma(t))}$. From these two freshness relations it is
  straightforward to prove that
  $\NT{\varphi(t)}\rel{}\peract{\tr{a}{b}}{\NT{\varphi((\ell,t'))}}$ and we are done.
\end{proof}

The full proof of Theorem~\ref{the:alpha-conversion} is in Appendix~\ref{ap:rule_formats}.

\section{Example of Application of the ACR-Format to the \texorpdfstring{$\pi$}{pi}-Calculus}%
\label{sec:example-nts}

In this section we consider two different semantics of the $\pi$-calculus. These semantics
differ in the moment at which substitution is performed at input processes. In the
\emph{early} semantics, substitution is performed whenever a process makes an input
transition. To wit, an input process $\inPA[a]{[c]p}$ can perform a transition to a
process $\ren{p}{\rep{b}{c}}$ that is obtained from $p$ by renaming the channel name $c$
with a channel name $b$ received through channel $a$.

\sloppy
In the \emph{late} semantics, substitution is postponed to the moment when an input
process and an output process synchronise. For instance, a parallel composition
$\parPA[{\inPA[a]{[c]p}}]{\outPA[a]{b}{q}}$ can perform a transition to
$\parPA[{\ren{p}{\rep{b}{c}}}]{q}$ whose left component is obtained from $p$ by renaming
the channel name $c$ with a channel name $b$ received through channel $a$.

\fussy
\subsection{Early Semantics of the \texorpdfstring{$\pi$}{pi}-Calculus}%
\label{sec:early-pi-calculus}
Consider the NRTSS $\RE$ in Figure~\ref{fig:early-pi-NTS} for the early
semantics of the $\pi$-calculus~\cite{MPW92} over the residual signature $\SigmaNTS$ as
defined on page~\pageref{pag:sigma-nts} of Section~\ref{sec-rule-format-nrts}, where $F$
is the set of function symbols from Example~\ref{ex:pi-calculus}. Omitted rules
\textsc{EParR}, \textsc{EParResR}, \textsc{ECommR}, \textsc{ECloseR} and \textsc{SumR} are
respectively the symmetric version of rules \textsc{EParL}, \textsc{EParResL},
\textsc{ECommL}, \textsc{ECloseL} and \textsc{SumL}.

\begin{figure}[ht]
  \begin{mathpar}
    \inferrule*[right=EIn]
    { }
    {\inPA[a]{[b]x}\rel{}\resAF{\inAA[a]{c}}{\susp{x}{\rep{b}{c}}}}
    \\
    \inferrule*[right=Out]
    { }
    {\outPA[a]{b}{x}\rel{}\resAF{\outAA[a]{b}}{x}}
    \and
    \inferrule*[right=Tau]
    { }
    {\tauPA[x]\rel{}\resAF{\tau}{x}}
    \and
    \inferrule*[right=EParL,
                left=$\ell\not\in\{\mathit{boutA}(a{,}\,b){,} \mid a{,}b\in\Atom_{\Chan}\}$] 
    {x_1\rel{}\resAF{\ell}{y_1}}
    {\parPA[x_1]{x_2} \rel{}\resAF{\ell}{(\parPA[y_1]{x_2})}}
    \\
    \inferrule*[right=EParResL]
    {x_1 \rel{} (\boutAA[a]{b},y_1)\qquad b\fra x_2}
    {\parPA[x_1]{x_2} \rel{} (\boutAA[a]{b},(\parPA[y_1]{x_2}))}
    \\
    \inferrule*[right=ECommL]
    {x_1\rel{}\resAF{\outAA[a]{b}}{y_1}
      \qquad x_2\rel{}\resAF{\inAA[a]{b}}{y_2}}
    {\parPA[x_1]{x_2}\rel{}\resAF{\tauAA}{(\parPA[y_1]{y_2})}}
    \\
    \inferrule*[right=ECloseL]
    {x_1\rel{}\resAF{\boutAA[a]{b}}{y_1}
      \qquad x_2\rel{}\resAF{\inAA[a]{b}}{y_2} \qquad b\fra x_2}
    {\parPA[x_1]{x_2}\rel{}\resAF{\tauAA}{\newPA[{[b](\parPA[y_1]{y_2})}]}}
    \\
    \inferrule*[right=SumL]
    {x_1\rel{}\resAF{\ell}{y_1}}
    {\sumPA[x_1]{x_2} \rel{}\resAF{\ell}{y_1}}
    \and
    \inferrule*[right=Rep]
    {x\rel{}\resAF{\ell}{y}}
    {\repPA[x] \rel{}\resAF{\ell}{(\parPA[y]{\repPA[x]})}}
    \\
    \inferrule*[right=ERepComm]
    {x\rel{}\resAF{\outAA[a]{b}}{y_1}\qquad x\rel{}\resAF{\inAA[a]{b}}{y_2}}
    {\repPA[x]\rel{}\resAF{\tauAA}{\parPA[{\parPA[y_1]{y_2}}]{\repPA[x]}}}
    \\
    \inferrule*[right=ERepClose]
    {x\rel{}\resAF{\boutAA[a]{b}}{y_1}
      \qquad x\rel{}\resAF{\inAA[a]{b}}{y_2}\qquad b\fra x}
    {\repPA[x]\rel{}\resAF{\tauAA}{{\parPA[{\newPA[{[b](\parPA[y_1]{y_2})}]}]{\repPA[x]}}}}
    \\
    \inferrule*[right=Open]
    {x\rel{}\resAF{\outAA[a]{b}}{y}\qquad b\fra a}
    {\newPA[{[b]x}]\rel{}\resAF{\boutAA[a]{b}}{y}}
    \and
    \inferrule*[right=Res]
    {x\rel{}\resAF{\ell}{y}\qquad b\fra \ell}
    {\newPA[{[b]x}] \rel{}\resAF{\ell}{\newPA[{[b]y}]}}
  \end{mathpar}
  \begin{center}
    where $a,b,c\in \Atom_\Chan$ and $\ell$ is a ground action.
  \end{center}
  \caption{NRTSS $\RE$ for the early $\pi$-calculus.}%
  \label{fig:early-pi-NTS}
\end{figure}

In the rule \textsc{EIn}, the moderated term $\susp{x}{\rep{b}{c}}$ is used in order to
indicate that the renaming $\rep{b}{c}$ will be performed over the term substituted for
variable $x$.

The rule \textsc{ECloseL} specifies the interaction of a process like
$\NT{\newPA[{[b](\outPA[a]{b}{p})}]}$, which exports a private channel name $b$ through
channel $a$, composed in parallel with an input process such as $\NT{\inPA[a]{[c]q}}$ that
reads through channel $a$. The private name $b$ is exported and the resulting process
$\NT{\newPA[{[b](\parPA[p]{\peract{\tr{c}{b}}{q}})}]}$ is the parallel composition of
processes $p$ and $q$ where atom $b$ is restricted. For illustration, consider the raw
terms $t\equiv \newPA[{[b](\outPA[a]{b}{p})}]$ and $t'\equiv(\boutAA[a]{b},p)$. The
transition $\NT{t}\rel{}\NT{t'}$ is provable in $\RE$ by the following proof tree:
\begin{mathpar}
  \inferrule*[right=Open{\normalsize .}]
  {
    \inferrule*[right=Out]
    { }
    {\NT{\outPA[a]{b}{p}}\rel{}\NT{(\outAA[a]{b},p)}}\\
    b\#a
  }
  {\NT{\newPA[{[b](\outPA[a]{b}{p})}]}\rel{}\NT{(\boutAA[a]{b},p)}}
\end{mathpar}

Notice that the nodes of the proof tree above are labelled by transitions involving
nominal terms. Therefore, if we were to start with the raw term
$q\equiv\newPA[{[c](\outPA[a]{c}{p})}]$ where $c\# (a,p)$---which is alpha-equivalent to
$t$---then the transition $\NT{q}\rel{}\NT{t'}$ would have the same proof tree as above,
since $\NT{t}$ and $\NT{q}$ are the same nominal term.

We use the rule format of Definition~\ref{def:alpha-conv-format} to show that $\RE$,
together with the equivariant function $\bnE$ such that $\bnE(\boutAA[a]{b})=\{b\}$, and
$\bnE(\ell)=\emptyset$ otherwise, specifies an NTS\@. We consider the following partial
strict stratification
\begin{displaymath}
\begin{array}{rcl}
S(\outPA[a]{b}{p},\outAA[a]{b})&=&0\\
S(\parPA[p]{q},\ell)&=&1+\max\{S(p,\ell),S(q,\ell)\}\\
S(\sumPA[p]{q},\ell)&=&1+\max\{S(p,\ell),S(q,\ell)\}\\
S(\repPA[p],\ell)&=&1+S(p,\ell)\\
S(\newPA[{[c]p}],\ell)&=&1+S(p,\ell) \quad \textup{if}~c \#\ell\\
S(\newPA[{[b]p}],\boutAA[a]{b})&=&1+S(p,\outAA[a]{b})\\
S(p,\ell')&=&\bot\quad \text{otherwise}
\end{array}
\end{displaymath}
where $a,b\in\Atom_{\Chan}$ and
$\ell\in\{\boutAA[a]{b},\outAA[a]{b}\mid a,b\in\Atom_{\Chan}\}$. Operators $\max$ and $+$
above are extended with $\bot$ in the following way:
\begin{displaymath}
  \begin{array}{rcl}
    \max(\{s_1,\ldots,s_n\} \cup \{\bot\}) &=& \max\{s_1,\ldots,s_n\}\\
    \max\{\bot\} &=& \bot\\
    \bot + s &=& \bot\\
    s + \bot &=& \bot.
  \end{array}
\end{displaymath}

We check that $\RE$, together with the binding-names function $\bnE$, is in ACR format
with respect to $S$ as follows. First of all, notice that, from the definition of $S$, we
have that $S(p,\tauAA)=S(p,\inAA[a]{b})=\bot$, for each $p$ and $a,b \in\Atom_{\Chan}$.
Observe that $S$ meets Definition~\ref{def:partial-strict stratification}(i) because a 
formula with either action $\tauAA$ or $\inAA[a]{b}$ does not take part in any proof tree
that proves a transition whose action has binding names. Therefore, the only rules in
$\RE$ whose sources and actions unify with pairs of processes and actions that have
defined order are \textsc{Out}, \textsc{Open} and \textsc{EParResL}, the instance of rule
\textsc{EParL} where $\ell=\outAA[a]{b}$, and the instances of rules \textsc{SumL},
\textsc{Rep} and \textsc{Res} where $\ell\in\{\boutAA[a]{b},\outAA[a]{b}\}$ (and the
corresponding instances of the symmetric versions \textsc{EParResR}, \textsc{EParR} and
\textsc{SumR}, which are omitted in the excerpt and will not be checked). Observe that $S$
meets Definition~\ref{def:partial-strict stratification}(ii) because for each rule whose 
conclusion has either action $\boutAA[a]{b}$ or $\outAA[a]{b}$, the order of the ground
transition that unifies with its conclusion is always bigger than the order of the ground
transitions that unify with its premisses.

For rule \textsc{Out}, we have an empty set of premisses and the set $D$ of variables that
are in $\supp(\outPA[a]{b}{x})$ but are not in $\supp(\outAA[a]{b},x)$ is empty. Therefore
we can do away with substitution~$\gamma$. Every atom $c$ is such that
$\nf{\{c\fra \outPA[a]{b}{x}\}}\not=\emptyset$, and the set $\bnE(\outAA[a]{b})$ is
empty. We only need to check that for every atom $c$, the obligation
$\{c\fra (\outAA[a]{b},x)\}\vdash\{c\fra \outPA[a]{b}{x}\}$ holds. For atoms
$c\in\supp(\outAA[a]{b},x)$ this obligation vacuously holds, and therefore it suffices to
pick an atom $c$ fresh in the rule and check that
$\{c\fra (\outAA[a]{b},x)\}\vdash\{c\fra \outPA[a]{b}{x}\}$, which simplifies to
$\{c\fra\susp{x}{\iota}\} \vdash\{c\fra\susp{x}{\iota}\}$. The permutation $\iota$
witnesses that this entailment trivially holds as in Definition~\ref{def:entails}(i). 

For rule \textsc{Open} the set $D$ is empty and every atom $c\#b$ is such that
$\nf{\{c\fra \newPA[{[b]x}]\}}\not=\emptyset$. It suffices to pick atom $c$ fresh in the
rule (and therefore different from $b$) and check that
\begin{displaymath}
\begin{array}{c}
\{c\fra (\boutAA[a]{b},y),b\fra a\}
\vdash\{c\fra (\boutAA[a]{b},y)\}\qquad\text{and}\\
\{c\fra (\boutAA[a]{b},y), b\fra a, c\fra x\}
\vdash\{c\fra \newPA[{[b]x}]\}
\qquad\text{and}\\
\{b\fra x, b \fra a\}\vdash\{b\fra \newPA[{[b]x}]\},
\end{array}
\end{displaymath}
which holds because $b\fra \newPA[{[b]x}]$ reduces to the empty set.

For rule \textsc{EParResL} we have premiss $x_1\rel{}(\boutAA[a]{b},y_1)$ and the set $D$
is empty. Every atom $c$ is such that $\nf{\{c\fra \parPA[x_1]{x_2}\}}\not=\emptyset$ and
the set $\bnE(\boutAA[a]{b})$ contains atom $b$. We check that
\begin{displaymath}
\begin{array}{c}
\{c\fra (\boutAA[a]{b},\parPA[y_1]{x_2}),b\fra x_2\}
\vdash\{c\fra (\boutAA[a]{b}, y_1)\}\qquad\text{and}\\
\{c\fra (\boutAA[a]{b},\parPA[y_1]{x_2}),b\fra x_2,
c\fra x_1\}\vdash\{c\fra \parPA[x_1]{x_2}\}\qquad\text{and}\\
\{b\fra x_1,b\fra x_2\}\vdash\{b\fra \parPA[x_1]{x_2}\}.
\end{array}
\end{displaymath}
Atom $c$ is either fresh in the rule, or otherwise $c=a$ or $c=b$. In all three cases,
checking the obligations above is straightforward.

Consider the instance of rule \textsc{EParL} where $\ell=\outAA[a]{b}$. That rule instance
has premiss $x_1\rel{}(\outAA[a]{b},y_1)$ and the set $D$ is empty. Every atom $c$ is such
that $\nf{\{c\fra \parPA[x_1]{x_2}\}}\not=\emptyset$ and the set $\bnE(\outAA[a]{b})$ is
empty. We consider the three cases over $c$ as before and check that
\begin{displaymath}
\begin{array}{c}
\{c\fra (\outAA[a]{b},\parPA[y_1]{x_2})\}
\vdash\{c\fra (\outAA[a]{b}, y_1)\}\qquad\text{and}\\
\{c\fra (\outAA[a]{b},\parPA[y_1]{x_2}),
c\fra x_1\}\vdash\{c\fra \parPA[x_1]{x_2}\},
\end{array}
\end{displaymath}
which is straightforward.

Consider now the instance of rule \textsc{SumL} where $\ell=\boutAA[a]{b}$. We have
premiss $x_1\rel{}(\boutAA[a]{b},y_1)$ and the set $D$ contains only $x_2$. We pick
$\gamma$ such that $\gamma(x_2)=\nullPA$. Every atom $c$ is such that
$\nf{\{c\fra \sumPA[x_1]{x_2}\}}\not=\emptyset$ and the set $\bnE(\boutAA[a]{b})$ contains
only atom $b$. Again, we check that
\begin{displaymath}
\begin{array}{c}
\{c\fra (\boutAA[a]{b},y_1)\}
\vdash\{c\fra (\boutAA[a]{b},y_1)\}\qquad\text{and}\\
\{c\fra (\boutAA[a]{b},y_1), c\fra x_1\}\vdash\{c\fra \gamma(\sumPA[x_1]{x_2})\}
\qquad\text{and}\\
\{b\fra x_1\}\vdash\{b\fra \gamma(\sumPA[x_1]{x_2})\},
\end{array}
\end{displaymath}
which holds since $\gamma(\sumPA[x_1]{x_2})=\sumPA[x_1]{\nullPA}$ and $b\fra \nullPA$
reduces to the empty set.

The instance of rule \textsc{SumL}, where $\ell=\outAA[a]{b}$, has premiss
$x_1\rel{}(\outAA[a]{b},y_1)$, and the set $D$ and the substitution $\gamma$ are the same
as for the previous instance of \textsc{SumL}. Every atom $c$ is such that
$\nf{\{c\fra \sumPA[x_1]{x_2}\}}\not=\emptyset$ and the set $\bnE(\outAA[a]{b})$ is
empty. We check that
\begin{displaymath}
  \begin{array}{c}
    \{c\fra (\outAA[a]{b},y_1)\} \vdash\{c\fra (\outAA[a]{b},y_1)\}\qquad\text{and}\\
    \{c\fra (\outAA[a]{b},y_1), c\fra x_1\}\vdash\{c\fra \gamma(\sumPA[x_1]{x_2})\},
  \end{array}
\end{displaymath}
which hold as before.

For the instance of rule \textsc{Rep}, where $\ell=\boutAA[a]{b}$, the set $D$ is empty
and every atom $c$ is such that $\nf{\{c\fra \repPA[x]\}}\not=\emptyset$. We need to check
that
\begin{displaymath}
\begin{array}{c}
\{c\fra (\boutAA[a]{b},\parPA[y]{\repPA[x]})\}
\vdash\{c\fra (\boutAA[a]{b},y)\}\qquad\text{and}\\
\{c\fra (\boutAA[a]{b},\parPA[y]{\repPA[x]}), c\fra x\}
\vdash\{c\fra \repPA[x]\}
\qquad\text{and}\\
\{b\fra x\}\vdash\{b\fra \repPA[x]\},
\end{array}
\end{displaymath}
which is straightforward.

For the instance of rule \textsc{Rep}, where $\ell=\outAA[a]{b}$, the set $D$ is empty and
every atom $c$ is such that $\nf{\{c\fra \repPA[x]\}}\not=\emptyset$. It suffices to check
that
\begin{displaymath}
\begin{array}{c}
\{c\fra (\outAA[a]{b},\parPA[y]{\repPA[x]})\}
\vdash\{c\fra (\outAA[a]{b},y)\}\qquad\text{and}\\
\{c\fra (\outAA[a]{b},\parPA[y]{\repPA[x]}), c\fra x\}
\vdash\{c\fra \repPA[x]\},
\end{array}
\end{displaymath}
which is straightforward.

For the instance of the rule \textsc{Res}, where $\ell=\boutAA[a]{b}$, the set $D$ is
empty and every atom $d\#c$ is such that $\nf{\{d\fra \newPA[{[c]x}]\}}\not=\emptyset$. We
check that
\begin{displaymath}
\begin{array}{c}
\{d\fra (\boutAA[a]{b},\newPA[{[c]y}]), c\fra \boutAA[a]{b}\}
\vdash\{d\fra (\boutAA[a]{b},y)\}\qquad\text{and}\\
\{d\fra (\boutAA[a]{b},\newPA[{[c]y}]), c\fra \boutAA[a]{b}, d\fra x\}
\vdash\{d\fra \newPA[{[c]x}]\}
\qquad\text{and}\\
\{b\fra x, c\fra \boutAA[a]{b}\}\vdash\{b\fra \newPA[{[c]x}]\}.
\end{array}
\end{displaymath}
Atom $d$ is either fresh in the rule, or otherwise $d=a$ or $d=b$. In all three cases,
checking the obligations above is straightforward. For instance, in the second and third
obligations, $d\fra x$ and $b\fra x$ entail $d\fra \newPA[{[c]x}]$ and
$b\fra \newPA[{[c]x}]$ respectively.

For the instance of the rule \textsc{Res} where $\ell=\outAA[a]{b}$ the set $D$ is empty
and every atom $d\# c$ is such that $\nf{\{d\fra \newPA[{[c]x}]\}}\not=\emptyset$. We
consider the three cases over $d$ as before and check that
\begin{displaymath}
\begin{array}{c}
  \{d\fra (\outAA[a]{b},\newPA[{[c]y}]), c\fra \outAA[a]{b}\}
  \vdash\{d\fra (\outAA[a]{b},y)\}\qquad\text{and}\\
  \{d\fra (\outAA[a]{b},\newPA[{[c]y}]), c\fra \outAA[a]{b}, d\fra x\}
  \vdash\{d\fra \newPA[{[c]x}]\},
\end{array}
\end{displaymath}
which holds because $d\fra x$ entails $d\fra \newPA[{[c]x}]$.

Atoms $a$, $b$ and $c$ in $\RE$ range over $\Atom_\Chan$, and thus $\RE$ is in equivariant
format. Therefore $\RE$ is in ACR format with respect to $S$. By
Theorem~\ref{the:alpha-conversion} the NRTS induced by $\RE$, together with function
$\bnE$, constitute an NTS of Definition~\ref{def:nts}.

\subsection{Late Semantics of the \texorpdfstring{$\pi$}{pi}-Calculus}%
\label{sec:late-pi-calculus}
The NRTSS $\RL$ over the residual signature $\SigmaNTS$ models the late semantics of the
$\pi$-calculus~\cite{MPW92} in our target semantic model, which is an NTS\@. $\RL$ consists
of the rules in Figure~\ref{fig:late-pi-NTS} together with rules
\textsc{Out},\textsc{Tau}, \textsc{SumL}, \textsc{Rep}, \textsc{Open} and \textsc{Res}
from Figure~\ref{fig:early-pi-NTS} in Section~\ref{sec:early-pi-calculus}, and the omitted
symmetric versions \textsc{LParR}, \textsc{LParResR}, \textsc{LCommR}, \textsc{LCloseR}
and \textsc{SumR}.

$\RL$ is an NRTSS over signature $\SigmaNTS$, where the free-input actions are replaced by
\emph{bound-input actions} (page~159 of~\cite{SW01}), which we write $\binAA[a]{b}$. We
let the binding-names function $\bnL$ be such that the binding name of both the
bound-output action $\boutAA[a]{b}$ and the bound-input action $\binAA[a]{b}$ be $b$, that
is, $\bnL(\boutAA[a]{b}) = \bnL(\binAA[a]{b}) = \{b\}$ and $\bnL(\ell) = \emptyset$
otherwise.

\begin{figure}[ht]
  \begin{mathpar}
    \inferrule*[right=LIn]
    {b\fra a}
    {\inPA[a]{[b]x}\rel{}\resAF{\binAA[a]{b}}{x}}
    \\
    \inferrule*[right=LParL,
                left=$\ell\not\in\{\mathit{boutA}(a{,}\,b){,}\,\mathit{binA}(a{,}\,b) \mid a{,}b\in\Atom_{\Chan}\}$] 
    {x_1\rel{}\resAF{\ell}{y_1}}
    {\parPA[x_1]{x_2} \rel{}\resAF{\ell}{(\parPA[y_1]{x_2})}}
    \\
    \inferrule*[right=LParResL,
                left=$\ell\in\{\mathit{boutA}(a{,}\,b){,}\,\mathit{binA}(a{,}\,b) \mid a{,}b\in\Atom_{\Chan}\}$] 
    {x_1 \rel{} (\ell,y_1)\qquad b\fra x_2}
    {\parPA[x_1]{x_2} \rel{} (\ell,(\parPA[y_1]{x_2}))}
    \\
    \inferrule*[right=LCommL]
    {x_1\rel{}\resAF{\outAA[a]{b}}{y_1}
      \qquad x_2\rel{}\resAF{\binAA[a]{c}}{y_2}}
    {\parPA[x_1]{x_2}\rel{}\resAF{\tauAA}{(\parPA[y_1]{\susp{y_2}{\rep{b}{c}}})}}
    \\
    \inferrule*[right=LCloseL]
    {x_1\rel{}\resAF{\boutAA[a]{b}}{y_1}
      \qquad x_2\rel{}\resAF{\binAA[a]{b}}{y_2}}
    {\parPA[x_1]{x_2}\rel{}\resAF{\tauAA}{\newPA[{[b](\parPA[y_1]{y_2})}]}}
    \\
    \inferrule*[right=LRepComm]
    {x\rel{}\resAF{\outAA[a]{b}}{y_1}\qquad x\rel{}\resAF{\binAA[a]{c}}{y_2}}
    {\repPA[x]\rel{}\resAF{\tauAA}
      {\parPA[{\parPA[y_1]{\susp{y_2}{\rep{b}{c}}}}]{\repPA[x]}}}
    \\
    \inferrule*[right=LRepClose]
    {x\rel{}\resAF{\boutAA[a]{b}}{y_1}
      \qquad x\rel{}\resAF{\binAA[a]{b}}{y_2}}
    {\repPA[x]\rel{}\resAF{\tauAA}{{\parPA[{\newPA[{[b](\parPA[y_1]{y_2})}]}]{\repPA[x]}}}}
  \end{mathpar}
  \begin{center}
    where $a,b,c\in \Atom_\Chan$ and $\ell$ is a ground action.
  \end{center}
  \caption{NRTSS $\RL$ for the late $\pi$-calculus.}%
  \label{fig:late-pi-NTS}
\end{figure}

In rule \textsc{LIn}, the binding input action $\binAA[a]{b}$ binds atom $b$ in the term
substituted for variable $x$ on the right side of the residual. In rules \textsc{LCommL}
and \textsc{LRepComm}, the moderated term $\susp{y_2}{\rep{b}{c}}$ is used in order to
indicate that the renaming $\rep{b}{c}$ will be performed over the term substituted for
variable $y_2$.

\begin{rem}\label{rem:prevent-capture-in}
  In order to represent the binding input action of the late $\pi$-calculus in an NTS,
  rule \textsc{LIn} ensures that the binding atom $b$ is different from the communication
  channel $a$ by requiring $b\fra a$. This is similar to the requirement $b\fra a$ in rule
  \textsc{Open}. As a result, the obtained semantics minimally differs from the original
  one in~\cite{San96}. Consider the original late $\pi$-calculus and take the transitions
  $a(b).(\peract{\tr{a}{b}}{p})\lowrel{a(b)}\peract{\tr{a}{b}}{p}$ where $b$ is either $a$ or
  fresh in $p$. Our rule \textsc{LIn} prevents the transition
  $\NT{\inPA{a}{[a]p}}\rel{}\NT{(\binAA[a]{a},p)}$, and our semantics fails to faithfully
  represent the above-mentioned transition in the original late $\pi$-calculus when
  $b=a$. By alpha-conversion of residuals, if the state $\NT{\inPA[a]{[a]p}}$ has
  derivative $\NT{(\binAA[a]{a},p)}$, then the same state has to have all the derivatives
  $\{\NT{(\binAA[c]{c},\peract{\tr{a}{c}}{p})}\mid c\#(\binAA[a]{a},p)\}$, but these
  derivatives do not represent valid transitions in the original late $\pi$-calculus.

  However, the discrepancy between the original and our semantics has very limited
  consequences, since the binding name of an input process vanishes when communication is
  performed. Our semantics allows for the transition
\begin{equation}\label{eq:communication-binding-names}
  \begin{array}{l}
    \NT{\parPA[{\outPA[a]{a}{\nullPA}}]{\inPA[a]{[b](\outPA[c]{b}{\nullPA})}}} \rel{} \\
    \NT{\resAF{\tauAA}{\parPA[{\nullPA}]{\ren{(\outPA[c]{b}{\nullPA})}{\rep{a}{b}}}}}
    = \NT{\resAF{\tauAA}{\parPA[{\nullPA}]{\outPA[c]{a}{\nullPA}}}},
  \end{array}
\end{equation}
where the name $a$ is transmitted over the channel with the same name and where
$b\#(a,c)$. The transition in (\ref{eq:communication-binding-names}) faithfully represents
$(\overline{a}a.0\parallel a(b).\overline{c}b.0)\lowrel{\tau} (0\parallel\overline{c}a.0)$ in
the original late $\pi$-calculus.  By the nominal interpretations of terms, the process
$\NT{\inPA[a]{[b](\outPA[c]{b}{\nullPA})}}$ with binding atom $b$ is equal to the process
$\NT{\inPA[a]{[a](\outPA[c]{a}{\nullPA})}}$ with binding atom $a$, and thus the transition
in (\ref{eq:communication-binding-names}) also represents
$(\overline{a}a.0\parallel a(a).\overline{c}a.0)\lowrel{\tau} (0\parallel \overline{c}a.0)$
in the original late $\pi$-calculus.\blockqed
\end{rem}

As we did in Section~\ref{sec:early-pi-calculus}, we use the rule format of
Definition~\ref{def:alpha-conv-format} to show that $\RL$, together with equivariant
function $\bnL$ specifies an NTS\@. We consider the following partial strict stratification
\begin{displaymath}
\begin{array}{rcl}
S(\outPA[a]{b}{p},\outAA[a]{b})&=&0\\
S(\inPA[a]{[b]p},\binAA[a]{b})&=&0\\
S(\parPA[p]{q},\ell)&=&1+\max\{S(p,\ell),S(q,\ell)\}\\
S(\sumPA[p]{q},\ell)&=&1+\max\{S(p,\ell),S(q,\ell)\}\\
S(\repPA[p],\ell)&=&1+S(p,\ell)\\
S(\newPA[{[c]p}],\ell)&=&1+S(p,\ell) \quad \textup{if}~c \#\ell\\
S(\newPA[{[b]p}],\boutAA[a]{b})&=&1+S(p,\outAA[a]{b})\\
S(p,\ell')&=&\bot\quad \text{otherwise},
\end{array}
\end{displaymath}
where $\ell \in \{\boutAA[a]{b},\outAA[a]{b},\binAA[a]{b}\mid a{,}b\in\Atom_{\Chan}\}$.

Notice that the differences between the $S$ above and the partial strict stratification
from Section~\ref{sec:early-pi-calculus} are the inclusion of the second clause above,
which defines an order for the pair of input process and bound-input action, and the
addition of the bound-input action to the set over which the $\ell$ above ranges.

We check that $\RL$, together with the binding-names function $\bnL$, is in ACR format
with respect to $S$ as follows. First of all, the definition of $S$ yields that
$S(p,\tauAA)=\bot$, for each $p$. Observe that $S$ meets
Definition~\ref{def:partial-strict stratification}(i) because a formula with action 
$\tauAA$ does not take part in any proof tree that proves a transition whose action has
binding names. Therefore, the only rules in $\RL$ whose sources and actions unify with
pairs of processes and actions that have defined order are \textsc{LIn}, \textsc{Out},
\textsc{Open} and \textsc{LParResL}, the instance of rule \textsc{LParL} where
$\ell=\outAA[a]{b}$, and the instances of rules \textsc{SumL}, \textsc{Rep} and
\textsc{Res} where $\ell\in\{\boutAA[a]{b},\outAA[a]{b},\binAA[a]{b}\}$ (and the
corresponding instances of the symmetric versions \textsc{LParResR}, \textsc{LParR} and
\textsc{SumR}, which are omitted in the excerpt and will not be checked). Observe that $S$
meets Definition~\ref{def:partial-strict stratification}(ii) because for each rule whose 
conclusion has any of the actions $\boutAA[a]{b}$, $\outAA[a]{b}$ or $\binAA[a]{b}$, the
order of the transition that unifies with its conclusion is always bigger than the order
of the transitions that unify with its premisses.

We have already checked the ACR-format for rules \textsc{Out} and \textsc{Open} in
Section~\ref{sec:early-pi-calculus}. We have also checked the ACR-format for the instances
of rules \textsc{SumL}, \textsc{Rep} and \textsc{Res} where
$\ell\in\{\boutAA[a]{b},\outAA[a]{b}\}$, and we will not check the ACR-format for the
instances where $\ell=\binAA[a]{b}$ because the checking proceeds exactly as in the case
where $\ell=\boutAA[a]{b}$. We will limit ourselves to checking that rule \textsc{LIn} is
in the ACR-format with respect to $S$, as the checking for the other rules are similar to
those presented earlier.

For rule \textsc{LIn} we have an empty set of premisses, the set $D$ is empty, and every
atom $c\# b$ is such that ${\nf{\{b\fra \inPA[a]{[b]x}\}}\not=\emptyset}$. We check that
\begin{displaymath}
  \begin{array}{c}
    \{c\fra (\binAA[a]{b},x)\}\vdash\{c\fra \inPA[a]{[b]x}\}\qquad\textup{and}
    \qquad\{b\fra a\}\vdash\{b\fra \inPA[a]{[b]x}\}.
  \end{array}
\end{displaymath}
Let us consider the obligation on the left first. If $c=a$, that obligation vacuously
holds since its left-hand-side is inconsistent. If $c\not=a$, the obligation simplifies to
\begin{displaymath}
  \{c\fra\susp{x}{\iota}\}\vdash\{c\fra\susp{x}{\iota}\},
\end{displaymath}
which holds
straightforwardly. Checking the obligation on the right is also straightforward.

Atoms $a$, $b$ and $c$ in $\RL$ range over $\Atom_\Chan$, and thus $\RL$ is in equivariant
format. Therefore $\RL$ is in ACR format with respect to $S$. By
Theorem~\ref{the:alpha-conversion} the NRTS induced by $\RL$, together with function
$\bnL$, constitute an NTS of Definition~\ref{def:nts}.

\section{NTSs with Residuals of Abstraction Sort}%
\label{sec:nts-abstraction-sort}
In this section we explore alternative specifications of the NTSs à la Parrow in which we
allow for the use of residuals of abstraction sort. Intuitively, by the requirement of
alpha-conversion of residuals, the NTSs à la Parrow treat the actions with binding manes
as binding operators. In the systems with residuals of abstraction sorts, we let the
binding name in an action to be the binding atom of the residual in which the action
occurs. Our aim is to provide translations between the systems with and without residuals
of abstraction sort, and to give conditions under which the translations are inverse to
each other.

We have already defined the signature $\SigmaNTS$ on page~\pageref{pag:sigma-nts}, which
is parametric on a set $F$ of function symbols that we keep implicit. For the alternative
specifications with residuals of abstraction sort, we consider signatures with base and
residual sorts $\Proc$ and $[\Chan](\Act\times\Proc)$, and we let $\SigmaNTSAbs$ be any
such signature parametric on the set $F$ of function symbols.\label{pag:sigma-nts-abs}

We let $\Tr$ and $\TrAbs$ range over NRTSs over signatures $\SigmaNTS$ and $\SigmaNTSAbs$,
respectively, where we write $\trel$ and $\trelAbs$ for the transition relations of $\Tr$
and $\TrAbs$ respectively. We let $\bn$ range over equivariant functions that deliver the
binding names in an action. A tuple $(\Tr,\bn)$ where $\trel$ enjoys alpha-conversion of
residuals constitutes an NTS\@. In what follows, we assume that $|\bn(\ell)|\leq 1$ for
every action $\ell\in\Act$.\footnote{It is straightforward to generalise the results in
  this section to the case where $|\bn(\ell)|\leq n$ by iterating $n$ abstractions in the
  residuals, \ie, by adopting a sort $[\Chan_1]\ldots[\Chan_n](\Act\times\Proc)$ for the
  residuals and fixing function $\bn$ so that it returns an ordered list
  $(a_{\Chan_1},\ldots,a_{\Chan_n})$ of names instead of a set. We omit this
  generalisation here in order not to clutter notation.}

The translation from an NTS $(\Tr,\bn)$ to an NRTS $\TrAbs$ is given in the definition
below.

\begin{defi}%
  \label{def:nts-to-nrts}
  Let $(\Tr,\bn)$ be an NTS with equivariant transition relation $\trel$ and with
  equivariant function $\bn$ where $|\bn(\ell)|\leq 1$ such that $\trel$ enjoys
  alpha-conversion of residuals. %
  The NTS $(\Tr,\bn)$ \emph{translates to} an NRTS $\TrAbs$ with transition relation
  $\trelAbs$, which is the least relation satisfying that for all $p$, $\ell$ and $p'$,
  \begin{displaymath}
    p \trel (\ell, p') \implies p \trelAbs [a](\ell, p'),
  \end{displaymath}
  where either $a\#(\ell,p')$ and $\bn(\ell) = \emptyset$, or $\bn(\ell)=\{a\}$.

  We write $\TransAbs$ for the translation function, \ie,
  $\TrAbs = \TransAbs[\Tr,\bn]$.
\end{defi}

We prove that the transition relation $\trelAbs$ obtained by
Definition~\ref{def:nts-to-nrts} is equivariant and thus the translation produces an NRTS\@.

\begin{lem}[Equivariance of $\trelAbs$]%
  \label{lem:equivariance-R}
  The relation $\trelAbs$ obtained by Definition~\ref{def:nts-to-nrts} is
  equivariant. More formally, $p\trelAbs\resA{a}{\ell}{p'}$ implies $\pi\cdot
  p\trelAbs\resA{\pi\cdot a}{\pi\cdot \ell}{\pi\cdot p'}$ for every
  permutation $\pi$.
\end{lem}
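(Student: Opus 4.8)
The plan is to unfold Definition~\ref{def:nts-to-nrts}, which presents $\trelAbs$ as the \emph{least} relation closed under a single generating implication whose premise mentions only $\trel$, not $\trelAbs$ itself. Because that clause is non-recursive, membership is equivalent to the existence of a witness: a pair $p\trelAbs r$ holds iff there are an action $\ell_0$, a state $p'_0$ and an atom $a_0$ with $r=\resA{a_0}{\ell_0}{p'_0}$ (as an atom abstraction), $p\trel(\ell_0,p'_0)$, and the side conditions holding---namely either $a_0\#(\ell_0,p'_0)$ together with $\bn(\ell_0)=\emptyset$, or $\bn(\ell_0)=\{a_0\}$. No genuine induction is needed; the argument is a single unfold/refold through these witnesses.

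First I would assume $p\trelAbs\resA{a}{\ell}{p'}$ and extract such a witness $(a_0,\ell_0,p'_0)$ with $\resA{a_0}{\ell_0}{p'_0}=\resA{a}{\ell}{p'}$. I would then transport the witness along $\pi$. By equivariance of the transition relation $\trel$ (required of every NTS by Definition~\ref{def:nts}), $p\trel(\ell_0,p'_0)$ gives $\peract{\pi}{p}\trel(\peract{\pi}{\ell_0},\peract{\pi}{p'_0})$. The side conditions transfer using equivariance of $\bn$ and of the freshness relation: in the empty-binding case $\bn(\ell_0)=\emptyset$ yields $\bn(\peract{\pi}{\ell_0})=\peract{\pi}{\bn(\ell_0)}=\emptyset$, while $a_0\#(\ell_0,p'_0)$ yields $\peract{\pi}{a_0}\#(\peract{\pi}{\ell_0},\peract{\pi}{p'_0})$; in the singleton case $\bn(\ell_0)=\{a_0\}$ yields $\bn(\peract{\pi}{\ell_0})=\{\peract{\pi}{a_0}\}$. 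Hence the generating clause applies to the permuted witness and produces $\peract{\pi}{p}\trelAbs\resA{\peract{\pi}{a_0}}{\peract{\pi}{\ell_0}}{\peract{\pi}{p'_0}}$.

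To finish I would invoke the fact that the permutation action on atom abstractions commutes with their formation, that is, $\peract{\pi}{(\resA{a_0}{\ell_0}{p'_0})}=\resA{\peract{\pi}{a_0}}{\peract{\pi}{\ell_0}}{\peract{\pi}{p'_0}}$ (Definition~\ref{def:atom-abstraction}, equivalently Lemma~\ref{lem:interp-mod}). Since $\resA{a_0}{\ell_0}{p'_0}=\resA{a}{\ell}{p'}$, applying $\pi$ to both sides gives $\resA{\peract{\pi}{a_0}}{\peract{\pi}{\ell_0}}{\peract{\pi}{p'_0}}=\resA{\peract{\pi}{a}}{\peract{\pi}{\ell}}{\peract{\pi}{p'}}$, which is exactly the residual required by the conclusion. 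The only point demanding care---and the one I would flag as the main, if minor, obstacle---is precisely this equivalence-class bookkeeping: the triple $(a,\ell,p')$ named in the statement need not itself satisfy the generating side conditions, so one must descend to a generating representative $(a_0,\ell_0,p'_0)$ of the same abstraction and permute that. Because forming abstractions is equivariant, the choice of representative is immaterial to the conclusion, and the lemma follows.
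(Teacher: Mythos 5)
Your proof is correct and takes essentially the same route as the paper's: a case split on whether $\bn(\ell_0)$ is empty or a singleton, transporting the generating witness along $\pi$ via equivariance of $\trel$, $\#$ and $\bn$, and refolding through the translation clause. The only difference is your explicit bookkeeping about the choice of representative of the atom abstraction $\resA{a}{\ell}{p'}$, which the paper leaves implicit; this extra care is harmless and, as you note, immaterial by equivariance of abstraction formation.
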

\begin{proof}
  Let $a$ be an atom, $p$ and $p'$ be processes and $\ell$ be an action. We assume that
  ${p\trelAbs\resA{a}{\ell}{p'}}$, which has been obtained from $p\trel\resAF{\ell}{p'}$
  by Definition~\ref{def:nts-to-nrts}. Now we prove that
  ${\peract{\pi}{p}\trelAbs\peract{\pi}{\resA{a}{\ell}{p'}}}$ for every permutation
  $\pi$. We distinguish the following cases.
  \begin{description}
  \item[Case $\bn(\ell)=\emptyset$] Then $a\#\resAF{\ell}{p'}$. Since $\#$ is equivariant,
    we have that $(\pi\cdot a)\#\resAF{\pi\cdot\ell}{\pi\cdot p'}$. Since $\trel$ is
    equivariant, it follows that $\pi\cdot p\trel\resAF{\pi\cdot\ell}{\pi\cdot p'}$. By
    the translation function, and since $(\pi\cdot a)\#\resAF{\pi\cdot\ell}{\pi\cdot p'}$,
    we have that $\pi\cdot p\trelAbs\resA{\pi\cdot a}{\pi\cdot \ell}{\pi\cdot p'}$. Since
    $\bn$ is equivariant, $\bn(\peract{\pi}{\ell})$ is empty and we are done.
  \item[Case $\bn(\ell)=\{a\}$] Since $\bn$ is equivariant,
    $\bn(\pi\cdot \ell)=\{\pi\cdot a\}$. Since $\trel$ is equivariant,
    $\pi\cdot p\trel\resAF{\pi\cdot\ell}{\pi\cdot p'}$. By the translation function, and
    since $\bn(\pi\cdot \ell)=\{\pi\cdot a\}$, we have that
    $\pi\cdot p\trelAbs\resA{\pi\cdot a}{\pi\cdot \ell}{\pi\cdot p'}$.\qedhere
  \end{description}
\end{proof}
\begin{rem}
  Note that, as expected, the fact the transition relation $\rel{}$ in an NTS enjoys
  alpha-conversion of residuals does not play a role in the proof of the above
  result.\blockqed
\end{rem}

The translation from an NRTS $\TrAbs$ into an NTS $(\Tr,\bn)$ is given in the definition
below.
\begin{defi}%
  \label{def:nrts-to-nts}
  Let $\TrAbs$ be an NRTS with equivariant transition relation $\trelAbs$. %
  The NRTS $\TrAbs$ \emph{translates to} an NTS $(\Tr,\bn)$ with transition relation
  $\trel$, which is the least relation satisfying that for all $p$, $a$, $\ell$ and $p'$,
  \begin{displaymath}
    p \trelAbs [a](\ell, p') \implies
    p \trel (\peract{\tr{b}{a}}{\ell}, \peract{\tr{b}{a}}{p'})
  \end{displaymath}
  for $b = a$ and for each $b \#(\ell,p')$, and with binding-names function
  \begin{displaymath}
    \bn(\ell)=\{a\mid p \trelAbs \resA{a}{\ell}{p'}\land
    a\in\supp(\ell)\}.
  \end{displaymath}
  We write $\Trans$ for the translation function, \ie, $(\Tr,\bn) = \Trans[\TrAbs]$.
\end{defi}

Notice that if $a\in\supp(\ell,p')$, then $\Trans$ maps transition
$p\trelAbs [a](\ell, p')$ into every transition in
$\{p \trel (\peract{\tr{b}{a}}{\ell}, \peract{\tr{b}{a}}{p'}) \mid b = a \lor b
\#(\ell,p')\}$,
which encompasses the alpha-equivalence class of the target $[a](\ell, p')$.

We prove that the $\trel$ and $\bn$ obtained by
Definition~\ref{def:nrts-to-nts} are equivariant, and that $\trel$ enjoys
alpha-conversion of residuals. Thus, the translation is sound.

\begin{lem}[Equivariance of $\trel$]%
  \label{lem:equivariance-N}
  The relation $\trel$ obtained by Definition~\ref{def:nrts-to-nts} is
  equivariant. More formally, $p\trel\resAF{\ell}{p'}$ implies $\pi\cdot
  p\trel\resAF{\pi\cdot \ell}{\pi\cdot p'}$ for every permutation $\pi$.
\end{lem}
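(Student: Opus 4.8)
The plan is to unwind the definition of $\trel$ as a least relation, lift the witnessing abstraction-sort transition through $\pi$ using equivariance of $\trelAbs$, and then reconcile the transposition with the permutation by conjugation.

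First I would assume $p\trel\resAF{\ell}{p'}$. Since $\trel$ is the \emph{least} relation satisfying the implication in Definition~\ref{def:nrts-to-nts}, this transition must be witnessed by an abstraction-sort transition: there are an atom $a$, an action $\ell_0$ and a process $p'_0$ with $p\trelAbs\resA{a}{\ell_0}{p'_0}$, together with an atom $b$ satisfying $b=a$ or $b\#\resAF{\ell_0}{p'_0}$, such that $\ell=\peract{\tr{b}{a}}{\ell_0}$ and $p'=\peract{\tr{b}{a}}{p'_0}$.

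Next, since $\trelAbs$ is equivariant, from $p\trelAbs\resA{a}{\ell_0}{p'_0}$ I obtain $\peract{\pi}{p}\trelAbs\peract{\pi}{\resA{a}{\ell_0}{p'_0}}$, and by the permutation action on atom abstractions (Definition~\ref{def:atom-abstraction}) the target equals $\resA{\peract{\pi}{a}}{\peract{\pi}{\ell_0}}{\peract{\pi}{p'_0}}$. I would then instantiate the defining implication of $\trel$ at this permuted transition, choosing the witness atom $\peract{\pi}{b}$. This choice is legitimate: if $b=a$ then $\peract{\pi}{b}=\peract{\pi}{a}$, and if $b\#\resAF{\ell_0}{p'_0}$ then, since freshness is equivariant, $\peract{\pi}{b}\#\resAF{\peract{\pi}{\ell_0}}{\peract{\pi}{p'_0}}$. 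Hence Definition~\ref{def:nrts-to-nts} yields $\peract{\pi}{p}\trel\resAF{\peract{\tr{\peract{\pi}{b}}{\peract{\pi}{a}}}{(\peract{\pi}{\ell_0})}}{\peract{\tr{\peract{\pi}{b}}{\peract{\pi}{a}}}{(\peract{\pi}{p'_0})}}$.

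Finally, it remains to verify that this residual coincides with $\resAF{\peract{\pi}{\ell}}{\peract{\pi}{p'}}$. This is where the transposition and the permutation must be reconciled. Since the permutation action on permutations is conjugation, $\peract{\pi}{\tr{b}{a}}=\tr{\peract{\pi}{b}}{\peract{\pi}{a}}$; combining this with the equivariance of the permutation action (the first displayed identity in Section~\ref{sec:preliminaries}) gives $\peract{\tr{\peract{\pi}{b}}{\peract{\pi}{a}}}{(\peract{\pi}{\ell_0})}=\peract{\pi}{(\peract{\tr{b}{a}}{\ell_0})}=\peract{\pi}{\ell}$, and symmetrically $\peract{\tr{\peract{\pi}{b}}{\peract{\pi}{a}}}{(\peract{\pi}{p'_0})}=\peract{\pi}{p'}$. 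Substituting these two equalities into the transition obtained in the previous step yields exactly $\peract{\pi}{p}\trel\resAF{\peract{\pi}{\ell}}{\peract{\pi}{p'}}$, as required. I expect the only delicate step to be this last piece of bookkeeping: recognising that conjugating $\tr{b}{a}$ by $\pi$ produces $\tr{\peract{\pi}{b}}{\peract{\pi}{a}}$ and that it commutes correctly with the permutation action on $\ell_0$ and $p'_0$; everything else is a direct appeal to equivariance of $\trelAbs$, equivariance of freshness, and the leastness of $\trel$.
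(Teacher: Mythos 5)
Your proof is correct and follows essentially the same route as the paper's: both obtain the witnessing abstraction-sort transition, apply equivariance of $\trelAbs$, choose $\peract{\pi}{b}$ as the new witness atom (justified by equivariance of freshness), and reconcile $\peract{\pi}{(\peract{\tr{b}{a}}{-})}$ with $\peract{\tr{\peract{\pi}{b}}{\peract{\pi}{a}}}{(\peract{\pi}{-})}$ via conjugation and equivariance of the permutation action. The only difference is notational bookkeeping (your $\ell_0,p'_0$ versus the paper reusing $\ell,p'$ for the untransposed residual).
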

\begin{proof}
  Let $a$ be an atom, $p$ and $p'$ be processes and $\ell$ be an action.  We assume
  $p\trel\resAF{\ell}{p'}$ which is one of the transitions in the set
  $\{p\trel\resAF{\tr{b}{a}\cdot\ell}{\tr{b}{a}\cdot p'}\mid b=a\lor
  b\#\resAF{\ell}{p'}\}$
  obtained from $p\trelAbs\resA{a}{\ell}{p'}$ by Definition~\ref{def:nrts-to-nts}. Now we
  prove that for every permutation $\pi$, if
  $p\trel\resAF{\tr{b}{a}\cdot\ell}{\tr{b}{a}\cdot p'}$ where $b=a$ or
  $b\#\resAF{\ell}{p'}$, then
  $\pi\cdot p\trel\resAF{\pi\cdot\tr{b}{a}\cdot\ell}{\pi\cdot\tr{b}{a}\cdot p'}$. Since
  $\trelAbs$ is equivariant, we have that $p\trelAbs\resA{a}{\ell}{p'}$ implies
  $\pi\cdot p\trelAbs\pi\cdot\resA{a}{\ell}{p'}$. By Definition~\ref{def:nrts-to-nts}, for
  every atom $c=\peract{\pi}{a}$ or $c\#\resAF{\pi\cdot\ell}{\pi\cdot p'}$ we have that
  $\pi\cdot p\trel \resAF{\tr{c}{(\pi\cdot a)}\cdot\pi\cdot\ell} {\tr{c}{(\pi\cdot
      a)}\cdot\pi\cdot p'}$.
  Therefore it suffices to find such an atom $c$ that entails that
  $\tr{c}{(\pi\cdot a)}
  \cdot\pi\cdot\resAF{\ell}{p'}=\pi\cdot\tr{b}{a}\cdot\resAF{\ell}{p'}$.
  The latter equation holds by choosing $c=\pi\cdot b$. If $b = a$, then the
  transpositions $\tr{c}{(\pi\cdot a)}$ and $\tr{b}{a}$ are equal to $\iota$ and the
  equation above trivially follows. Otherwise, $b\#\resAF{\ell}{p'}$ and the equation
  above follows since $\pi\cdot b\#\pi\cdot\resAF{\ell}{p'}$ by equivariance of $\#$, and
  since
  $\tr{(\pi\cdot a)}{(\pi\cdot b)}\cdot\pi\cdot\resAF{\ell}{p'} =
  \peract{(\peract{\pi}{\tr{a}{b}})}{(\peract{\pi}{\resAF{\ell}{p'}})} =
  \pi\cdot\tr{a}{b}\cdot\resAF{\ell}{p'}$
  by equivariance of the permutation action.\qedhere
\end{proof}

\begin{lem}[Equivariance of $\bn$]
  The function $\bn$ obtained by Definition~\ref{def:nrts-to-nts} is equivariant. More
  formally, for every permutation $\pi$ and every action $\ell$ we have that
  $\bn(\pi\cdot \ell)=\pi\cdot\bn(\ell)$.
\end{lem}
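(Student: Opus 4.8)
The plan is to prove the set equality $\bn(\pi\cdot\ell)=\pi\cdot\bn(\ell)$ by establishing the single inclusion $\bn(\pi\cdot\ell)\subseteq\pi\cdot\bn(\ell)$ for every permutation $\pi$ and action $\ell$, and then recovering the reverse inclusion by instantiating this inclusion at $\pi^{-1}$ and $\pi\cdot\ell$. The two facts I would lean on are both already available: first, since $\TrAbs$ is an NRTS, its transition relation $\trelAbs$ is equivariant by Definition~\ref{def:nrts}; second, the function $\supp$ is equivariant (Prop.~2.11 of~\cite{Pit13}), so $\supp(\pi\cdot\ell)=\pi\cdot\supp(\ell)$. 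I would also use that the permutation action distributes componentwise over residuals and atom abstractions, so that $\pi\cdot\resA{a}{\ell}{p'}=\resA{\pi\cdot a}{\pi\cdot \ell}{\pi\cdot p'}$. Throughout, I read the defining comprehension for $\bn$ with $p$ and $p'$ existentially quantified, i.e.\ $a\in\bn(\ell)$ iff there exist $p,p'$ with $p\trelAbs\resA{a}{\ell}{p'}$ and $a\in\supp(\ell)$.

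For the forward inclusion I would take $c\in\bn(\pi\cdot\ell)$, so that there are witnesses $p,p'$ with $p\trelAbs\resA{c}{\pi\cdot\ell}{p'}$ and $c\in\supp(\pi\cdot\ell)$. Setting $a=\pi^{-1}\cdot c$, the goal is to show $a\in\bn(\ell)$, from which $c=\pi\cdot a\in\pi\cdot\bn(\ell)$ follows immediately. Applying equivariance of $\trelAbs$ with the permutation $\pi^{-1}$ gives $\pi^{-1}\cdot p\trelAbs\pi^{-1}\cdot\resA{c}{\pi\cdot\ell}{p'}$, and the componentwise computation
\begin{displaymath}
  \pi^{-1}\cdot\resA{c}{\pi\cdot\ell}{p'}
  =\resA{\pi^{-1}\cdot c}{\pi^{-1}\cdot\pi\cdot\ell}{\pi^{-1}\cdot p'}
  =\resA{a}{\ell}{\pi^{-1}\cdot p'}
\end{displaymath}
exhibits $\pi^{-1}\cdot p$ and $\pi^{-1}\cdot p'$ as the required witnesses for the transition component of membership in $\bn(\ell)$. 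For the support condition, equivariance of $\supp$ turns $c\in\supp(\pi\cdot\ell)=\pi\cdot\supp(\ell)$ into $a=\pi^{-1}\cdot c\in\supp(\ell)$. Hence $a\in\bn(\ell)$, completing this inclusion.

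The reverse inclusion requires no new work: instantiating the inclusion just proved at the permutation $\pi^{-1}$ and the action $\pi\cdot\ell$ yields $\bn(\ell)=\bn(\pi^{-1}\cdot\pi\cdot\ell)\subseteq\pi^{-1}\cdot\bn(\pi\cdot\ell)$, and acting by $\pi$ gives $\pi\cdot\bn(\ell)\subseteq\bn(\pi\cdot\ell)$. Combining the two inclusions yields the desired equality. The proof is essentially routine, so there is no serious obstacle; the only point demanding care is the bookkeeping around the implicit existential quantifiers over $p$ and $p'$, ensuring that the witnesses are correctly transported by $\pi^{-1}$, together with the componentwise evaluation of the permutation action on the atom abstraction $\resA{c}{\pi\cdot\ell}{p'}$. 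As in the companion Lemma~\ref{lem:equivariance-R}, the alpha-conversion-of-residuals property of the underlying NTS plays no role here.
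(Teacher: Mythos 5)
Your proof is correct and rests on exactly the same two ingredients as the paper's: equivariance of $\trelAbs$ and of $\supp$, together with the componentwise permutation action on residuals. The paper presents this as a single calculational chain of set equalities via the change of variables $a=\pi\cdot b$, $p=\pi\cdot q$, $p'=\pi\cdot q'$, whereas you split it into one inclusion plus its instantiation at $\pi^{-1}$; the difference is purely presentational.
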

\begin{proof}
  By calculating
  \begin{displaymath}
    \begin{array}[b]{rcl}
      \bn(\pi\cdot\ell)&=&\{a\mid p \trelAbs \resA{a}{\pi\cdot\ell}{p'} \in \TrAbs
                           \land a\in\supp(\pi\cdot\ell)\}\\
      &=&\myby{considering $a=\pi\cdot b$, $p=\pi\cdot q$ and $p'=\pi\cdot q'$}\\
      &&\{\pi\cdot b\mid\pi\cdot q \trelAbs
         \resA{\pi\cdot b}{\pi\cdot \ell}{\pi\cdot q'} \in \TrAbs
      \land \pi\cdot b\in\supp(\pi\cdot \ell)\}\\
      &=&\myby{equivariance of $\trelAbs$ and $\supp$}\\
      &&\pi\cdot \{b\mid q\trelAbs \resA{b}{\ell}{q'} \in \TrAbs \land b\in\supp(\ell)\}\\
      &=&\myby{Definition~\ref{def:nrts-to-nts}}\\
      &&\pi\cdot\bn(\ell).
      \tag*{\qedhere}
    \end{array}
  \end{displaymath}
\end{proof}

\begin{lem}[Alpha-conversion of residuals]
  Given $\trel$ and $\bn$ obtained by Definition~\ref{def:nts-to-nrts}, if
  $p\trel\resAF{\ell}{p'}$, $a\in\bn(\ell)$ and $b\#\resAF{\ell}{p'}$,
  then $p\trel\resAF{\tr{b}{a}\cdot\ell}{\tr{b}{a}\cdot p'}$.
\end{lem}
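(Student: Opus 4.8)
The plan is to trace the transition $p\trel\resAF{\ell}{p'}$ back through Definition~\ref{def:nrts-to-nts} to a single abstraction transition of $p$, argue that its bound atom may be taken to be exactly $a$, and then re-apply the defining clause of $\trel$ with opening atom $b$. Concretely, I would isolate and prove first the intermediate claim that $p\trelAbs\resA{a}{\ell}{p'}$. Once this is in hand the conclusion is immediate: since $b\#\resAF{\ell}{p'}$, the implication of Definition~\ref{def:nrts-to-nts} can be instantiated at opening atom $b$, yielding $p\trel\resAF{\tr{b}{a}\cdot\ell}{\tr{b}{a}\cdot p'}$, which is the desired residual.

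For the first half of the intermediate claim I would unfold the fact that $\trel$ is the \emph{least} relation closed under the clause of Definition~\ref{def:nrts-to-nts}: the transition $p\trel\resAF{\ell}{p'}$ must be generated from some abstraction transition $p\trelAbs\resA{c}{\ell_0}{p_0'}$ together with an opening atom $d$ (with $d=c$ or $d\#\resAF{\ell_0}{p_0'}$) satisfying $\resAF{\ell}{p'}=\tr{d}{c}\cdot\resAF{\ell_0}{p_0'}$. Using the characterisation of atom abstractions in Definition~\ref{def:atom-abstraction}, namely $\langle c\rangle s=\langle d\rangle(\tr{d}{c}\cdot s)$ whenever $d=c$ or $d\#s$, I can rewrite the residual as $\resA{c}{\ell_0}{p_0'}=\resA{d}{\ell}{p'}$, so that in fact $p\trelAbs\resA{b_0}{\ell}{p'}$ where $b_0:=d$ is the atom at which $\resAF{\ell}{p'}$ was opened. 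Remark~\ref{rem:equality-bodies} guarantees that equal abstractions force equal bodies, so this rewriting does not silently alter $\resAF{\ell}{p'}$.

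The crucial second half is to identify $b_0$ with $a$. Here I would use $a\in\bn(\ell)$, which by definition of $\bn$ forces $a\in\supp(\ell)$, together with the standing assumption $|\bn(\ell)|\leq 1$, so that $\bn(\ell)=\{a\}$. Provided $b_0\in\supp(\ell)$, the transition $p\trelAbs\resA{b_0}{\ell}{p'}$ is itself a witness of the membership condition defining $\bn(\ell)$; hence $b_0\in\bn(\ell)=\{a\}$ and therefore $b_0=a$, giving $p\trelAbs\resA{a}{\ell}{p'}$ and closing the argument as outlined in the first paragraph.

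I expect the main obstacle to be exactly the proviso of the last step: showing that the opening atom $b_0$ lies in $\supp(\ell)$, rather than being an atom occurring only in $p'$ or a vacuously bound fresh atom with $b_0\#\resAF{\ell}{p'}$. In either degenerate case the target $\tr{b}{a}\cdot\resAF{\ell}{p'}$ could not be produced by re-opening $\resA{b_0}{\ell}{p'}$, because opening never alters the part of $\ell$ lying outside the bound atom, whereas $a\in\supp(\ell)$ means $\tr{b}{a}\cdot\ell\neq\ell$. Thus the whole lemma hinges on the residual genuinely binding a name of the action, which is what the interplay between $a\in\bn(\ell)$, the constraint $|\bn(\ell)|\leq 1$, and the shape of residuals of sort $[\Chan](\Act\times\Proc)$ is meant to secure; I would therefore phrase the intermediate claim as the standalone statement ``if $a\in\bn(\ell)$ and $p\trel\resAF{\ell}{p'}$ then $p\trelAbs\resA{a}{\ell}{p'}$'' and concentrate the delicate support analysis there before deducing the lemma in one line.
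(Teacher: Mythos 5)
Your skeleton is exactly the paper's argument: trace $p\trel\resAF{\ell}{p'}$ back to the abstraction transition it was generated from, identify that transition's bound atom with $a$, and then re-instantiate the defining clause of Definition~\ref{def:nrts-to-nts} at opening atom $b$ (using $a\in\supp(\ell)$ to get $b\neq a$, so that $b\#\resAF{\ell}{p'}$ is the required side condition). The difference is one of honesty rather than of route: the paper's proof simply asserts the intermediate claim---``Transition $p\trel\resAF{\ell}{p'}$ stems from a transition $p\trelAbs\resA{a}{\ell}{p'}$''---whereas you decompose it into the rewriting $\resA{c}{\ell_0}{p_0'}=\resA{d}{\ell}{p'}$ (which is fine, via Definition~\ref{def:atom-abstraction} and Remark~\ref{rem:equality-bodies}) followed by the identification $d=a$, and you then explicitly defer the proof that $d\in\supp(\ell)$. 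That deferred step is the entire content of the lemma, so as it stands your proposal is not a complete proof.

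The step you defer is moreover not a formality that a ``delicate support analysis'' will dispatch for an arbitrary $\TrAbs$: it can genuinely fail. Since $\bn(\ell)$ in Definition~\ref{def:nrts-to-nts} is defined by quantifying over \emph{all} transitions with action $\ell$, one can take $\TrAbs$ to be the equivariant closure of two unrelated transitions $p\trelAbs\resA{c}{\ell}{p''}$ with $c\#\ell$ and $q\trelAbs\resA{a}{\ell}{q'}$ with $a\in\supp(\ell)$. Then $a\in\bn(\ell)$, and $p\trel\resAF{\ell}{p''}$ is generated, but every transition of $p$ generated from $\resA{c}{\ell}{p''}$ has action $\tr{d}{c}\cdot\ell=\ell$ (opening a vacuously bound atom does not move the action), so for generic $p$ the transition $p\trel\resAF{\peract{\tr{b}{a}}{\ell}}{\peract{\tr{b}{a}}{p''}}$, whose action differs from $\ell$, is not derivable. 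So you have correctly located the load-bearing step, and your reformulation of it as the standalone claim ``$a\in\bn(\ell)$ and $p\trel\resAF{\ell}{p'}$ imply $p\trelAbs\resA{a}{\ell}{p'}$'' is exactly what the paper's one-line proof uses implicitly; but neither you nor the paper proves it, and without an extra hypothesis on $\TrAbs$ (for instance, that every abstraction transition whose action has non-empty $\bn$ abstracts an atom in the support of that action---compare the hypothesis of Theorem~\ref{thm-composition-identity-abstraction} and the BA format) it is not provable. You should either add such a hypothesis or exhibit why the intended instances satisfy it; flagging the obstacle is not the same as overcoming it.
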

\begin{proof}
  Transition $p\trel\resAF{\ell}{p'}$ stems from a transition
  $p\trelAbs\resA{a}{\ell}{p'}$ and, since $a\in\bn(\ell)$, by
  Definition~\ref{def:nrts-to-nts} we know that $a\in\supp(\ell)$. Thus, $a$ is not fresh
  in $\resAF{\ell}{p'}$ and since $b\#\resAF{\ell}{p'}$ we have that $b\not=a$. Transition
  $p\trel\resAF{\tr{b}{a}\cdot\ell}{\tr{b}{a}\cdot p'}$ follows by
  Definition~\ref{def:nrts-to-nts} and we are done.\qedhere
\end{proof}

Although both the translations in Definitions~\ref{def:nts-to-nrts} and~\ref{def:nrts-to-nts} are sound, they are not the inverse of each other. Consider an NRTS
$\TrAbs$ that contains a transition $p\trelAbs[a](\ell,p')$ where $a\in\supp(p')$ and
$a\#\ell$. (Atom $a$ is abstracted over $p'$ but is fresh in $\ell$.)  By
Definition~\ref{def:nrts-to-nts}, $\TrAbs$ translates to an NTS
$(\Tr,\bn) = \Trans[\TrAbs]$ that contains a transition
$p\trel(\ell,\peract{\tr{b}{a}}{p'})$ for each atom $b=a\lor b\#(\ell,p')$ and where
$a\not\in\bn(\ell)$. (The exported name $a$ does not occur as a binding name of $\ell$.)
Taking the translation in Definition~\ref{def:nts-to-nrts} back, we obtain an NRTS
$\TrAbs '=\TransAbs[\Tr,\bn]$ that has a distinct transition
$p\rel{}[c](\ell,\peract{\tr{b}{a}}{p'})$ for each atom $b=a\lor b\#(\ell,p')$ and where
$c\#(\ell,\peract{\tr{b}{a}}{p'})$, but which does not contain the original transition
$p\rel{}[a](\ell,p')$---equal to any of its alpha-equivalent representations
$p\rel{}[b](\ell,\peract{\tr{b}{a}}{p'})$ with $b\#(\ell,p')$---because
$b\in\supp(\peract{\tr{b}{a}}{p'})$ and thus $c\not=b$. (The original transition with name
$a$ abstracted in the residual's body cannot be obtained back.) Therefore
$\TrAbs ' \not=\TrAbs$.

However, given an NTS $(\Tr,\bn)$, translating it to an NRTS with atom-abstractions in the
residuals and then back, delivers the same NTS $(\Tr,\bn)$. The following lemma states
this fact.
\begin{thm}%
  \label{thm-composition-identity}
  Let $(\Tr,\bn)$ be an NTS such that $|\bn(\ell)|\leq 1$ for every action $\ell$. Then,
  $\Trans[\TransAbs[\Tr,\bn]]=(\Tr,\bn)$.
\end{thm}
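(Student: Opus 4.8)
The plan is to establish $\Trans[\TransAbs[\Tr,\bn]]=(\Tr,\bn)$ by splitting it into its two components. Writing $\TrAbs=\TransAbs[\Tr,\bn]$ for the intermediate NRTS with relation $\trelAbs$, and $(\Tr',\bn')$ for the composite $\Trans[\TrAbs]$, I would prove separately that the $\Tr'$-transitions coincide with the $\Tr$-transitions and that $\bn'=\bn$. The first move is to collapse the two translations into one characterisation: by Definition~\ref{def:nts-to-nrts} a transition $p\trel\resAF{\ell}{p'}$ yields the single residual $p\trelAbs\resA{a}{\ell}{p'}$, with $\bn(\ell)=\{a\}$ in the binding case and $a\#\resAF{\ell}{p'}$ when $\bn(\ell)=\emptyset$; and by Definition~\ref{def:nrts-to-nts} this residual generates exactly the $\Tr'$-transitions from $p$ to $\peract{\tr{b}{a}}{\resAF{\ell}{p'}}$ for $b=a$ or $b\#\resAF{\ell}{p'}$. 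I would record at the outset that these targets are precisely the second components of the atom abstraction $\resA{a}{\ell}{p'}$, so that by Definition~\ref{def:atom-abstraction} the generated set depends only on the abstraction and not on the representative $(a,\resAF{\ell}{p'})$.

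The inclusion of $\Tr$ into $\Tr'$ is immediate: taking $b=a$ in the concretion recovers $\resAF{\ell}{p'}$ itself. The converse inclusion is the crux. Every $\Tr'$-transition out of $p$ has the shape $\peract{\tr{b}{a}}{\resAF{\ell}{p'}}$ for some underlying $p\trel\resAF{\ell}{p'}$, and I would split on $\bn(\ell)$. When $\bn(\ell)=\emptyset$ the atom $a$ is fresh in $\resAF{\ell}{p'}$ and $b$ is either $a$ or also fresh, so $\tr{b}{a}$ transposes two atoms that are both fresh in $\resAF{\ell}{p'}$ and hence fixes it, leaving exactly the $\Tr$-transition $p\trel\resAF{\ell}{p'}$. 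When $\bn(\ell)=\{a\}$ the target $\peract{\tr{b}{a}}{\resAF{\ell}{p'}}$ is a genuine alpha-variant, and this is precisely where the NTS hypothesis is used: since $a\in\bn(\ell)$ and $b=a$ or $b\#\resAF{\ell}{p'}$, alpha-conversion of residuals (Definition~\ref{def:nts}) delivers $p\trel\peract{\tr{b}{a}}{\resAF{\ell}{p'}}$. I expect this direction---keeping the representative-independence of the abstraction in step with the alpha-closure of $\trel$---to be the main obstacle.

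It remains to show $\bn'=\bn$, for which I would use $\bn'(\ell)=\{a\mid a\in\supp(\ell)\text{ and some }p\trelAbs\resA{a}{\ell}{p'}\}$. For $\bn'(\ell)\subseteq\bn(\ell)$, any witness $\resA{a}{\ell}{p'}$ descends from a transition $p\trel\resAF{\ell_0}{p_0'}$ whose abstraction equals $\resA{a}{\ell}{p'}$; the vacuous case $\bn(\ell_0)=\emptyset$ cannot occur, since there the bound atom is fresh in the body of its abstraction and hence $a\#\ell$, contradicting $a\in\supp(\ell)$. Thus $\bn(\ell_0)=\{a_0\}$ for the bound atom $a_0$, and $\ell=\peract{\tr{a}{a_0}}{\ell_0}$, so equivariance of $\bn$ gives $\bn(\ell)=\peract{\tr{a}{a_0}}{\{a_0\}}=\{a\}$. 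For the reverse inclusion, if $a\in\bn(\ell)$ then $a\in\supp(\ell)$ because $\bn(\ell)\subseteq\supp(\ell)$ in every NTS, and any transition $p\trel\resAF{\ell}{p'}$ translates to $p\trelAbs\resA{a}{\ell}{p'}$, placing $a$ in $\bn'(\ell)$; hence $\bn'$ and $\bn$ agree on each action that labels a transition, completing the identification of the two NTSs.
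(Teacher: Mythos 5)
Your proposal is correct and follows essentially the same route as the paper's proof: decompose the claim into $\Tr'=\Tr$ and $\bn'=\bn$, observe that the non-binding case is fixed by the transposition of two fresh atoms so the round trip returns exactly the original transition, and invoke alpha-conversion of residuals to absorb the alpha-variants generated in the case $\bn(\ell)=\{a\}$. Your verification of $\bn'(\ell)\subseteq\bn(\ell)$ via equivariance of $\bn$ applied to an arbitrary witness of the abstraction is spelled out a little more explicitly than in the paper, but it is the same argument.
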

\begin{proof}
  Let $(\Tr',\bn')=\Trans[\TransAbs[\Tr,\bn]]$. We prove that $\Tr'=\Tr$ and that
  $\bn'=\bn$. Let $p\trel (\ell,p')$ be a transition in $\Tr$. It suffices to prove that
  \begin{itemize}
  \item transition $p\trel (\ell,p')$ maps through the composition of $\TransAbs$ and
    $\Trans$ to a set of transitions that contains itself, and such that every other
    transition in the set is already in $\Tr$, and
  \item $a\in\bn(\ell)$ iff $a\in\bn'(\ell)$.
  \end{itemize}

\noindent
  We consider the following cases.
  \begin{description}
  \item[Case $\bn(\ell)=\emptyset$] By Definition~\ref{def:nts-to-nrts}, transition
    $p\trel (\ell,p')$ maps to transition $p\trelAbs [a](\ell,p')$ with $a\#(\ell,p')$ in
    $\TransAbs[\Tr,\bn]$. By Definition~\ref{def:nrts-to-nts}, transition
    $p\trelAbs [a](\ell,p')$ maps to transition $p\trel (\ell,p')$ in $\Tr'$ because
    $a\#(\ell, p')$. Furthermore, by Definition~\ref{def:nrts-to-nts},
    $a\not\in\bn'(\ell)$ because $a\#\ell$.
  \item[Case $\bn(\ell)=\{a\}$] By Definition~\ref{def:nts-to-nrts}, transition
    $p\trel (\ell,p')$ maps to transition $p\trelAbs [a](\ell,p')$ in
    $\TransAbs[\Tr,\bn]$, where $a$ is abstracted in the residual $(\ell,p')$. By
    Definition~\ref{def:nrts-to-nts}, transition $p\trelAbs [a](\ell,p')$ maps to the set
    $T=\{p\trel (\peract{\tr{a}{b}}{\ell},\peract{\tr{a}{b}}{p'})\mid b = a \lor b\#(\ell,
    p')\}$
    in $\Tr'$. The set $T$ contains the original transition $p\trel (\ell,p')$ and, by
    alpha-conversion of residuals, every other transition in $T$ is in $\Tr$. Furthermore,
    by Definition~\ref{def:nrts-to-nts}, $a\in\bn'(\ell)$ because
    $a\in\supp(\ell)$.\qedhere
  \end{description}
\end{proof}

\noindent
In order to prove that the composition of the translations in the inverse order is the
identity---\ie, $\TransAbs[\Trans[\TrAbs]] = \TrAbs$---it suffices to prevent that the
abstracted atom in a residual occurs in the process but not in the action.

\begin{thm}%
    \label{thm-composition-identity-abstraction}
  Let $\TrAbs$ be an NRTS such that for every transition $p\trelAbs{}[a](\ell,p')$ in
  $\TrAbs$, $a\#\ell$ implies that $a\#p'$. Then, $\TransAbs[\Trans[\TrAbs]]=\TrAbs$.
\end{thm}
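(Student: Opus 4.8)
The plan is to establish the two inclusions $\TrAbs\subseteq\TrAbs'$ and $\TrAbs'\subseteq\TrAbs$, tracing a single transition through the composition just as in the proof of Theorem~\ref{thm-composition-identity}; I write $(\Tr,\bn)=\Trans[\TrAbs]$ and $\TrAbs'=\TransAbs[\Tr,\bn]$. The argument turns on a dichotomy for the residual $\resA{a}{\ell}{p'}$ of a transition $p\trelAbs\resA{a}{\ell}{p'}$ of $\TrAbs$: either $a\in\supp(\ell)$ (a \emph{genuine} binding), or $a\#\ell$, in which case the hypothesis gives $a\#p'$ and hence $a\#\resAF{\ell}{p'}$, so that $\resA{a}{\ell}{p'}$ is a \emph{vacuous} abstraction. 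First I would read off from Definition~\ref{def:nrts-to-nts} how $\Trans$ acts: opening with $b=a$ always produces $p\trel\resAF{\ell}{p'}$; in the vacuous case every admissible opening collapses to this one transition, while in the genuine case the openings with fresh $b$ sweep out the whole alpha-equivalence class of $\resA{a}{\ell}{p'}$.

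For $\TrAbs\subseteq\TrAbs'$ I would handle the two cases separately. When $a\in\supp(\ell)$, Definition~\ref{def:nrts-to-nts} puts $a$ into $\bn(\ell)$, and under the standing assumption $|\bn(\ell)|\le 1$ this forces $\bn(\ell)=\{a\}$; feeding $p\trel\resAF{\ell}{p'}$ into Definition~\ref{def:nts-to-nrts} re-abstracts exactly $a$ and returns $p\trelAbs\resA{a}{\ell}{p'}$ in $\TrAbs'$. When $a\#\resAF{\ell}{p'}$, the transition contributes nothing to $\bn(\ell)$, so $\bn(\ell)=\emptyset$; Definition~\ref{def:nts-to-nrts} then re-abstracts an arbitrary $c\#\resAF{\ell}{p'}$, and since also $a\#\resAF{\ell}{p'}$ we have $\resA{c}{\ell}{p'}=\resA{a}{\ell}{p'}$ as atom abstractions (Definition~\ref{def:atom-abstraction}), recovering the original transition. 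The reverse inclusion is symmetric: a transition $p\trelAbs\resA{a}{\ell}{p'}$ of $\TrAbs'$ arises from some $p\trel\resAF{\ell}{p'}$ of $\Tr$, which in turn stems from a transition $p\trelAbs\resA{b}{\ell}{p'}$ of $\TrAbs$; I would match $b$ against the $\bn(\ell)$-clause of Definition~\ref{def:nts-to-nrts} that was used and conclude $\resA{b}{\ell}{p'}=\resA{a}{\ell}{p'}$, invoking Remark~\ref{rem:equality-bodies} to identify the bodies.

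The delicate point, and the step I expect to be the main obstacle, is the \emph{consistency} of the derived $\bn$ with each residual: that $\bn(\ell)=\emptyset$ in every vacuous case and $\bn(\ell)=\{a\}$ in every genuine case, uniformly across all transitions sharing the action $\ell$. This is precisely where the hypothesis is indispensable. Without it one could have a residual $\resA{a}{\ell}{p'}$ with $a\in\supp(p')$ but $a\#\ell$---neither genuine nor vacuous---and, as the discussion preceding the theorem shows, $\TransAbs$ would re-abstract it against the globally fixed binding name of $\ell$, producing a residual absent from $\TrAbs$ and wrecking the round trip. The hypothesis forbids exactly this residual, collapsing the dichotomy to the two clean cases. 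I would make the consistency rigorous by using that $\Trans[\TrAbs]$ is a genuine NTS---by the equivariance and alpha-conversion lemmas established earlier in this section---with $|\bn(\ell)|\le 1$, so the binding name attached to $\ell$ is unambiguous, whereupon the hypothesis guarantees that every residual over $\ell$ either displays that binding name or is vacuous in $\resAF{\ell}{p'}$.
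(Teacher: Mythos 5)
Your proof is correct and follows essentially the same route as the paper's: trace each transition $p\trelAbs[a](\ell,p')$ through $\Trans$ and then $\TransAbs$, splitting on whether $a\in\supp(\ell)$ or $a\#\ell$ (where the hypothesis yields $a\#p'$, so the abstraction is vacuous and is recovered via the equality $[a](\ell,p')=[c](\ell,p')$ for fresh $c$). The ``delicate point'' you flag---that $\bn(\ell)$ must come out empty for every vacuous residual over $\ell$---is genuine, but the paper's own proof passes over it just as quickly, asserting only $a\not\in\bn(\ell)$ where $\bn(\ell)=\emptyset$ is what is actually needed to apply Definition~\ref{def:nts-to-nrts}.
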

\begin{proof}
  Let $\TrAbs'=\TransAbs[\Trans[\TrAbs]]$. We prove that $\TrAbs'=\TrAbs$. Let
  $p\trelAbs [a](\ell,p')$ be a transition in $\TrAbs$. It suffices to prove that
  $p\trelAbs [a](\ell,p')$ maps through the composition of $\Trans$ and $\TransAbs$ to
  itself.

  We consider the following cases.
  \begin{description}
  \item[Case $a\#\ell$] By assumption, $a\#p'$. By Definition~\ref{def:nrts-to-nts},
    transition $p\trelAbs [a](\ell,p')$ maps to transition $p \trel (\ell, p')$ in $\Tr$,
    and $a\not\in\bn(\ell)$, where $\Trans[\TrAbs]=(\Tr,\bn)$. By
    Definition~\ref{def:nts-to-nrts}, since $a\#\ell$, transition $p \trel (\ell, p')$
    maps to transition $p\trelAbs [a](\ell,p')$ in $\TrAbs'$.
  \item[Case $a\in\supp(\ell)$] By Definition~\ref{def:nrts-to-nts}, transition
    $p\trelAbs [a](\ell,p')$ maps to every transition in the set
    $T=\{p\trel\resAF{\peract{\tr{b}{a}}{\ell}}{\peract{\tr{b}{a}}{p'}}\mid b=a\lor
    b\#\resAF{\ell}{p'}\}$
    in $\Tr$, and $a\in\bn(\ell)$, where $\Trans[\TrAbs]=(\Tr,\bn)$. For each $b$ such
    that $b=a$ or $b\#(\ell,p')$, by Definition~\ref{def:nrts-to-nts}, transition
    $p\trel\resAF{\peract{\tr{b}{a}}{\ell}}{\peract{\tr{b}{a}}{p'}}$ maps to transition
    $p\trelAbs [b](\peract{\tr{b}{a}}{\ell},\peract{\tr{b}{a}}{p'})$ in $\TrAbs'$ because
    $b\in\bn(\peract{\tr{b}{a}}{\ell})$. By definition of atom-abstraction,
    $[b](\peract{\tr{b}{a}}{\ell},\peract{\tr{b}{a}}{p'}) = [a](\ell,p')$ for every $b$
    such that $b=a$ or $b\#(\ell,p')$. Therefore, every transition in $T$ maps
    to $p\trelAbs [a](\ell,p')$ in $\TrAbs'$ and we are done.\qedhere
  \end{description}
\end{proof}

\noindent
Below we introduce a rule format for NRTSSs over signature $\SigmaNTSAbs$ that ensures
that the composition $\TransAbs \circ \Trans$ is the identity over the associated NRTS\@. To
this end we adapt the notion of partial strict stratification from
Definition~\ref{def:partial-strict stratification}.

\begin{defi}[Partial strict stratification with atom-abstractions]%
  \label{def:partial-stratification-abs}
  Let $\mathcal{R}^{[\Chan]}$ be an NRTSS over a signature $\SigmaNTSAbs$. Let
  $S^{[\Chan]}$ be a partial map from ground nominal terms of sort
  $\Proc\times[\Chan]\Act$ to ordinal numbers. $S^{[\Chan]}$ is a \emph{partial strict
    stratification with atom-abstractions} of $\mathcal{R}^{[\Chan]}$ iff
  \begin{enumerate}[label={(\roman*)},ref={(\roman*)}]
  \item $S^{[\Chan]}(\varphi(t), [a]\ell) \not= \bot$, for every rule in $\R^{[\Chan]}$
    with conclusion $t \rel{} [a](\ell, t')$ such that $a\#\ell$ and for every ground
    substitution $\varphi$, and
  \item $S^{[\Chan]}(\varphi(u_i),[a_i]\ell_i)<S^{[\Chan]}(\varphi(t),[a]\ell)$ and
    $S^{[\Chan]}(\varphi(u_i),[a_i]\ell_i)\not=\bot$, for every rule $\Ru$ in
    $\R^{[\Chan]}$ with conclusion $t\rel{}[a](\ell,t')$ such that
    $S^{[\Chan]}(\varphi(t),[a]\ell)\not=\bot$, for every premiss
    $u_i\rel{}[a_i](\ell_i,u'_i)$ of $\Ru$ such that $a_i\# \ell_i$ and for every ground
    substitution $\varphi$.
  \end{enumerate}
  We say a ground nominal term $(p,[a]\ell)$ of sort $\Proc\times[\Chan]\Act$ \emph{has
    order} $S^{[\Chan]}(p,[a]\ell)$.
\end{defi}

The choice of $S^{[\Chan]}$ determines which rules will be considered by the rule format
for NRTSSs defined below, which guarantees that for every transition
$p\trelAbs [a](\ell,p')$ in the induced transition relation, $a\#\ell$ implies $a\#p'$. We
will intend the map $S^{[\Chan]}$ to be such that the only rules whose source, abstracted
atom and label of the conclusion have defined order are those that may take part in proof
trees of transitions where the abstracted atom in its residual is fresh in its action.

\begin{defi}[Binding-actions format]%
  \label{def:binding-actions-format}
  Let $\mathcal{R}^{[\Chan]}$ be an NRTSS over a signature $\SigmaNTSAbs$ and
  $S^{[\Chan]}$ be a partial strict stratification with atom-abstractions of
  $\mathcal{R}^{[\Chan]}$. Assume that all the actions occurring in the rules of
  $\mathcal{R}^{[\Chan]}$ are ground. Let
  \begin{mathpar}
    \inferrule*[right=Ru] {\{u_i\rel{}[a_i](\ell_i,u'_i)\mid i\in I\} \qquad\nabla}
    {t\rel{}[a](\ell,t')}
  \end{mathpar}
  be a rule in $\mathcal{R}^{[\Chan]}$. The rule \textsc{Ru} is in \emph{binding-actions
    format with respect to $S^{[\Chan]}$} (\emph{BA format with respect to $S^{[\Chan]}$}
  for short) iff either $a\in\supp(\ell)$, or otherwise the following holds:
  \begin{displaymath}
    \nabla \cup \{a_i\fra u'_i \mid i\in I \land a_i\# \ell_i\}
    \vdash \{a \fra t'\}.
  \end{displaymath}
  An NRTSS $\mathcal{R}^{[\Chan]}$ is in \emph{BA format with respect to} $S^{[\Chan]}$
  iff all the rules in $\mathcal{R}^{[\Chan]}$ are in BA format with respect to
  $S^{[\Chan]}$.
\end{defi}

\begin{thm}%
  \label{thm:binding-actions-format}
  Let $\mathcal{R}^{[\Chan]}$ be an NRTSS over a signature $\SigmaNTSAbs$ and
  $S^{[\Chan]}$ be a partial strict stratification with atom-abstractions of
  $\mathcal{R}^{[\Chan]}$. Assume that $\mathcal{R}^{[\Chan]}$ is in BA format with
  respect to $S^{[\Chan]}$ and let $\TrAbs$ be the NRTS induced by
  $\mathcal{R}^{[\Chan]}$. Then, for every transition $p\trelAbs{}[a](\ell,p')$ in
  $\TrAbs$, $a\#\ell$ implies that $a\# p'$.
\end{thm}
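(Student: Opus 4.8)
The plan is to prove, by transfinite induction on the ordinal $S^{[\Chan]}(p,[a]\ell)$, the claim that every provable transition $p\trelAbs\resA{a}{\ell}{p'}$ with $a\#\ell$ satisfies $a\#p'$. Fix such a transition and consider the last rule used in one of its proof trees (Definition~\ref{def:proof-tree}): a rule
\begin{mathpar}
\inferrule{\{u_i\rel{}\resA{a_i}{\ell_i}{u'_i}\mid i\in I\}\quad \nabla}{t\rel{}\resA{a_c}{\ell_c}{t'}}
\end{mathpar}
together with a ground substitution $\varphi$ such that $\NT{\varphi(t)}=p$, the side condition $\varphi(\nabla)$ holds, and $\langle a_c\rangle(\NT{\varphi(\ell_c)},\NT{\varphi(t')})=\resA{a}{\ell}{p'}$ as nominal terms. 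Since all actions in the rules are ground, $\NT{\varphi(\ell_c)}=\NT{\ell_c}$, which I abbreviate as $\ell_c$. The whole proof reduces the goal to the single freshness relation $a_c\#\NT{\varphi(t')}$, which is then discharged using the entailment provided by the BA format together with Lemma~\ref{lem:entails2}.

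The first step is the nominal bookkeeping relating the residual's abstracted atom $a$ (for which $a\#\ell$ is assumed) to the atom $a_c$ that the rule syntactically abstracts. From $\langle a_c\rangle(\ell_c,\NT{\varphi(t')})=\langle a\rangle(\ell,p')$ and $a\#\ell$, the characterisation of equality of atom abstractions (Definition~\ref{def:atom-abstraction}) gives two cases. If $a=a_c$, then $(\ell,p')=(\ell_c,\NT{\varphi(t')})$, so $\ell_c=\ell$ and $\NT{\varphi(t')}=p'$. If $a\neq a_c$, then $a_c\#(\ell,p')$ and $(\ell_c,\NT{\varphi(t')})=\peract{\tr{a_c}{a}}{(\ell,p')}$; since both $a$ and $a_c$ are fresh in $\ell$ we get $\ell_c=\peract{\tr{a_c}{a}}{\ell}=\ell$ and $\NT{\varphi(t')}=\peract{\tr{a_c}{a}}{p'}$. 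In either case $a_c\#\ell_c$, i.e.\ $a_c\notin\supp(\ell_c)$, so the first alternative of the BA format (Definition~\ref{def:binding-actions-format}) is excluded and the entailment $\nabla\cup\{a_i\fra u'_i\mid i\in I\land a_i\#\ell_i\}\vdash\{a_c\fra t'\}$ holds for this rule. Moreover, $a_c\#\ell_c$ activates Definition~\ref{def:partial-stratification-abs}(i), so $S^{[\Chan]}(p,[a_c]\ell_c)\neq\bot$; and because $a,a_c\#\ell$ and $\ell_c=\ell$, we have $[a_c]\ell_c=[a]\ell$, confirming that the measure $S^{[\Chan]}(p,[a]\ell)$ of the induction is indeed defined.

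For the inductive step I treat the premisses with $a_i\#\ell_i$, which are exactly those appearing in the antecedent of the entailment. Each such premiss $\NT{\varphi(u_i)}\trelAbs\resA{a_i}{\NT{\varphi(\ell_i)}}{\NT{\varphi(u'_i)}}$ is a node of the proof tree (hence provable), satisfies $a_i\#\NT{\varphi(\ell_i)}=\NT{\ell_i}$, and by Definition~\ref{def:partial-stratification-abs}(ii) has order $S^{[\Chan]}(\NT{\varphi(u_i)},[a_i]\ell_i)$ that is defined and strictly below $S^{[\Chan]}(\NT{\varphi(t)},[a_c]\ell_c)=S^{[\Chan]}(p,[a]\ell)$. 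The induction hypothesis therefore yields $a_i\#\NT{\varphi(u'_i)}$, i.e.\ $\varphi(a_i\fra u'_i)$ holds. Combined with the side condition $\varphi(\nabla)$, the entire antecedent of the BA-format entailment holds under $\varphi$, so by Lemma~\ref{lem:entails2} its consequent $\varphi(\{a_c\fra t'\})$ holds, that is $a_c\#\NT{\varphi(t')}$. Transferring back along the case split of the second paragraph: if $a=a_c$ then $p'=\NT{\varphi(t')}$ and we are done; if $a\neq a_c$ then $p'=\peract{\tr{a_c}{a}}{\NT{\varphi(t')}}$, and by equivariance of freshness $a_c\#\NT{\varphi(t')}$ gives $a=\peract{\tr{a_c}{a}}{a_c}\#\peract{\tr{a_c}{a}}{\NT{\varphi(t')}}=p'$. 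This closes the induction.

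I expect the main obstacle to be precisely the nominal bookkeeping of the second paragraph: the proof tree only determines the residual up to equality of atom abstractions, so one must reconcile the given representative $a$ of $\resA{a}{\ell}{p'}$ with the atom $a_c$ that the rule abstracts. Extracting \emph{both} $a_c\#\ell_c$ (which is what excludes the $a\in\supp(\ell)$ alternative of the BA format and what triggers Definition~\ref{def:partial-stratification-abs}(i)) \emph{and} the exact relation $p'=\peract{\tr{a_c}{a}}{\NT{\varphi(t')}}$ between the target and the rule's target relies on the equality criterion of Definition~\ref{def:atom-abstraction} and on the $\varphi$-invariance of ground actions. Once this matching is settled, the stratification-driven induction and the single appeal to Lemma~\ref{lem:entails2} are routine.
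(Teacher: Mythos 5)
Your proof is correct and follows essentially the same route as the paper's: induction on the stratification order $S^{[\Chan]}$, using Definition~\ref{def:partial-stratification-abs}(ii) to apply the induction hypothesis to the premisses with $a_i\fra\ell_i$ fresh, and then discharging the goal via the BA-format entailment and Lemma~\ref{lem:entails2}. The only difference is your explicit reconciliation of the representative $[a](\ell,p')$ with the atom $a_c$ abstracted in the rule's conclusion via Definition~\ref{def:atom-abstraction} and Remark~\ref{rem:equality-bodies}; the paper elides this by writing the rule's conclusion directly with $a$ and $\ell$, so your version is a (correct) more careful rendering of the same argument rather than a different one.
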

\begin{proof}
  Let $\NT{p}\trelAbs{}\NT{[a](\ell,p')}$ be provable in $\mathcal{R}^{[\Chan]}$ and
  assume that the last rule used in the proof of $\NT{p}\trelAbs{}\NT{[a](\ell,p')}$ is
  \begin{mathpar}
    \inferrule*[right=Ru]
    {\{u_i\rel{}[a_i](\ell_i,u'_i)\mid i\in I\}
      \qquad\{a_j\fra v_j\mid j\in J\}}
    {t\rel{}[a](\ell,t')}
  \end{mathpar}
  where $I$ and $J$ are disjoint. Therefore, for some ground substitution $\varphi$,
  \begin{itemize}
  \item $\NT{p} = \NT{\varphi(t)}$ and $\NT{[a](\ell,p')} = \NT{[a](\ell,\varphi(t'))}$,
  \item the premisses $\NT{\varphi(u_i)} \rel{} \NT{[a_i](\ell_i,\varphi(u'_i))}$ with
    $i\in I$ are provable in $\R$, and
  \item the freshness relations $a_j\#\NT{\varphi(v_j)}$ with $j\in J$ hold.
  \end{itemize}
  Recall that the actions $\ell$ and $\ell_i$ where $i\in I$ are ground, and thus
  $\varphi$ is not applied to the actions in the items above. Since
  $\NT{[a](\ell,\varphi(t'))}=\NT{[a](\ell,p')}$, by Definition~\ref{def:interpretation}
  and Remark~\ref{rem:equality-bodies}, $\NT{\varphi(t')}=\NT{p'}$.

  We need to prove that $a\#\ell$ implies $a\#\NT{p'}$. If $a\in\supp(\ell)$ then $a$ is
  not fresh in $\ell$ and we are done. Otherwise, we know that $a\#\ell$ and we show that
  $a\#\NT{\varphi(t')}$. Observe that $S^{[\Chan]}(\varphi(t),[a]\ell)$ is defined because
  $a\#\ell$. We proceed by induction on $S^{[\Chan]}(\varphi(t),[a]\ell)$.

  Since rule \Ru\ is in BA format with respect to $S^{[\Chan]}$,
  \begin{displaymath}
    \{a_j\fra v_j\mid j\in J\} \cup \{a_i\fra u'_i \mid i\in I \land a_i\# \ell_i\}
    \vdash \{a \fra t'\}.
  \end{displaymath}
  We use Lemma~\ref{lem:entails} to obtain the implication
  \begin{enumerate}[label={(\arabic*)},ref={(\arabic*)}]
  \item\label{eq:bc-format}
    $\bigwedge_{j\in J} (a_j\# \NT{\varphi(v_j)}) \land \bigwedge_{i\in I \land
      a_i\#\ell_i} (a_i\# \NT{\varphi(u'_i)}) \implies a \# \NT{\varphi(t')}$.
  \end{enumerate}
  By the existence of the proof tree, all the $a_j\# \NT{\varphi(v_j)}$ with $j\in J$
  hold, and it suffices to prove
  $\bigwedge_{i\in I \land a_i\#\ell_i} (a_i\# \NT{\varphi(u'_i)})$. The base case is when
  $S^{[\Chan]}(\varphi(t),[a]\ell)$ is minimal. By
  Definition~\ref{def:partial-stratification-abs} the rule \textsc{Ru} has no premisses
  and the set $I$ is empty, which makes
  $\bigwedge_{i\in I \land a_i\#\ell_i} (a_i\# \NT{\varphi(u'_i)})$ trivially true and we
  are done. Now we assume that $S^{[\Chan]}(\varphi(t),[a]\ell)$ is not minimal. Condition
  (ii) in Definition~\ref{def:partial-stratification-abs} ensures that
  $S^{[\Chan]}(\varphi(t),[a]\ell)\not=\bot$ and
  $S^{[\Chan]}(\varphi(u_i),[a_i]\ell_i) < S^{[\Chan]}(\varphi(t),[a]\ell)$ for every
  $i\in I$ such that $a_i\#\ell_i$. Thus, we can apply the induction hypothesis to obtain
  $a_i\# \NT{\varphi(u'_i)}$ for every $i \in I$ such that $a_i\#\ell_i$ and the theorem
  holds.\end{proof}

\section{Example of Application of the BA-Format to the \texorpdfstring{$\pi$}{pi}-Calculus}%
\label{sec:application-BA-format}

In this section we introduce the NRTSSs with residuals of abstraction sort $\REAbs$ and
$\RLAbs$, which respectively define our versions of the early and the late semantics of
the $\pi$-calculus. For each of these semantics, we aim at showing that the induced NTRSs
with and without residuals of abstraction sort represent the same model of computation, in
the sense that $\TrEAbs = \TransAbs(\TrE,\bnE)$ and $(\TrE,\bnE) = \Trans(\TrEAbs)$ (and
respectively for $(\TrL,\bnL)$ and $\TrLAbs$). Since we have already checked that both
$\RE$ and $\RL$ are in ACR-format in Section~\ref{sec:example-nts}, in order to establish
that the models of computation are the same we need to check that
\begin{enumerate}[label={(\roman*)},ref={(\roman*)}]
\item both $\REAbs$ and $\RLAbs$ are in BA-format, and
\item $\TrEAbs = \TransAbs(\TrE,\bnE)$, where $\TrEAbs$ is induced by $\REAbs$ (and
  respectively for $\TrLAbs$ and $(\TrL,\bnL)$).
\end{enumerate}
The translations between these systems with and without residuals of abstraction sort are
inverse to each other, and thus the two-way correspondence holds.

\subsection{Early Semantics of the \texorpdfstring{$\pi$}{pi}-Calculus}
Consider the NRTSS $\REAbs$ in Figure~\ref{fig:early-pi-atom-abstraction} for our version
of the early semantics $\pi$-calculus~\cite{MPW92} over the residual signature
$\SigmaNTSAbs$ as defined on page~\pageref{pag:sigma-nts-abs} in
Section~\ref{def:nts-to-nrts}, where $F$ is the set of function symbols from
Example~\ref{ex:pi-calculus}. Omitted rules \textsc{AParR}, \textsc{AECommR},
\textsc{AECloseR} and \textsc{ASumR} are, respectively, the symmetric version of rules
\textsc{AParL}, \textsc{AECommL}, \textsc{AECloseL} and \textsc{ASumL}.

\begin{figure}[ht]
\begin{mathpar}
  \inferrule*[right=AEIn]
  {d\fra (a,c,\susp{x}{\rep{b}{c}})}
  {\inPA[a]{[b]x}\rel{}\resA{d}{\inAA[a]{c}}{\susp{x}{\rep{b}{c}}}}
  \and
  \inferrule*[right=AOut]
  {c\fra (a,b,x)}
  {\outPA[a]{b}{x}\rel{}\resA{c}{\outAA[a]{b}}{x}}
  \and
  \inferrule*[right=ATau]
  {a\fra x}
  {\tauPA[x]\rel{}\resA{a}{\tauAA}{x}}
  \and
  \inferrule*[right=AParL]
  {x_1\rel{}\resA{a}{\ell}{y_1}\qquad a\fra x_2}
  {\parPA[x_1]{x_2} \rel{}\resA{a}{\ell}{(\parPA[y_1]{x_2})}}
  \\
  \inferrule*[right=AECommL]
  {x_1\rel{}\resA{c}{\outAA[a]{b}}{y_1}
    \qquad x_2\rel{}\resA{c}{\inAA[a]{b}}{y_2}}
  {\parPA[x_1]{x_2}\rel{}\resA{c}{\tauAA}{(\parPA[y_1]{y_2})}}
  \\
  \inferrule*[right=AECloseL]
  {x_1\rel{}\resA{b}{\boutAA[a]{b}}{y_1}
    \qquad x_2\rel{}\resA{c}{\inAA[a]{b}}{y_2} \qquad b\fra x_2 \qquad c\fra y_1}
  {\parPA[x_1]{x_2}\rel{}\resA{c}{\tauAA}{\newPA{([b](\parPA[y_1]{y_2}))}}}
  \\
  \inferrule*[right=ASumL]
  {x_1\rel{}\resA{a}{\ell}{y_1}}
  {\sumPA[x_1]{x_2} \rel{}\resA{a}{\ell}{y_1}}
  \and
  \inferrule*[right=ARep]
  {x\rel{}\resA{a}{\ell}{y}\qquad a\fra x}
  {\repPA[x] \rel{}\resA{a}{\ell}{(\parPA[y]{\repPA[x]})}}
  \and
  \inferrule*[right=AERepComm]
  {x\rel{}\resA{c}{\outAA[a]{b}}{y_1}\qquad x\rel{}\resA{c}{\inAA[a]{b}}{y_2}
    \qquad c\fra x}
  {\repPA[x]\rel{}\resA{c}{\tauAA}{\parPA[{\parPA[y_1]{y_2}}]{\repPA[x]}}}
  \\
  \inferrule*[right=AERepClose]
  {x\rel{}\resA{b}{\boutAA[a]{b}}{y_1}
    \qquad x\rel{}\resA{c}{\inAA[a]{b}}{y_2}\qquad b\fra x \qquad c\fra (x, y_1)}
  {\repPA[x]\rel{}\resA{c}{\tauAA}{\parPA[{\newPA{([b](\parPA[y_1]{y_2}))}}]{\repPA[x]}}}
  \\
  \inferrule*[right=AOpen]
  {x\rel{}\resA{c}{\outAA[a]{b}}{y}\qquad b\fra a}
  {\newPA{([b]x)}\rel{}\resA{b}{\boutAA[a]{b}}{y}}
  \and
  \inferrule*[right=ARes]
  {x\rel{}\resA{a}{\ell}{y}\qquad b\fra \ell}
  {\newPA{([b]x)} \rel{}\resA{a}{\ell}{\newPA{([b]y)}}}
\end{mathpar}
\begin{center}
  where $a,b,c,d\in \Atom_{\Chan}$ and $\ell$ is a ground action.
\end{center}
\caption{NRTSS for the early $\pi$-calculus with atom-abstractions in the residuals.}%
\label{fig:early-pi-atom-abstraction}
\end{figure} %

We use the rule format of Definition~\ref{def:binding-actions-format} to show that for
every transition $p\trelAbs [a](\ell,p')$ in $\TrEAbs$, $a\#\ell$ implies $a\#p'$. We
consider the following partial strict stratification with atom abstractions
\begin{displaymath}
\begin{array}{rcl}
S^{[\Chan]}(\inPA[a]{[b]p},[d]\inAA[a]{c})&=&0\\
S^{[\Chan]}(\outPA[a]{b}{p},[c]\outAA[a]{b})&=&0\\
S^{[\Chan]}(\tauPA[p],[a]\tauAA)&=&0\\
S^{[\Chan]}(\parPA[p]{q},[a]\ell)&=& 1+\max\{
{\setlength\arraycolsep{0pt}\begin{array}[t]{l}
  S^{[\Chan]}(p,[a]\ell),S^{[\Chan]}(q,[a]\ell),\\
  S^{[\Chan]}(p,[c](\outAA[a]{b})),\\
  S^{[\Chan]}(q,[c](\inPA[a]{b}))\}\\
\end{array}}\\
S^{[\Chan]}(\sumPA[p]{q},[a]\ell)&=&
1+\max\{S^{[\Chan]}(p,[a]\ell),S^{[\Chan]}(q,[a]\ell)\}\\
S^{[\Chan]}(\repPA[p],[a]\ell)&=& 1+\max\{
{\setlength\arraycolsep{0pt}\begin{array}[t]{l}
  S^{[\Chan]}(p,[a]\ell),\\
  S^{[\Chan]}(p,[c](\outAA[a]{b})),\\
  S^{[\Chan]}(p,[c](\inAA[a]{b}))\}
\end{array}}\\
S^{[\Chan]}(\newPA[{[b]p}],[a]\ell)&=&1+S^{[\Chan]}(p,[a]\ell)\\
S^{[\Chan]}(p,t)&=&\bot\quad \text{otherwise},
\end{array}
\end{displaymath}
where $a,b\in\Atom_{\Chan}$ and
$\ell\in\{\inAA[a]{b},\outAA[a]{b},\tauAA\mid a,b\in\Atom_{\Chan}\}$.

We check that $\REAbs$ is in BA-format with respect to $S^{[\Chan]}$ as follows. Consider
a transition $p\trelAbs [b](\ell, p')$. The abstracted atom $b$ is in the support of
$\ell$ iff $\ell \in \{\boutAA[a]{b} \mid a\in \Atom_{\Chan}\}$. From the last clause in
the definition of $S^{[\Chan]}$ above, we have that
$S^{[\Chan]}(p,[b]\boutAA[a]{b}) = \bot$, for each $p$ and $a,b\in\Atom_{\Chan}$. Observe
that $S^{[\Chan]}$ meets Definition~\ref{def:partial-stratification-abs}(i) because a 
formula with a residual $[b](\boutAA[a]{b},p')$ does not take part in any proof tree that
proves a transition with a residual $[b](\ell,p'')$ such that $b\#\ell$. Therefore, the
only rules in $\REAbs$ whose sources, abstracted atoms, and actions have defined order are
\textsc{AEIn}, \textsc{AOut}, \textsc{ATau}, \textsc{AERepComm}, \textsc{AERepClose},
rules \textsc{AECommL}, \textsc{AECloseL} and their symmetric versions, and the instance
of rules \textsc{AParL}, \textsc{ASumL}, \textsc{ARep}, \textsc{ARes} where $t=[a]\ell$
and $a\#\ell$ (and the corresponding instance of the symmetric versions \textsc{AParR} and
\textsc{ASumR}). We will not check the BA-format for the symmetric versions of the
rules. Observe that $S^{[\Chan]}$ meets
Definition~\ref{def:partial-stratification-abs}(ii) because for each rule whose conclusion 
has a residual $[b](\ell,p')$ such that $b\#\ell$, the order of the transition that
unifies with its conclusion is always bigger than the order of the transitions that unify
with those premisses $u_i\rel{}[a_i](\ell_i,p'_i)$ with $a_i\#\ell_i$.

The condition of the rule format is trivial to check in all these rules. We show some of
them for illustration.

For rule \textsc{AEIn}, we need to check that
$\{d\fra (a,c,\susp{x}{\rep{b}{c}})\} \vdash \{d\fra \susp{x}{\rep{b}{c}}\}$, which
trivially holds.

For rule \textsc{AERepComm}, we need to check that
\begin{displaymath}
  \{c\fra y_1\} \cup \{c\fra y_2\} \cup \{c\fra x\} \vdash
  \{c\fra \parPA[{\parPA[y_1]{y_2}}]{\repPA[x]}\},
\end{displaymath}
which trivially holds.

For rule \textsc{AParL}, it suffices to consider the instance where $a\#\ell$, and we need
to check that $\{a\fra x_2\} \cup \{a\fra y_1\} \vdash \{a\fra \parPA[y_1]{x_2}\}$, which
trivially holds.

Atoms $a$, $b$, $c$, and $d$ in $\REAbs$ range over $\Atom_{\Chan}$, and thus $\REAbs$ is
in equivariant format. Since $\REAbs$ is in the BA-format with respect to $S^{[\Chan]}$,
by Theorems~\ref{thm:binding-actions-format} and~\ref{thm-composition-identity-abstraction}, $\TransAbs[\Trans[\TrEAbs]]=\TrEAbs$. Since
$\RE$ is in ACR-format and by Theorem~\ref{thm-composition-identity},
$\Trans[\TransAbs[\TrE,\bnE]]=(\TrE,\bnE)$. Thus, in order to show that $\TrE$ and
$\TrEAbs$ represent the same model of computation, it suffices to check that
$\TransAbs(\TrE,\bnE)=\TrEAbs$.

\begin{lem}%
  \label{lem:tre-translates-treabs}
  Let $\TrE$ be the NRTS induced by $\RE$ in Figure~\ref{fig:early-pi-NTS} of
  Section~\ref{sec:early-pi-calculus}, $\bnE$ be its associated binding-names function,
  and $\TrEAbs$ be the NRTS induced by $\REAbs$. Then, $\TransAbs(\TrE,\bnE)=\TrEAbs$.
\end{lem}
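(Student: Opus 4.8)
The plan is to prove the set equality $\TransAbs(\TrE,\bnE)=\TrEAbs$ by the two inclusions, each by induction on the height of a proof tree, exploiting a tight rule-by-rule correspondence between $\RE$ (Figure~\ref{fig:early-pi-NTS}) and $\REAbs$ (Figure~\ref{fig:early-pi-atom-abstraction}). Each rule of $\REAbs$ is a decorated version of a rule of $\RE$ in which the residual $\resAF{\ell}{p'}$ is wrapped into an atom-abstraction $\resA{a}{\ell}{p'}$, where the abstracted atom $a$ is either fresh in $\ell$ (encoding $\bnE(\ell)=\emptyset$) or equal to the bound name $a$ of a bound output $\ell=\boutAA[d]{a}$ (encoding $\bnE(\ell)=\{a\}$). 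The only departure from a one-to-one match is that the two rules \textsc{EParL} and \textsc{EParResL} of $\RE$ collapse into the single rule \textsc{AParL} of $\REAbs$: the unified side-condition $a\fra x_2$ of \textsc{AParL} specialises to a condition that holds automatically when $a$ is fresh (the \textsc{EParL} case) and to the restriction $b\fra x_2$ when $a$ is the exported bound name (the \textsc{EParResL} case).

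For $\TransAbs(\TrE,\bnE)\subseteq\TrEAbs$, I would start from a transition $p\trelAbs\resA{a}{\ell}{p'}$ obtained by Definition~\ref{def:nts-to-nrts} from some $p\trel\resAF{\ell}{p'}$ provable in $\RE$, and transform a proof tree of the latter into a proof tree of the former. Proceeding by induction, for each root rule I select the matching decorated rule of $\REAbs$, discharge its freshness assertions, and apply the induction hypothesis to translate each premiss with a suitably chosen abstracted atom. In the cases with $\bnE(\ell)=\emptyset$ I am free to pick $a$ as fresh as required---fresh for the conclusion and for every subterm of the premisses---so that the translation of each premiss is licensed and side-conditions such as $c\fra y_1$ in \textsc{AECloseL} or $a\fra x_2$ in \textsc{AParL} hold; in the cases with $\bnE(\ell)=\{a\}$ the abstracted atom is forced to be the bound name, and the explicit freshness already present in $\RE$ (\eg\ $b\fra x_2$ in \textsc{EParResL} or $b\fra a$ in \textsc{Open}) supplies precisely the assertions demanded by the corresponding rule of $\REAbs$.

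For the converse $\TrEAbs\subseteq\TransAbs(\TrE,\bnE)$, I would take $p\trelAbs\resA{a}{\ell}{p'}$ provable in $\REAbs$ and, by induction on its proof tree, recover a transition $p\trel\resAF{\ell}{p'}$ in $\TrE$ together with the correct abstraction structure. The key input is that $\REAbs$ is in BA-format (checked earlier in this section), so by Theorem~\ref{thm:binding-actions-format} every provable $\TrEAbs$ transition satisfies $a\#\ell\Rightarrow a\#\NT{p'}$. Hence the abstracted atom falls into exactly two classes: either $a\#\ell$, whence $a\#\resAF{\ell}{p'}$ and $\bnE(\ell)=\emptyset$; or $a\in\supp(\ell)$, in which case a rule inspection (only \textsc{AOpen} places the abstracted atom in the support of the action, and the propagating rules carry it unchanged) shows $\ell=\boutAA[d]{a}$ and $\bnE(\ell)=\{a\}$. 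In both classes the recovered $\TrE$ transition together with $a$ instantiates one clause of Definition~\ref{def:nts-to-nrts}, so $p\trelAbs\resA{a}{\ell}{p'}$ lies in $\TransAbs(\TrE,\bnE)$; the induction step strips the decoration from the root rule and invokes the induction hypothesis on its premisses, merging \textsc{AParL} back into \textsc{EParL} or \textsc{EParResL} according to whether $a$ is fresh in, or equal to the bound name of, $\ell$.

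The main obstacle will be fresh-atom bookkeeping rather than any conceptual difficulty, and two points need care. First, in rules where the conclusion's fresh atom must be reused in a premiss---most visibly \textsc{AECloseL}, where the atom $c$ abstracted in the $\tauAA$-residual of the conclusion is also the atom abstracted over the input premiss---I must choose a single atom that is simultaneously fresh for all the relevant subterms; the freedom to do so, and the independence of the outcome from the choice, both rest on alpha-conversion of residuals, which holds for $\TrE$ because $\RE$ is in ACR-format (Theorem~\ref{the:alpha-conversion}). Second, the restriction rules \textsc{Res}/\textsc{ARes} are delicate because $a\#\NT{\newPA[{[b]q}]}$ does \emph{not} entail $a\#\NT{q}$---one obtains only $a=b$ or $a\#\NT{q}$, since $\supp(\NT{\newPA[{[b]q}]})=\supp(\NT{q})\setminus\{b\}$; I sidestep this when translating downward by always choosing the abstracted atom distinct from the restricted name $b$ and outside $\supp(\NT{q})$, and when translating upward by appealing to the BA-format consequence $a\#\ell\Rightarrow a\#\NT{p'}$. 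Once these freshness choices are fixed, each individual rule check is routine.
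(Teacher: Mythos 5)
Your plan follows essentially the same route as the paper's proof: both inclusions are established by induction on the height of the proof tree, with a case analysis on the last rule applied and a one-to-one correspondence between the rules of $\RE$ and those of $\REAbs$, the only structural wrinkle being that \textsc{EParL} and \textsc{EParResL} both map to \textsc{AParL}, split in the reverse direction according to whether the abstracted atom is fresh in the action or belongs to its support. Your explicit appeal to Theorem~\ref{thm:binding-actions-format} in the inclusion $\TrEAbs\subseteq\TransAbs(\TrE,\bnE)$---to rule out transitions whose abstracted atom is fresh in the action but occurs in the target, which could not arise from the translation---is a point the paper's write-up leaves implicit, and your warning that $a\#\NT{\newPA[{[b]q}]}$ does not entail $a\#\NT{q}$ is well placed.

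There is, however, one step of your forward direction that would fail as written. You assert that when $\bnE(\ell)=\{b\}$ ``the explicit freshness already present in $\RE$ \ldots{} supplies precisely the assertions demanded by the corresponding rule of $\REAbs$''. This is true for \textsc{EParResL} (whose $b\fra x_2$ matches the $a\fra x_2$ of \textsc{AParL}) and for \textsc{Open}, but it is false for \textsc{Rep}: the rule \textsc{ARep} carries the side condition $a\fra x$, so translating $\repPA[p_1]\trel(\boutAA[d]{b},\parPA[p_1']{\repPA[p_1]})$ requires $b\#p_1$, and \textsc{Rep} in $\RE$ has no freshness premiss at all. Since $b$ is forced to be the binding name, you cannot choose it fresh; the needed fact is a meta-property of $\RE$, namely the ACR-format guarantee (page~\pageref{pag:fresh-in-source}) that the binding name of a provable transition is fresh in the source of that transition, and this is exactly how the paper discharges this case. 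You do invoke the ACR format, but only to justify alpha-converting residuals when reusing a single fresh atom across premisses in \textsc{AECloseL}; you also need its source-freshness consequence here. With that one repair the argument goes through.
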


The proof of Lemma~\ref{lem:tre-translates-treabs} is in
Appendix~\ref{ap:example-application-BA-format}. By Lemma~\ref{lem:tre-translates-treabs},
the NTS $(\TrE,\bnE)$ and the NRTS $\TrEAbs$ represent the same model of computation.

\subsection{Late Semantics of the \texorpdfstring{$\pi$}{pi}-Calculus}%
\label{sec:late-semantics-residual-atom-abstraction}

Consider the NRTSS $\RLAbs$ that consists of the rules in
Figure~\ref{fig:late-pi-atom-abstraction} together with rules \textsc{AOut},
\textsc{ATau}, \textsc{AParL}, \textsc{ASumL}, \textsc{ARep}, \textsc{AOpen} and
\textsc{ARes} in Figure~\ref{fig:early-pi-atom-abstraction} of
Section~\ref{lem:tre-translates-treabs}, and the symmetric versions \textsc{AParR},
\textsc{ALCommR}, \textsc{ALCloseR} and \textsc{ASumR}.

As we did in Section~\ref{sec:late-pi-calculus}, we replace the free-input actions by
bound-input actions, written $\binAA[a]{b}$.

\begin{figure}[ht]
\begin{mathpar}
  \inferrule*[right=ALIn]
  {b\fra a}
  {\inPA[a]{[b]x}\rel{}\resA{b}{\binAA[a]{b}}{x}}
  \and
  \inferrule*[right=ALCommL]
  {x_1\rel{}\resA{d}{\outAA[a]{b}}{y_1}
    \qquad x_2\rel{}\resA{c}{\binAA[a]{c}}{y_2} \qquad d\fra \susp{y_2}{\rep{b}{c}}}
  {\parPA[x_1]{x_2}\rel{}\resA{d}{\tauAA}{(\parPA[y_1]{\susp{y_2}{\rep{b}{c}}})}}
  \\
  \inferrule*[right=ALCloseL]
  {x_1\rel{}\resA{b}{\boutAA[a]{b}}{y_1}
    \qquad x_2\rel{}\resA{b}{\binAA[a]{b}}{y_2} \qquad c\fra [b](y_1,y_2)}
  {\parPA[x_1]{x_2}\rel{}\resA{c}{\tauAA}{\newPA{([b](\parPA[y_1]{y_2}))}}}
  \\
  \inferrule*[right=ALRepComm]
  {x\rel{}\resA{d}{\outAA[a]{b}}{y_1}\qquad x\rel{}\resA{c}{\binAA[a]{c}}{y_2}
    \qquad d\fra (x,\susp{y_2}{\rep{b}{c}})}
  {\repPA[x]\rel{}
    \resA{d}{\tauAA}{\parPA[{\parPA[y_1]{\susp{y_2}{\rep{b}{c}}}}]{\repPA[x]}}}
  \\
  \inferrule*[right=ALRepClose]
  {x\rel{}\resA{b}{\boutAA[a]{b}}{y_1}
    \qquad x\rel{}\resA{b}{\binAA[a]{b}}{y_2}\qquad c\fra (x, [b](y_1,y_2))}
  {\repPA[x]\rel{}\resA{c}{\tauAA}{\parPA[{\newPA{([b](\parPA[y_1]{y_2}))}}]{\repPA[x]}}}
\end{mathpar}
\begin{center}
  where $a,b,c,d\in \Atom_{\Chan}$. 
\end{center}
\caption{NRTSS for the late $\pi$-calculus with atom-abstractions in the residuals.}%
\label{fig:late-pi-atom-abstraction}
\end{figure} %

In contrast with $\RL$ in Section~\ref{sec:late-pi-calculus}, rules \textsc{ALCommL} and
\textsc{ALRepComm} in Figure~\ref{fig:late-pi-atom-abstraction} do not use moderated terms
because the communication involving bound-input actions does not require renaming of
channel names, since the channel through which communication takes place is abstracted in
the residual of the input process.

\begin{rem}
  Similar to rule \textsc{LIn} in Section~\ref{sec:late-pi-calculus}, and as commented in
  Remark~\ref{rem:prevent-capture-in}, rule \textsc{ALIn} ensures that the binding atom
  $b$ is different from the communication channel. Our semantics allows for the transition
  \begin{displaymath}
    \begin{array}{l}
      \NT{\parPA[{\outPA[a]{a}{\nullPA}}]{\inPA[a]{[b](\outPA[c]{b}{\nullPA})}}} \rel{} \\
      \NT{\resA{b}{\tauAA}{\parPA[{\nullPA}]{\ren{(\outPA[c]{b}{\nullPA})}{\rep{a}{b}}}}}
      \hspace{-0.4mm}=\hspace{-0.4mm} \NT{\resA{b}{\tauAA}{\parPA[{\nullPA}]{\outPA[c]{a}{\nullPA}}}},
    \end{array}
  \end{displaymath}
  where $b\#(a,c)$, which models both
  $(\overline{a}a.0\parallel a(b).\overline{c}b.0)\lowrel{\tau} (0\parallel\overline{c}a.0)$
  and
  $(\overline{a}a.0\parallel a(a).\overline{c}a.0)\lowrel{\tau} (0\parallel \overline{c}a.0)$
  in the original late $\pi$-calculus.\blockqed
\end{rem}

We use the rule format of Definition~\ref{def:binding-actions-format} to show that for
every transition $p\trelAbs [a](\ell,p')$ in $\TrLAbs$, $a\#\ell$ implies $a\#p'$. We
consider the following partial strict stratification with atom abstractions
\begin{displaymath}
\begin{array}{rcl}
S^{[\Chan]}(\outPA[a]{b}{p},[c]\outAA[a]{b})&=&0\\
S^{[\Chan]}(\tauPA[p],[a]\tauAA)&=&0\\
S^{[\Chan]}(\parPA[p]{q},[a]\ell)&=& 1+\max\{
{\setlength\arraycolsep{0pt}\begin{array}[t]{l}
  S^{[\Chan]}(p,[a]\ell),S^{[\Chan]}(q,[a]\ell),\\
  S^{[\Chan]}(p,[c](\outAA[a]{b})),\\
  S^{[\Chan]}(q,[c](\inPA[a]{b}))\}\\
\end{array}}\\
S^{[\Chan]}(\sumPA[p]{q},[a]\ell)&=&
1+\max\{S^{[\Chan]}(p,[a]\ell),S^{[\Chan]}(q,[a]\ell)\}\\
S^{[\Chan]}(\repPA[p],[a]\ell)&=&1+\max\{
{\setlength\arraycolsep{0pt}\begin{array}[t]{l}
  S^{[\Chan]}(p,[a]\ell),\\
  S^{[\Chan]}(p,[c](\outAA[a]{b})),\\
  S^{[\Chan]}(p,[c](\inAA[a]{b}))\}
\end{array}}\\
S^{[\Chan]}(\newPA[{[b]p}],[a]\ell)&=&1+S^{[\Chan]}(p,[a]\ell)\\
S^{[\Chan]}(p,t)&=&\bot\quad \text{otherwise},
\end{array}
\end{displaymath}
where $a,b\in\Atom_{\Chan}$ and
$\ell\in\{\outAA[a]{b},\tauAA\mid a,b\in\Atom_{\Chan}\}$.

We check that $\RLAbs$ is in BA-format with respect to $S^{[\Chan]}$ as follows. Consider
a transition $p\trelAbs [b](\ell, p')$. The abstracted atom $b$ is in the support of
$\ell$ iff $\ell \in \{\boutAA[a]{b}, \binAA[a]{b} \mid a\in \Atom_{\Chan}\}$. By the
definition of $S^{[\Chan]}$, we have that
\begin{displaymath}
  S^{[\Chan]}(p,[b]\boutAA[a]{b}) = S^{[\Chan]}(p,[b]\binAA[a]{b}) = \bot,
\end{displaymath}
for each $p$ and $a,b\in\Atom_{\Chan}$. Observe that $S^{[\Chan]}$ meets
Definition~\ref{def:partial-stratification-abs}(i) because a formula with either a 
residual $[b](\boutAA[a]{b},p')$ or $[b](\binAA[a]{b},p')$ does not take part in any proof
tree that proves a transition with a residual $[b](\ell,p'')$ such that
$b\#\ell$. Therefore, the only rules in $\RLAbs$ whose sources, abstracted atoms, and
actions have defined order are \textsc{AOut}, \textsc{ATau}, \textsc{ALRepComm},
\textsc{ALRepClose}, rules \textsc{ALCommL}, \textsc{ALCloseL} and their symmetric
versions, and the instance of rules \textsc{AParL}, \textsc{ASumL}, \textsc{ARep},
\textsc{ARes} where $t=[a]\ell$ and $a\#\ell$ (and the corresponding instance of the
symmetric versions \textsc{AParR} and \textsc{ASumR}). Observe that $S^{[\Chan]}$ meets
Definition~\ref{def:partial-stratification-abs}(ii) because for each rule whose conclusion 
has a residual $[b](\ell,p')$ such that $b\#\ell$, the order of the transition that
unifies with its conclusion is always bigger than the order of the transitions that unify
with those premisses $u_i\rel{}[a_i](\ell_i,p'_i)$ with $a_i\#\ell_i$.

We will limit ourselves to checking that rules \textsc{ALCommL} and \textsc{ALRepComm} are
in BA-format with respect to $S^{[\Chan]}$, since the checks for the other rules are
similar to those presented earlier.

For rule \textsc{ALCommL}, we need to check that
\begin{displaymath}
  \{d\fra \susp{y_2}{\rep{b}{c}}\} \cup \{d\fra y_1\} \vdash
  \{d\fra \parPA[y_1]{\susp{y_2}{\rep{b}{c}}}\},
\end{displaymath}
which trivially holds.

For rule \textsc{ALRepComm}, we need to check that
\begin{displaymath}
  \{d\fra (x,\susp{y_2}{\rep{b}{c}})\} \cup \{d\fra y_1\} \vdash
  \{d\fra \parPA[{\parPA[y_1]{\susp{y_2}{\rep{b}{c}}}}]{\repPA[x]}\},
\end{displaymath}
which trivially holds.

Atoms $a$, $b$, $c$ and $d$ in $\RLAbs$ range over $\Atom_{\Chan}$, and thus $\RLAbs$ is
in equivariant format. Since $\RLAbs$ is in the BA-format with respect to $S^{[\Chan]}$,
by Theorems~\ref{thm:binding-actions-format} and~\ref{thm-composition-identity-abstraction}, $\TransAbs[\Trans[\TrLAbs]]=\TrLAbs$. Since
$\RL$ is in ACR-format and by Theorem~\ref{thm-composition-identity},
$\Trans[\TransAbs[\TrL,\bnL]]=(\TrL,\bnL)$. Thus, in order to show that $\TrL$ and
$\TrLAbs$ represent the same model of computation, it suffices to check that
$\TransAbs(\TrL,\bnL)=\TrLAbs$.

\begin{lem}%
  \label{lem:trl-translates-trlabs}
  Let $\TrL$ be the NRTS induced by $\RL$ in Figure~\ref{fig:late-pi-NTS} of
  Section~\ref{sec:late-pi-calculus}, $\bnL$ be its associated binding-names function,
  and $\TrLAbs$ be the NRTS induced by $\RLAbs$. Then, $\TransAbs(\TrL,\bnL)=\TrLAbs$.
\end{lem}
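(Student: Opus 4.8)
The plan is to establish the set equality $\TransAbs(\TrL,\bnL)=\TrLAbs$ by proving the two inclusions $\TransAbs(\TrL,\bnL)\subseteq\TrLAbs$ and $\TrLAbs\subseteq\TransAbs(\TrL,\bnL)$ separately, following the same strategy as the early case treated in Lemma~\ref{lem:tre-translates-treabs}. The backbone of both inclusions is a rule-by-rule correspondence between $\RL$ and $\RLAbs$: rule \textsc{LIn} corresponds to \textsc{ALIn}, \textsc{Out} to \textsc{AOut}, \textsc{Tau} to \textsc{ATau}, the rules \textsc{SumL}, \textsc{Rep}, \textsc{Open}, \textsc{Res} to \textsc{ASumL}, \textsc{ARep}, \textsc{AOpen}, \textsc{ARes}, the synchronisation rules \textsc{LCommL}, \textsc{LCloseL}, \textsc{LRepComm}, \textsc{LRepClose} to \textsc{ALCommL}, \textsc{ALCloseL}, \textsc{ALRepComm}, \textsc{ALRepClose}, and crucially both \textsc{LParL} and \textsc{LParResL} to the single rule \textsc{AParL}: its non-binding instance (abstracted atom $a\#\ell$) plays the role of \textsc{LParL}, while the instance where the abstracted atom lies in $\supp(\ell)$ plays the role of \textsc{LParResL}, with the side condition $a\fra x_2$ of \textsc{AParL} matching $b\fra x_2$ of \textsc{LParResL}.

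For the inclusion $\TransAbs(\TrL,\bnL)\subseteq\TrLAbs$ I would argue by induction on the height of the proof tree in $\RL$ of a transition $p\trel\resAF{\ell}{p'}$, together with Definition~\ref{def:nts-to-nrts}. Given the translated transition $p\trelAbs\resA{a}{\ell}{p'}$, where either $\bnL(\ell)=\{a\}$ or else $\bnL(\ell)=\emptyset$ and $a\#\resAF{\ell}{p'}$, I would inspect the last rule of the $\RL$-derivation, apply the induction hypothesis to each premiss to obtain the corresponding abstraction-sorted premiss derivable in $\RLAbs$, and fire the matching $\RLAbs$-rule. For conclusions carrying a binding action ($\binAA[a]{b}$ via \textsc{ALIn}, $\boutAA[a]{b}$ via \textsc{AOpen}) the abstracted atom is forced to equal the binding name, which is exactly what $\bnL$ prescribes in the translation. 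For conclusions carrying a non-binding action ($\tauAA$ or $\outAA[a]{b}$) I would choose the abstracted atom fresh and discharge the freshness side conditions of \textsc{AParL}, \textsc{ALCommL}, \textsc{ALCloseL}, and the replication rules, using that $(\TrL,\bnL)$ enjoys alpha-conversion of residuals (so a sufficiently fresh name for the abstracted atom is always available) and that $\RLAbs$ is in BA-format, whence Theorem~\ref{thm:binding-actions-format} guarantees that each premiss with $a_i\#\ell_i$ has its abstracted atom fresh in its target.

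For the reverse inclusion $\TrLAbs\subseteq\TransAbs(\TrL,\bnL)$ I would argue by induction on the height of the proof tree in $\RLAbs$ of $p\trelAbs\resA{a}{\ell}{p'}$, showing simultaneously that the erased transition $p\trel\resAF{\ell}{p'}$ is provable in $\RL$ and that the abstracted atom $a$ is consistent with the translation, namely $a\in\bnL(\ell)$ when $a\in\supp(\ell)$ and $a\#\resAF{\ell}{p'}$ otherwise. Dispatching the last applied rule by the same correspondence, and invoking Theorem~\ref{thm:binding-actions-format} to know that $a\#\ell$ forces $a$ fresh in the target, I would conclude that $\TransAbs$ indeed reproduces precisely this transition. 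The main obstacle lies in the four synchronisation rules \textsc{ALCommL}, \textsc{ALCloseL}, \textsc{ALRepComm} and \textsc{ALRepClose}: there the bound-input premiss abstracts its own binding atom $c$ (resp.\ $b$) while the conclusion carries a non-binding $\tauAA$ action with a \emph{separately} chosen fresh abstracted atom $d$ (resp.\ $c$), and the renaming $\susp{y_2}{\rep{b}{c}}$ or the restriction $\newPA{([b](\parPA[y_1]{y_2}))}$ must be reconciled with the corresponding constructions of $\RL$. Handling these requires carefully tracking which atom is abstracted at each node and invoking alpha-conversion of residuals in $(\TrL,\bnL)$ to realign the choice of bound names across the translation; once this bookkeeping of abstracted atoms and the freshness side conditions is in place, each rule matches its counterpart directly and the two inclusions follow.
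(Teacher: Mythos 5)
Your proposal is correct and follows essentially the same route as the paper's proof: an induction on the height of the proof tree in each direction, driven by exactly the rule-by-rule correspondence you describe (including the key point that the single rule \textsc{AParL} covers both \textsc{LParL} and \textsc{LParResL}, split according to whether the abstracted atom is fresh in the action), with fresh abstracted atoms chosen for non-binding actions and the binding name used otherwise. The paper handles the synchronisation rules and the freshness bookkeeping in the same way, via ``without loss of generality'' choices of fresh atoms and the equality of residuals under atom abstraction, so no further comparison is needed.
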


The proof of Lemma~\ref{lem:trl-translates-trlabs} is in
Appendix~\ref{ap:example-application-BA-format}. By Lemma~\ref{lem:trl-translates-trlabs},
the NTS $(\TrL,\bnL)$ and the NRTS $\TrLAbs$ represent the same model of computation.

\section{Related and Future Work}%
\label{sec:conclusions}

The work we have presented in this paper stems from the Nominal SOS (NoSOS)
framework~\cite{CMRG12} and from earlier proposals for nominal logic
in~\cite{UPG04,CP07,GM09}. It is by no means the only approach studied so far in the
literature that aims at a uniform treatment of binders and names in programming and
specification languages. Other existing approaches that accommodate variables and binders
within the SOS framework are those proposed by Fokkink and Verhoef in~\cite{FV98}, by
Middelburg in~\cite{Mid01,Mid03}, by Bernstein in~\cite{Ber98}, by Ziegler, Miller and
Palamidessi in~\cite{ZMP06} and by Fiore and Staton in~\cite{FS09} (originally, by Fiore
and Turi in~\cite{FT01}). The aim of all of the above-mentioned frameworks is to establish
sufficient syntactic conditions guaranteeing the validity of a semantic result (congruence
in the case of~\cite{Ber98,Mid01,ZMP06,FS09} and conservativity in the case
of~\cite{FV98,Mid03}). In addition, Gabbay and Mathijssen present a nominal axiomatisation
of the $\lambda$-calculus in~\cite{GM10}. None of these approaches addresses equivariance
nor the property of alpha-conversion of residuals in~\cite{PBEGW15}. The proposal that is
closest to ours is the one in~\cite{FS09}. In that paper, Fiore and Staton presented a
GSOS-like rule format for name-passing process calculi, where operational specifications
corresponds to theories in nominal logic, and show that a natural notion of bisimilarity
is preserved by operations specified in that format.

Nominal techniques have been implemented also in programming languages. This is the case
of FreshML~\cite{SPG03} where Shinwell, Pitts and Gabbay extend ML with constructs for
defining and working with data involving binding operations. In particular, FreshML adds
the keyword \emph{fresh} to ML in order to generate a fresh new name in an expression
inside the code.

In~\cite{GH08}, Gabbay and Hofmann study the category $\mathbf{Ren}$, which is a
generalisation of $\mathbf{Nom}$ in which the renamings play the role of the permutations
in $\mathbf{Nom}$. In this paper we use the action of renaming to define the
interpretation of moderated terms into nominal terms, which are akin to the nominal
algebraic datatypes of Pitts~\cite{Pit13} and live in $\mathbf{Nom}$.

Our moderated terms $\susp{t}{\rho}$ are reminiscent of the terms in calculi with explicit
substitutions in the tradition of the $\lambda\sigma$ calculus of Abadi
{\etal}~\cite{ACCL91}. In fact, a replacement $\rep{a}{b}$ is an instance of an explicit
substitution that substitutes atom $b$ with atom $a$, and both the renamings and the
explicit substitutions are composable, as the equality
\begin{displaymath}
  \NT{\susp{\susp{t}{\rho_1}}{\rho_2}} = \NT{\ren{(\susp{t}{\rho_1})}{\rho_2}} =
  \NT{\susp{t}{\rho_1;\rho_2}}
\end{displaymath}
in this paper and the equality $a[s][t] = a[s\circ t]$ in~\cite{ACCL91} witness. However,
renamings and explicit substitutions are very different objects: a renaming is a semantic
object in $\Atom \to_\fs\Atom$ that maps atoms to atoms, while an explicit substitution is
a syntactic object that specifies how to substitute placeholders with arbitrary terms.

In~\cite{MT05}, Miller and Tiu use an approach to higher-order abstract syntax that is
called \emph{$\lambda$-tree syntax}, which allows one to encode both the static and
dynamic structure of abstractions. Their logic $\mathit{FO}\lambda^{\Delta\nabla}$ uses
the \emph{new quantifier} $\nabla a.\phi$, whose meaning is that atom $a$ is fresh in the
formula $\phi$ that lies within the scope of the quantifier. The logic
$\mathit{FO}\lambda^{\Delta\nabla}$ is equipped with a sequent calculus that deals with
the issues concerning name-binding operations. This sequent calculus uses renamings as a
primitive operation.

In the NTSs of Parrow {\etal}~\cite{PBEGW15}, scope opening is modelled by the property of
alpha-conversion of residuals. We have explored an alternative in which scope opening is
encoded by a \emph{residual abstraction} of sort $[\Chan](\Act\times\Proc)$. Similarly,
Parrow has recently proposed an alternative definition of his nominal transition systems
in which scope opening is represented as an alpha-equivalence condition encoded by
explicit name abstraction~\cite{Par18}. We have developed mutual, one-to-one translations
between the NTSs and the NRTSs with residual abstractions. The generality of our NRTSs
also allows for neat specifications of our versions of the early and the late semantics of
the $\pi$-calculus.

Our current proposal aims at following closely the spirit of the seminal work on nominal
techniques by Gabbay, Pitts and their co-workers, and paves the way for the development of
results on rule formats akin to those presented in the aforementioned references.  Amongst
those, we consider the development of a congruence format for the notion of bisimilarity
presented in~\cite[Def.~2]{PBEGW15} to be of particular interest. The logical
characterisation of bisimilarity given in~\cite{PBEGW15} opens the intriguing possibility
of employing the divide-and-congruence approach from~\cite{FGW06} to obtain an elegant
congruence format and a compositional proof system for the logic.

We also plan to lift the congruence formats guaranteeing various \emph{bounded
  nondeterminism} properties (including determinism) to the setting of NRTSS~\cite{ABIMR12, AFGI17, FV03}. In order to increase the applicability of those results it
would also be useful to extend the results in this paper to a setting with state
predicates. Such predicates are an important component in the theory and application of
NTSs to some advanced calculi that include them, \eg, active substitutions and fusions.

Developing rule formats for SOS is always the result of a trade-off between ease of
application and generality. Our rule format for alpha-conversion of residuals in
Definition~\ref{def:alpha-conv-format} is no exception and might be generalised in various
ways. Together with substitution $\gamma$ in conditions (ii) and (iii), a substitution
$\gamma_i$ could be used in condition (i) for each premiss, in order to discard variables
that are used in the target of the premisses but are dropped in the target of the
rule. Moreover, the restrictions on atom $a$ in conditions (i) and (ii) could be relaxed
by considering a subset of premisses in the conditions.

Finally, we are developing rule formats for properties other than alpha-conversion of
residuals. One such rule format ensures a property for NRTSs to the effect that, in each
transition, the support of a state is a subset of the support of its derivative. Another
such format would ensure the converse property. That is, in each transition, the support
of the derivative is a subset of the support of the state. In~\cite{Par18}, Parrow
considers properties analogous to the previous one in the setting of NTSs.

\subsection*{Acknowledgements}
We are grateful to Joachim Parrow for inspiring email discussions about specifications of
NTSs with residuals of abstraction sort~\cite{Par18}. We also extend our gratitude to
Andrew Pitts for his comments on an early draft of the paper, pointers to the literature
and clarifications on nominal sets. Finally, we thank the anonymous reviewers of the
CONCUR 2017 and LMCS submissions for their truly invaluable comments which led to
substantial improvements to the paper.

\bibliographystyle{alpha}
\bibliography{rfnpc-lmcs}

\newcommand{\etalchar}[1]{$^{#1}$}
\begin{thebibliography}{AFGP{\etalchar{+}}17}

\bibitem[ABI{\etalchar{+}}12]{ABIMR12}
L.~Aceto, A.~Birgisson, A.~Ing{\'{o}}lfsd{\'{o}}ttir, M.~R. Mousavi, and M.~A.
  Reniers.
\newblock Rule formats for determinism and idempotence.
\newblock {\em Science of Computer Programming}, 77(7--8):889--907, 2012.

\bibitem[ACCL91]{ACCL91}
M.~Abadi, L.~Cardelli, P.-L. Curien, and J.-J. L{\'e}vy.
\newblock Explicit substitutions.
\newblock {\em Journal of Functional Programming}, 1(4):375--416, 1991.

\bibitem[ACG{\etalchar{+}}]{ACGIMR}
L.~Aceto, M.~Cimini, M.~J. Gabbay, A.~Ingólfsdóttir, M.~R. Mousavi, and M.~A.
  Reniers.
\newblock Nominal structural operational semantics.
\newblock In preparation.

\bibitem[AFGI17]{AFGI17}
L.~Aceto, I.~F{\'{a}}bregas, A.~Garc{\'{\i}}a{-}P{\'{e}}rez, and
  A.~Ing{\'{o}}lfsd{\'{o}}ttir.
\newblock A unified rule format for bounded nondeterminism in {SOS} with terms
  as labels.
\newblock {\em Journal of Logical and Algebraic Methods in Programming},
  92:64--86, 2017.

\bibitem[AFGP{\etalchar{+}}17]{AFGIO17}
L.~Aceto, I.~F{\'a}bregas, A.~Garc{\'i}a-P{\'e}rez, A.~Ing{\'o}lfsd{\'o}ttir,
  and Y.~Ortega-Mall{\'e}n.
\newblock Rule formats for nominal process calculi.
\newblock In {\em 28th International Conference on Concurrency Theory},
  volume~85 of {\em LIPIcs}, pages 10:1--10:16. Schloss Dagstuhl, 2017.

\bibitem[AFV01]{AFV01}
L.~Aceto, W.~Fokkink, and C.~Verhoef.
\newblock Structural operational semantics.
\newblock In {\em Handbook of Process Algebra}, chapter~3, pages 197--292.
  Elsevier, 2001.

\bibitem[Ber98]{Ber98}
K.~L. Bernstein.
\newblock A congruence theorem for structured operational semantics of
  higher-order languages.
\newblock In {\em 13th Annual IEEE Symposium on Logic in Computer Science},
  pages 153--164. IEEE Computer Society, 1998.

\bibitem[BP09]{BP09}
J.~Bengtson and J.~Parrow.
\newblock Formalising the pi-calculus using nominal logic.
\newblock {\em Logical Methods in Computer Science}, 5(2), 2009.

\bibitem[BS00]{BS00}
S.~Burris and H.~P. Sankappanavar.
\newblock {\em A Course in Universal Algebra: The Millennium Edition}.
\newblock Springer Verlag, 2000.

\bibitem[CMRG12]{CMRG12}
M.~Cimini, M.~R. Mousavi, M.~A. Reniers, and M.~J. Gabbay.
\newblock Nominal {SOS}.
\newblock {\em Electronic Notes in Theoretical Computer Science}, 286:103--116,
  2012.

\bibitem[CP07]{CP07}
R.~Clouston and A.~Pitts.
\newblock Nominal equational logic.
\newblock {\em Electronic Notes in Theoretical Computer Science}, 172:223--257,
  2007.

\bibitem[FG07]{FG07}
M.~Fernández and M.~J. Gabbay.
\newblock Nominal rewriting.
\newblock {\em Information and Computation}, 205(6):917--965, 2007.

\bibitem[FS09]{FS09}
M.~P. Fiore and S.~Staton.
\newblock A congruence rule format for name-passing process calculi.
\newblock {\em Information and Computation}, 207(2):209--236, 2009.

\bibitem[FT01]{FT01}
M.~P. Fiore and D.~Turi.
\newblock Semantics of name and value passing.
\newblock In {\em 16th Annual IEEE Symposium on Logic in Computer Science},
  pages 93--104. IEEE Computer Society, 2001.

\bibitem[FV98]{FV98}
W.~Fokkink and C.~Verhoef.
\newblock A conservative look at operational semantics with variable binding.
\newblock {\em Information and Computation}, 146(1):24--54, 1998.

\bibitem[FV03]{FV03}
W.~Fokkink and T.~D. Vu.
\newblock Structural operational semantics and bounded nondeterminism.
\newblock {\em Acta Informatica}, 39(6-7):501--516, 2003.

\bibitem[FvGdW06]{FGW06}
W.~Fokkink, R.~J. van Glabbeek, and P.~de~Wind.
\newblock Compositionality of {H}ennessy-{M}ilner logic by structural
  operational semantics.
\newblock {\em Theoretical Computer Science}, 354(3):421--440, 2006.

\bibitem[GH08]{GH08}
M.~J. Gabbay and M.~Hofmann.
\newblock Nominal renaming sets.
\newblock In {\em Logic for Programming, Artificial Intelligence, and
  Reasoning, 15th International Conference}, volume 5330 of {\em Lecture Notes
  in Computer Science}, pages 158--173. Springer, 2008.

\bibitem[GM09]{GM09}
M.~J. Gabbay and A.~Mathijssen.
\newblock Nominal (universal) algebra: Equational logic with names and binding.
\newblock {\em Journal of Logic and Computation}, 19(6):1455--1508, 2009.

\bibitem[GM10]{GM10}
M.~J. Gabbay and A.~Mathijssen.
\newblock A nominal axiomatization of the lambda calculus.
\newblock {\em Journal of Logic and Computation}, 20(2):501--531, 2010.

\bibitem[GP02]{GP02}
M.~J. Gabbay and A.~Pitts.
\newblock A new approach to abstract syntax with variable binding.
\newblock {\em Formal Aspects of Computing}, 13(3--5):341--363, 2002.

\bibitem[GTWW77]{GTWW77}
J.~A. Goguen, J.~W. Thatcher, E.~G. Wagner, and J.~B. Wright.
\newblock Initial algebra semantics and continuous algebras.
\newblock {\em Journal of the ACM}, 24(1):68--95, 1977.

\bibitem[GV92]{GV92}
J.~F. Groote and F.~W. Vaandrager.
\newblock Structured operational semantics and bisimulation as a congruence.
\newblock {\em Information and Computation}, 100(2):202--260, 1992.

\bibitem[Lam68]{Lam68}
J.~Lambek.
\newblock A fixed point theorem for complete categories.
\newblock {\em Mathematische Zeitung}, 103:151--161, 1968.

\bibitem[Mid01]{Mid01}
C.~A. Middelburg.
\newblock Variable binding operators in transition system specifications.
\newblock {\em Journal of Logic and Algebraic Programming}, 47(1):15--45, 2001.

\bibitem[Mid03]{Mid03}
C.~A. Middelburg.
\newblock An alternative formulation of operational conservativity with binding
  terms.
\newblock {\em Journal of Logic and Algebraic Programming}, 55(1-2):1--19,
  2003.

\bibitem[MPW92]{MPW92}
R.~Milner, J.~Parrow, and D.~Walker.
\newblock A calculus of mobile processes ({P}arts {I} and {II}).
\newblock {\em Information and Computation}, 100(1):1--77, 1992.

\bibitem[MRG07]{MRG07}
M.~R. Mousavi, M.~A. Reniers, and J.~F. Groote.
\newblock {SOS} formats and meta-theory: 20 years after.
\newblock {\em Theoretical Computer Science}, 373(3):238--272, 2007.

\bibitem[MT05]{MT05}
D.~Miller and A.~Tiu.
\newblock A proof theory for generic judgments.
\newblock {\em ACM Transactions on Computational Logic}, 6(4):749--783, 2005.

\bibitem[Par18]{Par18}
J.~Parrow.
\newblock Nominal transition systems defined by name abstraction, 2018.
\newblock Personal communication.

\bibitem[PBE{\etalchar{+}}15]{PBEGW15}
J.~Parrow, J.~Borgstr{\"o}m, L.-H. Eriksson, R.~Gutkovas, and T.~Weber.
\newblock Modal logics for nominal transition systems.
\newblock In {\em 26th International Conference on Concurrency Theory},
  volume~42 of {\em LIPIcs}, pages 198--211. Schloss Dagstuhl, 2015.

\bibitem[Pit11]{Pit11}
A.~Pitts.
\newblock Nominal sets, 2011.
\newblock Invited course on Nominal sets and their applications at the Midlands
  Graduate School
  (\url{http://www.cs.nott.ac.uk/~pszvc/mgs/MGS2011_nominal_sets.pdf}).

\bibitem[Pit13]{Pit13}
A.~Pitts.
\newblock {\em Nominal Sets: Names and Symmetry in Computer Science}.
\newblock Cambridge University Press, 2013.

\bibitem[Pit16]{Pit16}
A.~Pitts.
\newblock Nominal techniques.
\newblock {\em ACM SIGLOG News}, 3(1):57--72, 2016.

\bibitem[Plo04]{Plo04}
G.~D. Plotkin.
\newblock A structural approach to operational semantics.
\newblock {\em Journal of Logic and Algebraic Programming}, 60-61:17--139,
  2004.

\bibitem[PWBE17]{PWBE17}
J.~Parrow, T.~Weber, J.~Borgstr{\"{o}}m, and L.{-}H. Eriksson.
\newblock Weak nominal modal logic.
\newblock In {\em Formal Techniques for Distributed Objects, Components, and
  Systems - 37th {IFIP} {WG} 6.1 International Conference, {FORTE} 2017},
  volume 10321 of {\em Lecture Notes in Computer Science}, pages 179--193.
  Springer, 2017.

\bibitem[San96]{San96}
D.~Sangiorgi.
\newblock {pi-Calculus}, internal mobility, and agent-passing calculi.
\newblock {\em Theoretical Computer Science}, 167(1{\&}2):235--274, 1996.

\bibitem[SPG03]{SPG03}
M.~R. Shinwell, A.~Pitts, and M.~J. Gabbay.
\newblock {FreshML}: programming with binders made simple.
\newblock {\em {SIGPLAN} Notices}, 38(9):263--274, 2003.

\bibitem[SW01]{SW01}
D.~Sangiorgi and D.~Walker.
\newblock {\em The $\pi$-calculus --- {A} Theory of Mobile Processes}.
\newblock Cambridge University Press, 2001.

\bibitem[UPG04]{UPG04}
C.~Urban, A.~Pitts, and M.~J. Gabbay.
\newblock Nominal unification.
\newblock {\em Theoretical Computer Science}, 323(1--3):473--497, 2004.

\bibitem[ZMP06]{ZMP06}
A.~Ziegler, D.~Miller, and C.~Palamidessi.
\newblock A congruence format for name-passing calculi.
\newblock {\em Electronic Notes in Theoretical Computer Science},
  156(1):169--189, 2006.

\end{thebibliography}

\newpage
\appendix

\section{Preliminaries}%
\label{ap:preliminaries}

\begin{proof}[Proof of Proposition~\ref{prop:support-rho}]
  First, we show that the set $A=\{a,(\rho\ a)\mid\rho\ a\not=a\}$ supports $\rho$. This
  requires us to show that for all permutation $\pi$ that leaves each element in $A$
  invariant, and every atom $b$, $\ren{b}{\pi^{-1};\rho;\pi} = \ren{b}{\rho}$. We
  distinguish the following cases. If $b\in A$ the result holds since
  $\ren{b}{\pi^{-1};\rho;\pi} = \ren{b}{\rho;\pi} = \ren{(\rho\ b)}{\pi}= \rho\
  c=\ren{b}{\rho}$.
  If $b\notin A$ we have,
  $\ren{b}{\pi^{-1};\rho;\pi} = \ren{(\peract{\pi^{-1}}{b})}{\rho;\pi}$. Now, by
  definition of the set $A$ and the permutation $\pi$, it must be the case that
  $(\peract{\pi^{-1}}{b})\notin A$. Otherwise, we would have that there exists $c\in A$
  such that $\peract{\pi}{c}=b\neq c$, which results in a contradiction. Hence,
  $\ren{(\peract{\pi^{-1}}{b})}{\rho;\pi}=\ren{(\peract{\pi^{-1}}{b})}{\pi}=b =
  \ren{b}{\rho}$ since $b\not\in A$, and we are done.

  Finally, we prove that $A$ is the smallest set supporting $\rho$. Assume towards a
  contradiction that there exists $A'\subset A$ that supports $\rho$. Without loss of
  generality, assume that there exists an atom $a\in A$ which is not in $A'$. Let $\pi$ be
  the permutation that leaves each element in $A'$ invariant and such that
  $\peract{\pi}{a}= b$ and $\peract{\pi}{b}= a$ for some $b\notin A$. We have,
  \begin{displaymath}
    \ren{a}{\pi^{-1};\rho;\pi} = \ren{(\peract{\pi^{-1}}{a})}{\rho;\pi} =
    \ren{b}{\rho;\pi} =
    \ren{(\rho\ b)}{\pi} = \peract{\pi}{b}=a,
  \end{displaymath}
  which results in a contradiction because $\ren{a}{\rho}=\rho\ a\neq a$ since $a\in A$.
  Therefore, the set $A$ is the smallest set that supports $\rho$, which ends the proof.
\end{proof}

\section{Nominal Terms}%
\label{ap:terms}

\begin{proof}[Proof of Lemma~\ref{lem:renaming-equivariant}]
  By induction on the size of $t$. If $t=a$, then
  \begin{displaymath}
    \peract{\pi}{(\ren{a}{\rho})}
    = \ren{a}{\rho;\pi} = \ren{a}{\pi;\pi^{-1};\rho;\pi}
    = \ren{(\pi\ a)}{\pi^{-1};\rho;\pi}
    = \ren{(\pi\ a)}{\peract{\pi}{\rho}}.
  \end{displaymath}

  \noindent
  If $t=\susp{t'}{\rho'}$, then
  \begin{displaymath}
    \begin{array}{l}
      \peract{\pi}{(\ren{(\susp{t'}{\rho'})}{\rho})}
      = \peract{\pi}{(\susp{t'}{\rho';\rho})}
      = \susp{(\peract{\pi}{t'})}{\peract{\pi}{(\rho';\rho)}}\\
      = \susp{(\peract{\pi}{t'})}{\pi^{-1};\rho';\rho;\pi}
      = \susp{(\peract{\pi}{t'})}{\pi^{-1};\rho';\pi;\pi^{-1};\rho;\pi}\\
      = \susp{(\peract{\pi}{t'})}{\peract{\pi}{\rho'};\peract{\pi}{\rho}}
      = \ren{(\susp{(\peract{\pi}{t'})}{\peract{\pi}{\rho'}})}{\peract{\pi}{\rho}}
      = \ren{(\peract{\pi}{(\susp{t'}{\rho'})})}{\peract{\pi}{\rho}}.
    \end{array}
  \end{displaymath}

  \noindent
  If $t=[a]t'$, then
  \begin{displaymath}
    \begin{array}{l}
      \peract{\pi}{(\ren{([a]t')}{\rho})}
      = \peract{\pi}{([\rho\ a](\ren{t'}{\rho}))}
      = [\peract{\pi}{(\rho\ a)}](\peract{\pi}{(\ren{t'}{\rho})})
      = [\peract{\pi}{(\ren{a}{\rho})}](\peract{\pi}{(\ren{t'}{\rho})}).
    \end{array}
  \end{displaymath}
  By the induction hypothesis,
  \begin{displaymath}
    \begin{array}{l}
      [\ren{(\pi\ a)}{\peract{\pi}{\rho}}](\ren{(\peract{\pi}{t'})}{\peract{\pi}{\rho}})
      = [(\peract{\pi}{\rho})(\pi\ a)](\ren{(\peract{\pi}{t'})}{\peract{\pi}{\rho}})\\
      = \ren{([\pi\ a](\peract{\pi}{t'}))}{\peract{\pi}{\rho}}
      = \ren{(\peract{\pi}{([a]t')})}{\peract{\pi}{\rho}}.
    \end{array}
  \end{displaymath}
  The remaining cases are straightforward by the induction hypothesis.
\end{proof}

We lift the action of renaming $\ren{A}{\rho}$ to sets of atoms $A$ in the obvious
way. Let $t$ be a raw term. Lemma~11.1 in~\cite{GH08} states that the support of
$\ren{t}{\rho}$ is a subset of $\ren{(\supp(t))}{\rho}$.

\begin{lem}\label{lem:support-moderated-term}
  Let $A$ be a set of atoms. Then, $\ren{A}{\rho} \subseteq A\cup \supp(\rho)$.
\end{lem}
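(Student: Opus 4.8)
The plan is to unfold the definition of the lifted renaming action on sets and reduce the inclusion to a pointwise statement, then dispatch it by a two-case analysis using the characterisation of $\supp(\rho)$ from Proposition~\ref{prop:support-rho}. Concretely, since $\ren{A}{\rho} = \{\ren{a}{\rho} \mid a\in A\} = \{\rho\ a \mid a\in A\}$ by the obvious lifting, it suffices to show that $\rho\ a \in A\cup\supp(\rho)$ for every $a\in A$.

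First I would fix an arbitrary $a\in A$ and split on whether $\rho$ moves $a$. If $\rho\ a = a$, then $\rho\ a = a\in A\subseteq A\cup\supp(\rho)$ and there is nothing more to do. If instead $\rho\ a\neq a$, then by Proposition~\ref{prop:support-rho} we have $\supp(\rho)=\{a,(\rho\ a)\mid\rho\ a\neq a\}$, so in particular $\rho\ a\in\supp(\rho)\subseteq A\cup\supp(\rho)$. Either way $\rho\ a\in A\cup\supp(\rho)$, which gives the desired inclusion.

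There is essentially no obstacle here: the statement is a direct consequence of the finitary description of $\supp(\rho)$ established earlier, and the only thing to be careful about is invoking the correct branch of that description (the atom $\rho\ a$ belongs to $\supp(\rho)$ precisely in the nontrivial case $\rho\ a\neq a$, which is exactly the case we are in). No induction on the structure of a term is needed, since we are working purely at the level of atoms; the companion result Lemma~11.1 of~\cite{GH08} (bounding $\supp(\ren{t}{\rho})$ by $\ren{(\supp(t))}{\rho}$) is what will later combine with this lemma to bound $\supp(\ren{t}{\rho})$ by $\supp(t)\cup\supp(\rho)$, but that combination lies outside the present statement.
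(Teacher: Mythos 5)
Your proof is correct and follows essentially the same route as the paper's: both reduce the inclusion to a pointwise claim about an element $\rho\ a$ with $a\in A$ and conclude via Proposition~\ref{prop:support-rho}. The only (cosmetic) difference is that you split on whether $\rho\ a=a$, whereas the paper splits on whether the image lies in $A$; the two case analyses are interchangeable here.
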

\begin{proof}
  Consider atom $a\in\ren{A}{\rho}$. If $a\in A$, then the result trivially
  follows. Otherwise, $a\in\ren{A}{\rho}\setminus A$. We claim that $a\in \supp(\rho)$.
  Indeed, $a = \rho\ b$ for some $b\in A$. Since $a$ is not in $A$, it follows that
  $a \neq b$, and therefore $a \in \supp(\rho)$ by Proposition~\ref{prop:support-rho}.
\end{proof}

\begin{proof}[Proof of Lemma~\ref{lem:fa-subset-supp}]
  By induction on the size of $t$. The only non-trivial case is $t={\susp{t'}{\rho}}$. By
  definition, $\fa(\susp{t'}{\rho}) = \fa(\ren{t'}{\rho})$. By the induction hypothesis,
  $\fa(\ren{t'}{\rho})\subseteq \supp(\ren{t'}{\rho})$. By
  Lemma~\ref{lem:support-moderated-term},
  $\ren{(\supp(t'))}{\rho} \subseteq \supp(t')\cup\supp(\rho)= \supp(\susp{t'}{\rho})$ and
  the claim follows by Lemma~11.1 in~\cite{GH08}.
\end{proof}

\section{Rule Formats for NRTSSs}%
\label{ap:rule_formats}

The proofs of some of the lemmas to come use induction on the size of a freshness
environment. We let the \emph{size of a freshness environment $\nabla$} be the sum of the
sizes of the raw terms in its assertions.

\begin{proof}[Proof of Lemma~\ref{lem:unique-nf}]
  The proof goes along the same lines as the proof of Lemma~11 in~\cite{FG07}. Since the
  simplification rules do not overlap, there are no critical pairs and confluence holds
  trivially. Each simplification rule decreases the size of some assertion in the
  environment, except for the rule
  $\{a \fra \susp{([b]t)}{\rho}\}\cup\nabla \Longrightarrow \{a \fra [\rho\
  b](\susp{t}{\rho})\}\cup\nabla$. However, the environments that pattern-match with that
  rule simplify as follows
  \begin{displaymath}
    \begin{array}{l}
      \{a \fra \susp{([b]t)}{\rho}\}\cup\nabla \Longrightarrow
      \{a \fra [\rho\ b](\susp{t}{\rho})\}\cup\nabla \Longrightarrow
      \left\{
      \begin{array}{ll}
        \{a\fra \susp{t}{\rho}\} \cup \nabla &\quad \textup{if $a \not= \rho\ b$}\\
        \nabla &\quad \textup{otherwise}
      \end{array}
      \right.
    \end{array}
  \end{displaymath}
  and thus the assertion $a \fra \susp{([b]t)}{\rho}$ either decreases its size or
  vanishes after the two consecutive simplification steps above. Since the reduction
  relation is confluent, the environments of the shape above can always be reduced in this
  fashion. Therefore the reduction relation is terminating.
\end{proof}

\begin{proof}[Proof of Lemma~\ref{lem:entails}]
  We first prove that $\varphi(\nabla)$ holds iff $\varphi(\nf{\nabla})$ holds. We proceed
  by induction on the size of $\nabla$. If $\nabla = \nf{\nabla}$ then the result follows
  trivially. Without loss of generality we let $\nabla = \{a \fra t\}\cup \nabla'$ and
  consider the cases where some simplification rule is applicable. If the assertion
  $a \fra t$ vanishes after the simplification step, then $t$ is either an atom $b\#a$, or
  $t=[a]t'$, and in both cases $a\#\NT{\varphi(t)}$ for every ground substitution
  $\varphi$. If the assertion $a \fra t$ simplifies to a set of assertions
  $\{a \fra t_i \mid i\in I\}$, then we show that $a\#\NT{\varphi(t)}$ iff
  $\bigwedge_{i\in I}(a\#\NT{\varphi(t_i)})$, for every ground substitution $\varphi$.

  For illustration, we provide the proof for the cases
  \begin{displaymath}
    \nabla = \{a \fra \susp{(\susp{t'}{\rho_1})}{\rho}\}\cup \nabla'\ \textup{and}\
    \nabla = \{a \fra \susp{([b]t')}{\rho}\}\cup \nabla'.
  \end{displaymath}
  The rest of the cases are straightforward by the induction hypothesis.

  If $\nabla = \{a \fra \susp{(\susp{t'}{\rho_1})}{\rho}\}\cup \nabla'$, then
    \begin{displaymath}
      \begin{array}{l}
        \NT{\varphi(\susp{(\susp{t'}{\rho_1})}{\rho})} =
        \NT{\susp{(\susp{\varphi(t')}{\rho_1})}{\rho}} =
        \NT{\ren{(\susp{\varphi(t')}{\rho_1})}{\rho}} \\
        = \NT{\susp{\varphi(t')}{\rho_1;\rho}}
        = \NT{\varphi(\susp{t'}{\rho_1;\rho})},
      \end{array}
  \end{displaymath}
  and therefore $a\#\NT{\varphi(\susp{(\susp{t'}{\rho_1})}{\rho})}$ iff
  $a\#\NT{\varphi(\susp{t'}{\rho_1;\rho})}$, and the lemma follows by the induction
  hypothesis since the assertion $a \fra\susp{(\susp{t'}{\rho_1})}{\rho}$ simplifies to
  $a \fra \susp{t'}{\rho_1;\rho}$.

  If $\nabla = \{a \fra \susp{([b]t')}{\rho}\}\cup \nabla'$, then we consider the following
  cases. If $a = \rho\ b$ then the environment $\nabla$ simplifies to $\nabla'$ in two
  steps and the lemma follows by the induction hypothesis. If $a\neq \rho\ b$, then
  \begin{displaymath}
    \begin{array}{l}
      \NT{\varphi(\susp{([b]t')}{\rho})} =
      \NT{\susp{([b]\varphi(t'))}{\rho}} =
      \NT{\ren{([b]\varphi(t'))}{\rho}}\\
      =\NT{[\rho\ b](\ren{\varphi(t')}{\rho})}
      = \langle\rho\ b\rangle(\NT{\ren{\varphi(t')}{\rho}})
      = \langle\rho\ b\rangle(\NT{\susp{\varphi(t')}{\rho}})\\
      = \NT{[\rho\ b](\susp{\varphi(t')}{\rho})} =
      \NT{\varphi([\rho\ b](\susp{t'}{\rho}))},
    \end{array}
  \end{displaymath}
  and therefore $a\# \NT{\varphi(\susp{([b]t')}{\rho})}$ iff
  $a\#\NT{\varphi([\rho\ b](\susp{t'}{\rho}))}$, and the lemma follows by the induction
  hypothesis since the assertion $a \fra \susp{([b]t')}{\rho}$ simplifies to
  $a \fra [\rho\ b](\susp{t'}{\rho})$.

  Now we show that $\varphi(\nf{\nabla})$ holds iff $\varphi(\nf{\widetilde{\nabla}})$
  holds. If $\nf{\widetilde{\nabla}}$ contains assertion $a\fra \susp{x}{\iota}$, then
  $\nf{\nabla}$ contains either $a\fra x$ or $a\fra \susp{x}{\iota}$, or both. The result
  follows trivially since
  $\NT{\varphi(\susp{x}{\iota})} = \NT{\susp{(\varphi(x))}{\iota}} =
  \ren{\NT{\varphi(x)}}{\iota} = \NT{\varphi(x)}$.
\end{proof}

\begin{rem}\label{rem:inconsistent}
  Notice that if $\nabla$ in the lemma above is inconsistent, then the lemma follows
  trivially since no substitution $\varphi$ exists such that $\varphi(\nabla)$ holds. This
  is so because $\nf{\nabla}$ contains some freshness assertion of the form $a\fra a$, and
  neither the conjunction of the freshness relations denoted by $\varphi(\nabla)$ holds,
  nor the conjunction of the ones denoted by $\varphi(\nf{\nabla})$
  does.\blockqed
\end{rem}

\begin{proof}[Proof of Lemma~\ref{lem:entails2}]
  Assume $\varphi(\nabla)$ holds. By Lemma~\ref{lem:entails}, $\nf{\widetilde{\nabla}}$
  holds. Without loss of generality we assume
  $\nf{\widetilde{\nabla'}} = \{a_i\fra b_i\mid i\in I\}\cup\{a_j\fra
  \susp{x_j}{\rho_j}\mid j\in J\}$. We prove that the conjunction
  \begin{displaymath}
    \bigwedge_{i\in I}(a_i\# b_i) \land
    \bigwedge_{j\in J}(a_j\# \NT{\susp{(\varphi(x_j))}{\rho_j}})
  \end{displaymath}
  holds. Since $\nabla\vdash \nabla'$, each of the assertions $a_i\fra b_i$ is contained
  in $\nf{\widetilde{\nabla}}$ and $\bigwedge_{i\in I}(a_i\# b_i)$ holds by
  assumptions. For each assertion $a_j\fra \susp{x_j}{\rho_j}$, we know that there exist a
  permutation $\pi$ and an assertion $b_j\fra \susp{x_j}{\rho'_j}$ in
  $\nf{\widetilde{\nabla}}$ such that $\pi\ a_j=b_j$ and $\rho_j;\pi=\rho'_j$. We have
  \begin{displaymath}
    \begin{array}{l}
      \peract{\pi^{-1}}{\NT{\susp{(\varphi(x_j))}{\rho'_j}}} =
      \peract{\pi^{-1}}{(\ren{\NT{\varphi(x_j)}}{\rho'_j})} =
      \ren{\NT{\varphi(x_j)}}{\rho'_j;\pi^{-1}}\\
      = \ren{\NT{\varphi(x_j)}}{\rho_j;\pi;\pi^{-1}}
      = \ren{\NT{\varphi(x_j)}}{\rho_j}
      = \NT{\susp{\varphi(x_j)}{\rho_j}},
    \end{array}
  \end{displaymath}
  and thus by equivariance of the freshness relation
  \[
    b_j\#\NT{\susp{\varphi(x_j)}{\rho'_j}} \quad \text{iff} \quad a_j\# \NT{\susp{(\varphi(x_j))}{\rho_j}}.
  \]
  Therefore, the conjunction
  \begin{displaymath}
    \bigwedge_{j\in J}(a_j\# \NT{\susp{(\varphi(x_j))}{\rho_j}})
  \end{displaymath}
  holds and we are done.
\end{proof}

\begin{rem}\label{rem:inconsistent2}
  Notice that if $\nabla$ in the lemma above is inconsistent, then the lemma follows
  trivially since no substitution $\varphi$ exists such that the antecedent
  $\varphi(\nabla)$ of the implication holds.\blockqed
\end{rem}

\begin{lem}\label{lem:gamma}
  Let $\mathcal{R}$ be an NRTSS and \textsc{Ru} be a rule
  \begin{mathpar}
    \inferrule*[right=Ru]
    {\{u_i\rel{}(\ell_i,u'_i)\mid i\in I\}
      \qquad\{a_j\fra v_j\mid j\in J\}}
    {t\rel{}(\ell,t')}
  \end{mathpar}
  in $\mathcal{R}$. Let $D$ be the set of variables that occur in the source $t$ of
  \textsc{Ru} but do not occur in the premisses $u_i\rel{}(\ell_i,u'_i)$ with $i\in I$,
  the environment $\nabla$ or the target $t'$ of the rule. For every
  $\gamma: D\to\gTerm{\SigmaNTS}$ and every substitution $\varphi$, a proof tree for
  transition $\NT{\varphi(\gamma(t))} \rel{} \NT{\varphi(\ell,t')}$ that uses \Ru\ as last
  rule exists iff a proof tree for transition
  $\NT{\varphi(t)} \rel{} \NT{\varphi(\ell,t')}$ that uses \Ru\ as last rule exists.
\end{lem}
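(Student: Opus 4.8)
The plan is to reduce the existence of a proof tree on each side to a single condition on a ground witness substitution, exploiting that the variables in $D$ occur only in the source $t$ of \textsc{Ru}. First I would unfold Definition~\ref{def:proof-tree}: a proof tree for $\NT{s}\rel{}\NT{r}$ whose root rule is \textsc{Ru} exists iff there is a ground substitution $\psi$ with $\NT{\psi(t)}=\NT{s}$, $\NT{\psi(\ell,t')}=\NT{r}$, each premiss transition $\NT{\psi(u_i)}\rel{}\NT{\psi(\ell_i,u'_i)}$ provable, and $\psi(\nabla)$ holding. The crucial observation is that, since no variable of $D$ occurs in the premisses, in $\nabla$, or in the target $t'$ (and the actions are ground), provability of the premiss transitions, satisfaction of $\nabla$, and the value $\NT{\psi(\ell,t')}$ depend only on $\psi|_{\Var\setminus D}$; hence changing a witness on the variables of $D$ leaves the very same subtrees valid.

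The technical heart is a substitution lemma: for ground $\alpha,\beta$ with $\NT{\alpha(t)}=\NT{\beta(t)}$ and any substitution $\delta$ sending each variable of its domain to a ground term and acting as the identity elsewhere, one has $\NT{\alpha(\delta(t))}=\NT{\beta(\delta(t))}$. I would prove it by induction on the size of the term, \emph{strengthening the statement to carry a pending renaming} $\rho$, namely $\NT{\ren{\alpha(s)}{\rho}}=\NT{\ren{\beta(s)}{\rho}}\Rightarrow\NT{\ren{\alpha(\delta(s))}{\rho}}=\NT{\ren{\beta(\delta(s))}{\rho}}$ for every raw subterm $s$ and renaming $\rho$. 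The variable base case uses that $\delta$-images are ground (so $\alpha,\beta$ agree on them) and that $\delta$ is the identity elsewhere; the function-symbol and product cases use injectivity of the coproduct injections $\NT{f_{ij}}=\inj_j$ and of projections; the abstraction case uses Remark~\ref{rem:equality-bodies}; and the moderated case $\susp{s_1}{\rho'}$ is where the pending renaming pays off, since $\NT{\ren{\alpha(\susp{s_1}{\rho'})}{\rho}}=\NT{\ren{\alpha(s_1)}{\rho';\rho}}$, so the induction hypothesis applies to $s_1$ with the composed renaming $\rho';\rho$ (termination holds because renaming preserves size, by Remark~\ref{rem:ren-size}).

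With this lemma both implications follow by reusing one witness and adjusting it only on $D$. For the direction from $\NT{\varphi(t)}$ to $\NT{\varphi(\gamma(t))}$, given a witness $\psi$ with $\NT{\psi(t)}=\NT{\varphi(t)}$, I would take $\psi^{\ast}=\psi\circ\gamma$, which coincides with $\psi$ off $D$; the premisses, $\nabla$ and target are then unchanged, while the substitution lemma (with $\alpha=\psi$, $\beta=\varphi$, $\delta=\gamma$) gives $\NT{\psi^{\ast}(t)}=\NT{\psi(\gamma(t))}=\NT{\varphi(\gamma(t))}$, so $\psi^{\ast}$ witnesses the target transition. For the converse, given a witness $\psi$ with $\NT{\psi(t)}=\NT{\varphi(\gamma(t))}=\NT{(\varphi\circ\gamma)(t)}$, I would apply the substitution lemma with $\alpha=\psi$, $\beta=\varphi\circ\gamma$ and $\delta=\varphi|_{D}$ (ground, since $\varphi$ is ground on the rule's variables): here $\psi\circ\delta$ again agrees with $\psi$ off $D$, whereas $(\varphi\circ\gamma)\circ\delta$ reduces to $\varphi$ on the variables of $t$, so the lemma yields $\NT{(\psi\circ\delta)(t)}=\NT{\varphi(t)}$ and $\psi\circ\delta$ witnesses the source transition.

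The step I expect to be the main obstacle is exactly the moderated-term case, which is why a naive structural induction fails: renaming does not respect the nominal interpretation (it can capture), so from $\NT{\ren{\alpha(s_1)}{\rho}}=\NT{\ren{\beta(s_1)}{\rho}}$ one cannot peel off $\susp{\cdot}{\rho}$ to recover $\NT{\alpha(s_1)}=\NT{\beta(s_1)}$. Propagating the renaming $\rho$ inward in the induction hypothesis is precisely what circumvents this pitfall. Throughout I assume $\varphi$ is ground on the variables of the rule, so that every interpretation $\NT{\cdot}$ appearing above is defined.
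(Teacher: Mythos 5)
Your proof is correct, and its skeleton coincides with the paper's: both arguments rest on the observation that a variable of $D$ occurs nowhere but in the source $t$, so that $\varphi$ and $\varphi\circ\gamma$ agree on the premisses, on $\nabla$ and on the target, and the sub-trees and freshness checks can be reused verbatim, the only change being the root label. Where you genuinely diverge is in how the root condition is transferred. The paper's proof simply takes the given proof tree to be the one whose root witness is the canonical substitution ($\varphi\circ\gamma$ in one direction, $\varphi$ in the other) and identifies the two trees up to their roots; it never confronts the possibility that the root of the given tree is justified by some other ground $\psi$ with $\NT{\psi(t)}=\NT{\varphi(\gamma(t))}$ but $\psi\neq\varphi\circ\gamma$. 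You do confront it, via your auxiliary substitution lemma ($\NT{\alpha(t)}=\NT{\beta(t)}$ implies $\NT{\alpha(\delta(t))}=\NT{\beta(\delta(t))}$ when $\delta$ has ground images), proved by induction on term size with a pending renaming so that the moderated-term case goes through---correctly using Remark~\ref{rem:equality-bodies} for abstractions and injectivity of the injections $\inj_j$ for constructors, and correctly diagnosing that a naive structural induction would founder on $\susp{\cdot}{\rho}$ because renaming does not commute with $\NT{\cdot}$. This extra lemma buys a proof of the statement as literally quantified (existence of \emph{some} proof tree with \textsc{Ru} at the root, for an arbitrary witness), at the cost of an additional induction; the paper's terser argument suffices for how the lemma is actually invoked in the proof of Theorem~\ref{the:alpha-conversion}, where the canonical witness is the one at hand. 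Your standing assumptions---that the actions in the rule are ground and that $\varphi$ is ground on the rule's variables, so that every occurrence of $\NT{\cdot}$ is defined---match the intended setting over $\SigmaNTS$.
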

\begin{proof}
  Since the domain $D$ of $\gamma$ contains variables neither in the premisses nor in the
  target of rule \textsc{Ru}, $\NT{\varphi(\gamma(t))} \rel{} \NT{\varphi(\ell,t')}$ is
  equal to $\NT{\varphi(\gamma(t))} \rel{} \NT{\varphi(\gamma(\ell,t'))}$. Consider a
  proof tree of transition $\NT{\varphi(\gamma(t))} \rel{} \NT{\varphi(\gamma(\ell,t'))}$
  that uses \Ru\ as last rule, if it exists.
  Since none of the variables occurring in the premisses and in the environment are in the
  domain of $\gamma$, $\varphi(w)=\varphi(\gamma(w))$ for each $w$ in
  $\{u_i,u'_i \mid i\in I\} \cup \{v_j \mid j \in J\}$. Hence, the sub-trees that prove
  the premisses $\NT{\varphi(\gamma(u_i))} \rel{} \NT{\varphi(\gamma(\ell_i,u'_i))}$, with
  $i\in I$, also prove transitions $\NT{\varphi(u_i)} \rel{} \NT{\varphi(\ell_i,u'_i)}$;
  and also all $a_j\# \varphi(\gamma(v_j))$ and $a_j\# \varphi(v_j)$ with $j\in J$ hold.
  Therefore, in the case they exist, the proof trees for transitions
  $\NT{\varphi(\gamma(t))} \rel{} \NT{\varphi(\ell,t')}$ and
  $\NT{\varphi(t)} \rel{} \NT{\varphi(\ell,t')}$ that use \Ru\ as last rule share the same
  sub-trees for the premisses, the freshness assertions hold, and the only difference
  between the proof trees is the root node.
\end{proof}

\begin{proof}[Proof of Theorem~\ref{the:alpha-conversion}]
  Since $\mathcal{R}$ is in equivariant format, $\mathcal{R}$ induces an NRTS with an
  equivariant transition relation (Theorem~\ref{thm:rule-format-equivariance}). We prove
  that this transition relation also enjoys alpha-conversion of residuals.  That is, if
  $\NT{p}\rel{}\NT{(\ell,p')}$ is provable in $\mathcal{R}$ and $b\in\bn(\ell)$, then
  $\NT{p}\rel{}\peract{\tr{a}{b}}{\NT{(\ell,p')}}$ for every atom $a$ that is fresh in
  $\NT{(\ell,p')}$.

  Assume that the last rule used in the proof of $\NT{p}\rel{}\NT{(\ell,p')}$ is
  \begin{mathpar}
    \inferrule*[right=Ru]
    {\{u_i\rel{}(\ell_i,u'_i)\mid i\in I\}
      \qquad\{a_j\fra v_j\mid j\in J\}}
    {t\rel{}(\ell,t')}
  \end{mathpar}
  and therefore that for some ground substitution $\varphi$
  \begin{itemize}
  \item $\NT{p} = \NT{\varphi(t)}$ and $\NT{(\ell,p')} = \NT{\varphi(\ell,t')}$,
  \item the premisses $\NT{\varphi(u_i)} \rel{} \NT{\varphi(\ell_i,u'_i)}$ with $i\in I$
    are provable in $\R$, and
  \item the freshness relations $a_j\#\NT{\varphi(v_j)}$ with $j\in J$ hold.
  \end{itemize}

  \noindent
  Observe that $S(\varphi(t),\ell)$ is defined because $\bn(\ell)$ is non-empty. Thus, as
  rule \Ru\ is in ACR format, there is a ground substitution $\gamma$ whose domain is
  contained in the set of variables $D$ occurring in $t$ but nowhere else in the rule,
  meeting conditions~\ref{it:one}-\ref{it:three} in Definition~\ref{def:alpha-conv-format}
  for each atom $a$ in the set
  $\Atom\setminus \{c\in\supp(t)\mid \nf{\{c\fra t\}}=\emptyset\}$ and each atom $b$ in
  $\bn(\ell)$.

  Let us fix any $b\in\bn(\ell)$ and any atom $a$ that is fresh in $\NT{(\ell,p')}$, we
  first show that transition
  $\NT{\varphi(t)}\rel{}\peract{\tr{a}{b}}{\NT{\varphi(\ell,t')}}$ is provable under the
  assumption that $a\#\NT{\varphi(\gamma(t))}$ and $b\#\NT{\varphi(\gamma(t))}$.  We will
  then show that those assumptions hold.

  By Lemma~\ref{lem:gamma} we know that a proof tree of
  $\NT{\varphi(\gamma(t))}\rel{}\NT{\varphi(\ell, t')}$ that uses \Ru\ as last rule
  exists, and since $\mathcal{R}$ is in equivariant format, a proof tree of
  $\peract{\tr{a}{b}}{\NT{\varphi(\gamma(t))}}
  \rel{}\peract{\tr{a}{b}}{\NT{\varphi(\ell,t')}}$
  that uses $\peract{\tr{a}{b}}{\Ru}$ as last rule exists.  By our assumptions, we have
  that $\peract{\tr{a}{b}}{\NT{\varphi(\gamma(t))}}=\NT{\varphi(\gamma(t))}$ and therefore
  $\NT{\varphi(\gamma(t))}\rel{}\peract{\tr{a}{b}}{\NT{\varphi(\ell,t')}}$.  Again by
  Lemma~\ref{lem:gamma}, a proof tree of
  $\NT{\varphi(t)}\rel{}\peract{\tr{a}{b}}{\NT{\varphi(\ell,t')}}$ that uses
  $\peract{\tr{a}{b}}{\Ru}$ as last rule exists, and the theorem holds.

  In the remainder we prove the assumptions $a\#\NT{\varphi(\gamma(t))}$ and
  $b\#\NT{\varphi(\gamma(t))}$.

  We prove first $a\# \NT{\varphi(\gamma(t))}$. We distinguish two cases:
  \begin{itemize}
  \item If $a\in\{c\in\supp(t)\mid \nf{\{c\fra t\}}=\emptyset\}$ then $\vdash \{a\fra t\}$
    and by Lemmas~\ref{lem:entails} and~\ref{lem:entails2},
    $a\#\NT{(\overline{\varphi\circ\gamma})(t)}=\NT{\varphi(\gamma(t))}$ holds.
  \item Otherwise, since $\mathcal{R}$ is in ACR format with respect to $S$,
    \begin{itemize}
    \item $\{a\fra t'\} \cup \nabla \vdash \{a\fra u'_i \mid i\in I\}$ and
    \item
      $\{a\fra t'\} \cup \nabla \cup \{a\fra u_i \mid i\in I\} \vdash \{a\fra
      \gamma(t)\}$.
      \end{itemize}
      We use Lemmas~\ref{lem:entails} and~\ref{lem:entails2} to obtain the implications
      \begin{enumerate}[label={(\arabic*)},ref={(\arabic*)}]
      \item\label{eq:first}
        $(\; a\#\NT{\varphi(t')}\ \land \bigwedge_{j\in J}(a_j\# \NT{\varphi(v_j)})\;)
        \Longrightarrow \bigwedge_{i\in I}(a\#\NT{\varphi(u'_i)})$ and
      \item\label{eq:second}
        $(\; a\#\NT{\varphi(t')}\ \land \bigwedge_{j\in J}(a_j\# \NT{\varphi(v_j)})\ \land
        \bigwedge_{i\in I}(a\#\NT{\varphi(u_i)}) \;) \Longrightarrow
        a\#\NT{\varphi(\gamma(t))}$.
     \end{enumerate}
     Since the set $D$ does not contain any variable occurring in $t'$ it follows that
     $\varphi(\gamma(t'))=\varphi(t')$.
     Now we prove the statement
     $a \# \NT{\varphi(t')} \Longrightarrow a \# \NT{\varphi(\gamma(t))}$ by induction on
     $S(\varphi(\gamma(t)),\ell)$ (this suffices to show the claim since
     $a \# \NT{\varphi(t')}$ holds by assumption).
  The base case is when $S(\varphi(\gamma(t)),\ell)$ is minimal.  By
  Definition~\ref{def:partial-strict stratification} the rule \Ru\ has no premisses and
  the set $I$ is empty, which makes $\bigwedge_{i\in I}(a\#\NT{\varphi(u_i)})$ trivially
  true and what we were proving holds by~\ref{eq:second}.
  Now assume that $S(\varphi(\gamma(t)),\ell)$ is not minimal.  Since all
  $a_j\# \NT{\varphi(v_j)}$ with $j\in J$ hold, all $a\#\NT{\varphi(u'_i)}$ with $i\in I$
  hold by~\ref{eq:first}.  Condition (ii) in Definition~\ref{def:partial-strict
    stratification} ensures that $S(\varphi(u_i),\ell_i)\not=\bot$ and
  $S(\varphi(\gamma(u_i)),\ell_i) < S(\varphi(\gamma(t)),\ell)$.  Thus, we can apply the
  induction hypothesis to obtain the implications
  $a\#\NT{\varphi(u'_i)}\implies a\#\NT{\varphi(\gamma(u_i))}$, with $i\in I$.  For each
  $i\in I$, since the variables that occur in $u_i$ are not in $\dom(\gamma)$, we have
  that $a\#\NT{\varphi(u'_i)}\implies a\#\NT{\varphi(u_i)}$.  And now by~\ref{eq:second}
  we know that $a\#\NT{\varphi(\gamma(t))}$ which is what was to be shown.
  \end{itemize}

  \noindent
  To finish the proof we prove the statement $b \# \NT{\varphi(\gamma(t))}$ by induction
  on $S(\varphi(\gamma(t)),\ell)$.
  Since $\mathcal{R}$ is in ACR format with respect to $S$ we have that
  $\nabla\cup\{b\fra u_i\mid i\in I \land b\in \bn(\ell_i)\} \vdash \{b\fra \gamma(t)\}$.
  We use Lemmas~\ref{lem:entails} and~\ref{lem:entails2} to obtain the implication
  \begin{enumerate}[label={(\arabic*)},ref={(\arabic*)}]
    \setcounter{enumi}{2}
  \item\label{eq:third}
    $(\;\bigwedge_{j\in J}(a_j\#\NT{\varphi(v_j)})\ \land \bigwedge_{i\in I \land
      b\in\bn(\ell_i)} (b\#\NT{\varphi(u_i)})\;) \Longrightarrow
    b\#\NT{\varphi(\gamma(t))}$.
  \end{enumerate}
  The base case for the induction is when $S(\varphi(\gamma(t)),\ell)$ is minimal.  By
  Definition~\ref{def:partial-strict stratification} the rule \Ru\ has no premisses and
  the set $I$ is empty, so that $\{i\mid i\in I\land b\in\bn(\ell_i)\}$ is empty as well,
  in which case $\bigwedge_{i\in I \land b\in\bn(\ell_i)} (b\#\NT{\varphi(u_i)})$ is
  trivially true and $b \# \NT{\varphi(\gamma(t))}$ holds.
  Now assume that $S(\varphi(\gamma(t)),\ell)$ is not minimal.  Condition (ii) in
  Definition~\ref{def:partial-strict stratification} ensures that
  $S(\varphi(u_i),\ell_i)\not=\bot$ and
  $S(\varphi(\gamma(u_i)),\ell_i) < S(\varphi(\gamma(t)),\ell)$ for every $i\in I$. Thus,
  we can apply the induction hypothesis to obtain $b\#\NT{\varphi(\gamma(u_i))}$ for each
  $i\in I$ such that $b\in\bn(\ell_i)$.  For each $i\in I$, since the set $D$ does not
  contain any variable occurring in $u_i$ we know that
  $\varphi(\gamma(u_i))=\varphi(u_i)$.  In particular this holds for $i\in I$ such that
  $b\in\bn(\ell_i)$.  By~\ref{eq:third} we know that $b\# \NT{\varphi(\gamma(t))}$ and we
  are done.
\end{proof}

\section{Example of Application of the BA-Format to the \texorpdfstring{$\pi$}{pi}-Calculus}%
\label{ap:example-application-BA-format}

\begin{proof}[Proof of Lemma~\ref{lem:tre-translates-treabs}]
  For every transition $p\trel (\ell, p')$, we have to show that $p\trel (\ell, p')$ has a
  proof tree in $\RE$ iff $p\trelAbs [a](\ell, p')$ has a proof tree in $\REAbs$, where
  either $\bnE(\ell)=\{a\}$, or $\bnE(\ell)=\emptyset$ and $a\#(\ell,p')$. We proceed by
  induction on the height of the proof tree of $p\trel (\ell, p')$. We prove the ``if''
  direction first.

  The base case is when $p\trel (\ell, p')$ is provable by any of the axioms \textsc{EIn},
  \textsc{Out} or \textsc{Tau}. In all of these cases, $\bnE(\ell)=\emptyset$. The
  transition $p\trelAbs [a](\ell, p')$ is provable by axioms \textsc{AEIn}, \textsc{AOut}
  or \textsc{ATau} respectively, where we let $a\#(\ell,p')$.

  For the inductive step, we distinguish the following sub-cases depending on the last rule
  used in the proof of $p\trel (\ell, p')$:
  \begin{itemize}
  \item The last rule used is \textsc{EParL} or \textsc{EParR}. Without loss of
    generality, we assume that the last rule used is \textsc{EParL} and thus
    $p=\parPA[p_1]{p_2}$ and $p'=\parPA[p_1']{p_2}$ where $p_1\trel(\ell,p_1')$ is
    provable in $\RE$ where $\ell\not\in\{\boutAA[a]{b}\mid a,b\in\Atom_{\Chan}\}$ and
    therefore $\bnE(\ell)=\emptyset$. By the induction hypothesis,
    $p_1\trelAbs[a](\ell,p_1')$ is provable in $\REAbs$, where $a\#(\ell,p_1')$. Without
    loss of generality, we let $a\#p_2$. The transition
    $p\trelAbs[a](\ell,\parPA[p_1']{p_2})$ is provable by rule \textsc{AParL}.
  \item The last rule used is \textsc{EParResL} or \textsc{EParResR}. Without loss of
    generality, we assume that the last rule used is \textsc{EParResL} and thus
    $p=\parPA[p_1]{p_2}$, where $\parPA[p_1]{p_2}\trel(\boutAA[a]{b},\parPA[p_1']{p_2})$
    is provable in $\RE$ and $b\#p_2$. By the induction hypothesis,
    $p_1\trelAbs[b](\boutAA[a]{b},p_1')$ is provable in $\REAbs$, and
    \begin{displaymath}
      \parPA[p_1]{p_2}\trelAbs[b](\boutAA[a]{b},\parPA[p_1']{p_2})
    \end{displaymath}
    is provable by rule \textsc{AParL} since $b\#p_2$.
  \item The last rule used is \textsc{ECommL} or \textsc{ECommR}, and thus
    $\ell=\tauAA$. Without loss of generality, we assume that the last rule used is
    \textsc{ECommL} and thus $p=\parPA[p_1]{p_2}$ and $p'=\parPA[p_1']{p_2'}$, where
    $p_1\trel (\outAA[a]{b},p_1')$ and $p_2\trel (\inAA[a]{b},p_2')$ are provable in
    $\RE$. By the induction hypothesis, $p_1\trelAbs [c](\outAA[a]{b},p_1')$ and
    $p_2\trelAbs [d](\inAA[a]{b},p_2')$ are provable in $\REAbs$, and since
    $\bnE(\outAA[a]{b})=\bnE(\inAA[a]{b})=\emptyset$, therefore $c\#p_1'$ and
    $d\#p_2'$. Without loss of generality we assume that $c=d$. Therefore,
    $p\trelAbs [c](\ell, p')$ is provable in $\REAbs$ by rule \textsc{AECommL} where
    $\bnE(\tauAA)=\emptyset$ and $c\#(\tauAA,\parPA[p_1']{p_2'})$.
  \item The last rule used is any of \textsc{ECloseL}, \textsc{ECloseR}, \textsc{ERepComm}
    or \textsc{ERepClose}. Consider that the last rule used is \textsc{ECloseL}, and thus
    $\ell=\tauAA$, $p=\parPA[p_1]{p_2}$ and $p'=\newPA[{[b](\parPA[p_1']{p_2'})}]$, where
    $p_1\trel (\boutAA[a]{b},p_1')$ and $p_2\trel (\inAA[a]{b},p_2')$ are provable in
    $\RE$ and $b\#p_2$. By the induction hypothesis, $p_1\trelAbs [b](\boutAA[a]{b},p_1')$
    and $p_2\trelAbs [c](\inAA[a]{b},p_2')$ are provable in $\REAbs$ where
    $c\#p_2'$. Without loss of generality, we let $c\#p_1'$. Therefore
    $p\trelAbs [c](\tauAA, p')$ is provable in $\REAbs$ by rule \textsc{AECloseL} where
    $\bnE(\tauAA) = \emptyset$ and $c\#\newPA[{[b](\parPA[p_1']{p_2'})}]$. The cases where
    the last rule used is any of \textsc{ECloseR}, \textsc{ERepComm} or \textsc{ERepClose}
    are analogous.
  \item The last rule used is \textsc{Rep}, thus
    $p = \repPA[p_1] \trel (\ell,\parPA[p_1']{\repPA[p_1]})$ where $p_1\trel (\ell,p_1')$
    is provable in $\RE$. By the induction hypothesis, $p_1\trel [a](\ell,p_1')$ is
    provable in $\REAbs$. If $\bnE(\ell)=\emptyset$, then $a\#(\ell,p_1')$. Without loss
    of generality, we let $a\#p_1$, and therefore
    $\repPA[p_1] \trel [a](\ell,\parPA[p_1']{\repPA[p_1]})$ is provable in $\REAbs$ by
    rule \textsc{ARep} and $\bnE(\ell)=\emptyset$ and
    $a\#(\ell,\parPA[p_1']{\repPA[p_1]})$.

    If $\bnE(\ell)=\{a\}$, then $a\#p_1$ since $\RE$ is in ACR-format (see
    page~\pageref{pag:fresh-in-source}), which guarantees that the binding name $a$ in
    transition $p_1\trel (\ell,p_1')$ is fresh in its source $p_1$. Therefore, transition
    $p = \repPA[p_1] \trelAbs [a](\ell,\parPA[p_1']{\repPA[p_1]})$ is provable in $\REAbs$
    by rule \textsc{ARep}.
  \item The last rule used is \textsc{Open}. We have that $\ell=\boutAA[a]{b}$,
    $\bnE(\boutAA[a]{b})=\{b\}$ and $b\#a$. Transition $p\trelAbs[b](\boutAA[a]{b},p')$ is
    provable by applying the induction hypothesis and by rule \textsc{AOpen}.
  \item The last rule used is any of \textsc{SumL}, \textsc{SumR} or \textsc{Res}. These
    cases are analogous to the case \textsc{EParL}.
  \end{itemize}

  \noindent
  The ``only if'' direction can be checked similarly, except for the observation that for
  a transition $p\trelAbs [a](\ell,p')$ provable in $\REAbs$ and where the last rule used
  is \textsc{AParL}, we distinguish the cases where $a\#\ell$ and where $a\in\supp(\ell)$,
  and use the induction hypothesis together with rule \textsc{EParL} or rule
  \textsc{EParResL}, respectively, to prove that $p\trel(\ell,p')$ is provable in $\RE$.
\end{proof}

\begin{proof}[Proof of Lemma~\ref{lem:trl-translates-trlabs}]
  For every transition $p\trel (\ell, p')$, we have to show that $p\trel (\ell, p')$ has a
  proof tree in $\RL$ iff $p\trelAbs [a](\ell, p')$ has a proof tree in $\RLAbs$, where
  either $\bnL(\ell)=\{a\}$, or $\bnL(\ell)=\emptyset$ and $a\#(\ell,p')$. We proceed by
  induction on the height of the proof tree of $p\trel (\ell, p')$. We prove the ``if''
  direction first.

  The base case is when $p\trel (\ell, p')$ is provable by any of the axioms \textsc{LIn},
  \textsc{Out} or \textsc{Tau}. In the case where $p\trel (\ell, p')$ is provable by axiom
  \textsc{LIn}, we know that $\ell=\binAA[a]{b}$, $\bnL(\binAA[a]{b}) = \{b\}$ and
  $b\#a$. Therefore $p\trelAbs [b](\binAA[a]{b}, p')$ is provable by axiom
  \textsc{ALIn}. In the other two cases are already proven in Proof of
  Lemma~\ref{lem:tre-translates-treabs}.

  For the inductive step, we distinguish the following sub-cases depending on the last rule
  used in the proof of $p\trel (\ell, p')$:
  \begin{itemize}
  \item The last rule used is \textsc{LParL} or \textsc{LParR}. Without loss of
    generality, we assume that the last rule used is \textsc{LParL} and thus
    $p=\parPA[p_1]{p_2}$ and $p'=\parPA[p_1']{p_2}$ where $p_1\trel(\ell,p_1')$ is
    provable in $\RL$ where
    $\ell\not\in\{\boutAA[a]{b},\binAA[a]{b}\mid a,b\in\Atom_{\Chan}\}$ and therefore
    $\bnL(\ell)=\emptyset$. By the induction hypothesis, $p_1\trelAbs[a](\ell,p_1')$ is
    provable in $\RLAbs$, where $a\#(\ell,p_1')$. Without loss of generality, we let
    $a\#p_2$. The transition $p\trelAbs[a](\ell,\parPA[p_1']{p_2})$ is provable by rule
    \textsc{AParL}.
  \item The last rule used is \textsc{LParResL} or \textsc{LParResR}. Without loss of
    generality, we assume that the last rule used is \textsc{LParResL} and thus
    $p=\parPA[p_1]{p_2}$, where $\parPA[p_1]{p_2}\trel(\ell,\parPA[p_1']{p_2})$ with
    $\bnL(\ell)=\{b\}$ is provable in $\RL$ and $b\#p_2$. By the induction hypothesis,
    $p_1\trelAbs[b](\ell,p_1')$ is provable in $\RLAbs$, and
    \begin{displaymath}
      \parPA[p_1]{p_2}\trelAbs[b](\ell,\parPA[p_1']{p_2})
    \end{displaymath}
    is provable by rule \textsc{AParL} since $b\#p_2$.
  \item The last rule used is \textsc{LCommL} or \textsc{LCommR}, and thus
    $\ell=\tauAA$. Without loss of generality, we assume that the last rule used is
    \textsc{LCommL} and thus $p=\parPA[p_1]{p_2}$ and $p'=\parPA[p_1']{p_2'}$, where
    $p_1\trel (\outAA[a]{b},p_1')$ and $p_2\trel (\binAA[a]{c},p_2')$ are provable in
    $\RL$. By the induction hypothesis, $p_1\trelAbs [d](\outAA[a]{b},p_1')$ and
    $p_2\trelAbs [c](\binAA[a]{c},p_2')$ are provable in $\RLAbs$, and since
    $\bnL(\outAA[a]{b})=\emptyset$, therefore $d\#p_1'$. Without loss of generality we
    assume that $d\#\ren{p_2'}{\rep{b}{c}}$. Therefore, $p\trelAbs [d](\tauAA, p')$ is
    provable in $\RLAbs$ by rule \textsc{ALCommL} where $\bnL(\tauAA)=\emptyset$ and
    $d\#(\tauAA,\parPA[p_1']{\ren{p_2'}{\rep{b}{c}}})$.
  \item The last rule used is any of \textsc{LCloseL}, \textsc{LCloseR}, \textsc{LRepComm}
    or \textsc{LRepClose}. Consider that the last rule used is \textsc{LCloseL}, and thus
    $\ell=\tauAA$, $p=\parPA[p_1]{p_2}$ and $p'=\newPA[{[b](\parPA[p_1']{p_2'})}]$, where
    $p_1\trel (\boutAA[a]{b},p_1')$ and $p_2\trel (\binAA[a]{b},p_2')$ are provable in
    $\RL$. By the induction hypothesis, $p_1\trelAbs [b](\boutAA[a]{b},p_1')$ and
    $p_2\trelAbs [b](\binAA[a]{b},p_2')$ are provable in $\RLAbs$. Without loss of
    generality, we let $c\#[b](p_1',p_2')$. Therefore $p\trelAbs [c](\tauAA, p')$ is
    provable in $\RLAbs$ by rule \textsc{AECloseL} where $\bnL(\tauAA) = \emptyset$ and
    $c\#\newPA[{[b](\parPA[p_1']{p_2'})}]$. The cases where the last rule used is any of
    \textsc{LCloseR}, \textsc{LRepComm} or \textsc{LRepClose} are analogous.
  \end{itemize}

  \noindent
  The cases where the last rule used is any of \textsc{Rep}, \textsc{Open}, \textsc{SumL},
  \textsc{SumR} and \textsc{Res} are already proven in Proof of
  Lemma~\ref{lem:tre-translates-treabs}.

  The ``only if'' direction can be checked similarly, except for the observation that for
  a transition $p\trelAbs [a](\ell,p')$ provable in $\RLAbs$ and where the last rule used
  is \textsc{AParL}, we distinguish the cases where $a\#\ell$ and where $a\in\supp(\ell)$,
  and use the induction hypothesis together with rule \textsc{LParL} or rule
  \textsc{LParResL}, respectively, to prove that $p\trel(\ell,p')$ is provable in $\RL$.
\end{proof}

\end{document}